\tikzstyle{H}=[draw, color=black, fill={rgb:black,1;white,3}, shape=rectangle, tikzit fill=yellow]
\tikzstyle{thick}=[draw, line width=0.5mm]
\tikzstyle{Z}=[draw, fill=white, circle, scale=1, inner sep=0pt, minimum size=10pt]
\tikzstyle{X}=[draw, fill={rgb:black,1;white,3}, text=black, circle, scale=1, inner sep=0pt, minimum size=10pt, tikzit fill={rgb,255: red,191; green,191; blue,191}]
\tikzstyle{Xthick}=[draw, fill=white, circle, scale=1, inner sep=0pt, minimum size=15pt, line width=0.5mm, tikzit fill={rgb,255: red,191; green,191; blue,191}]
\tikzstyle{Zthick}=[draw, fill={rgb:black,1;white,3}, text=black, circle, scale=1, inner sep=0pt, minimum size=15pt, line width=0.5mm]
\tikzstyle{phase}=[draw, fill=white, diamond, scale=1, inner sep=0pt, minimum size=10pt]
\tikzstyle{discard}=[draw, xscale=2.2, ground, rotate=90]
\tikzstyle{mmixed}=[draw, quantum, yscale=-2.2, ground, rotate=180]
\tikzstyle{quantum}=[line width=.6mm]
\tikzstyle{map}=[draw, color=black, fill=white, rectangle]
\tikzstyle{mapperp}=[draw, color=white, fill=black, text=white, rectangle]
\tikzstyle{s}=[draw, color=black, fill=gray, rectangle]
\tikzstyle{mapthick}=[draw, color=black, fill=white, rectangle, inner sep=0pt, minimum size=15pt, line width=0.5mm]
\tikzstyle{otimes}=[draw, fill=white, rotate=45, scale=0.9, minimum height=.1cm, circle, append after command={[shorten >=\pgflinewidth, shorten <=\pgflinewidth]
\tikzstyle{dot}=[thick, fill=black, circle, scale=1, inner sep=.05cm]
\tikzstyle{oplus}=[draw, scale=0.9, minimum height=.1cm, circle, append after command={[shorten >=\pgflinewidth, shorten <=\pgflinewidth]
\tikzstyle{andin}=[draw, and gate US, rotate=90, scale=1, fill=white, label={center:{\it \&}}]
\tikzstyle{mulin}=[draw, and gate US, rotate=90, scale=1, fill=white, label={center:{}}]
\tikzstyle{andout}=[draw, and gate US, rotate=-90, scale=1, fill=white, label={center:{\it \&}}]
\tikzstyle{scalar}=[draw, rounded corners=1ex, rectangle round south west=false, rectangle round south east=false, tikzit fill={rgb,255: red,129; green,253; blue,255}, tikzit shape=rectangle]
\tikzstyle{scalarop}=[draw, rounded corners=1ex, rectangle round north west=false, rectangle round north east=false, tikzit fill={rgb,255: red,116; green,172; blue,255}, tikzit shape=rectangle]
\tikzstyle{fanin}=[draw, shape border rotate=30, regular polygon, regular polygon sides=3, fill=white, inner sep=.1cm]
\tikzstyle{fanout}=[draw, shape border rotate=-30, regular polygon, regular polygon sides=3, fill=white, inner sep=.1cm]
\tikzstyle{onein}=[draw, shape border rotate=30, regular polygon, regular polygon sides=3, fill=black, inner sep=.04cm, scale=1.2]
\tikzstyle{oneout}=[draw, shape border rotate=-30, regular polygon, regular polygon sides=3, fill=black, inner sep=.04cm, scale=1.2]
\tikzstyle{zeroin}=[draw, shape border rotate=30, regular polygon, regular polygon sides=3, fill=white, inner sep=.04cm, scale=1.2]
\tikzstyle{zeroout}=[draw, shape border rotate=-30, regular polygon, regular polygon sides=3, fill=white, inner sep=.04cm, scale=1.2]
\renewcommand{\tilde}{\widetilde}
\renewcommand\hat[1]{%
\savestack{\tmpbox}{\stretchto{%
  \scaleto{%
    \scalerel*[\widthof{\ensuremath{#1}}]{\kern-.6pt\bigwedge\kern-.6pt}%
    {\rule[-\textheight/2]{1ex}{\textheight}}%WIDTH-LIMITED BIG WEDGE
  }{\textheight}% 
}{0.5ex}}%
\stackon[1pt]{#1}{\tmpbox}%
}
\newcommand{\bcell}{\cellcolor{black!10}}
\newcommand{\inlineitem}[1][]{%
\ifnum\enit@type=\tw@
    {\descriptionlabel{#1}}
  \hspace{\labelsep}%
\else
  \ifnum\enit@type=\z@
       \refstepcounter{\@listctr}\fi
    \quad\@itemlabel\hspace{\labelsep}%
\fi}
\newcommand{\alr}{{\sf alr}}
\newcommand{\lr}{{\sf lr}}
\newcommand{\rel}{{\sf r}}
\newcommand{\aih}{{\sf aih}}
\newcommand{\ih}{{\sf ih}}
\newcommand{\xrightarrowtail}[1]{\!\!{\xymatrix@C=1em{\ar@{>->}[r]^{#1}&}}\!\!\!}
\newcommand{\xleftarrowtail}[1]{\!\!\!{\xymatrix@C=1em{&\ar@{>->}[l]_{#1}}}\!\!}
\newcommand{\xrightarrowiso}[1]{\!\!{\xymatrix@C=1em{\ar@{->}[r]^{#1}_\cong&}}\!\!\!}
\newcommand{\xleftarrowiso}[1]{\!\!\!{\xymatrix@C=1em{&\ar@{->}[l]_{#1}^\cong}}\!\!}
  \newtheorem{theorem}{Theorem}[section]
  \newtheorem{corollary}[theorem]{Corollary}
  \newtheorem{lemma}[theorem]{Lemma}
  \newtheorem{proposition}[theorem]{Proposition}
  \newtheorem{definition}[theorem]{Definition}
  \newtheorem{example}[theorem]{Example}
  \newtheorem{remark}[theorem]{Remark}
\newcommand{\Mat}{\mathsf{Mat}}
\newcommand{\cnot}{\mathsf{cnot}}
\newcommand{\tof}{\mathsf{tof}}
\newcommand{\Not}{\mathsf{not}}
\newcommand{\zeroin}{|0\rangle}
\newcommand{\zeroout}{\langle 0|}
\newcommand{\CNOT}{\mathsf{CNOT}}
\newcommand{\Sets}{\mathsf{Set}}
\newcommand{\FSets}{\mathsf{FinOrd}}
\newcommand{\FinOrd}{\mathsf{FinOrd}}
\newcommand{\TOF}{\mathsf{TOF}}
\newcommand{\Span}{\mathsf{Span}}
\newcommand{\dec}{\mathsf{dec}}
\newcommand{\Rel}{\mathsf{Rel}}
\newcommand{\op}{\mathsf{op}}
\newcommand{\co}{\mathsf{co}}
\newcommand{\Hilb}{\mathsf{Hilb}}
\newcommand{\FdHilb}{\mathsf{FHilb}}
\newcommand{\FHilb}{\mathsf{FHilb}}
\newcommand{\CPM}{\mathsf{CPM}}
\newcommand{\CP}{\mathsf{CP}}
\newcommand{\FPinj}{\mathsf{FPinj}}
\newcommand{\FPar}{\mathsf{FPar}}
\newcommand{\FSpan}{\mathsf{FSpan}}
\newcommand{\Pinj}{\mathsf{Pinj}}
\newcommand{\Par}{\mathsf{Par}}
\newcommand{\Aff}{\mathsf{Aff}}
\newcommand{\ParIso}{\mathsf{ParIso}}
\newcommand{\Total}{\mathsf{Total}}
\newcommand{\CFrob}{\mathsf{CFrob}}
\newcommand{\tr}{\mathsf{Tr}}
\newcommand{\ox}{\otimes}
\newcommand{\Csp}{{\sf Cospan}}
\newcommand{\Corel}{{\sf Corel}}
\newcommand{\Bool}{\mathbb{B}}
\newcommand{\Iso}{{\sf Iso}}
\renewcommand{\P}{{\sf p}}
\newcommand{\pmul}{{\sf pmul}}
\newcommand{\Prof}{{\sf Prof}}
\newcommand{\Mod}{{\sf Mod}}
\newcommand{\unit}{{\sf unit}}
\newcommand{\comm}{{\sf comm}}
\newcommand{\assoc}{{\sf assoc}}
\newcommand{\inj}{{\sf Inj}}
\newcommand{\surj}{{\sf Surj}}
\newcommand{\PSurj}{{\sf PSurj}}
\newcommand{\pre}{{\sf pre}}
\newcommand{\poly}{{\sf poly}}
\newcommand{\sub}{{\sf sub}}
\newcommand{\C}{\mathbb{C}}
\newcommand{\m}{{\sf m}}
\newcommand{\cm}{{\sf cm}}
\newcommand{\cb}{{\sf cb}}
\newcommand{\pcm}{{\sf pcm}}
\renewcommand{\r}{{\sf r}}
\newcommand{\scfrob}{{\sf scfrob}}
\newcommand{\bi}{{\sf b1}}
\newcommand{\bii}{{\sf b2}}
\newcommand{\biii}{{\sf b3}}
\newcommand{\biv}{{\sf b4}}
\newcommand{\Kl}{{\sf Kl}}
\newcommand{\Mon}{{\sf Mon}}
\newcommand{\F}{\mathbb{F}}
\newcommand{\X}{\mathbb{X}}
\newcommand{\Y}{\mathbb{Y}}
\newcommand{\Z}{\mathbb{Z}}
\newcommand{\N}{\mathbb{N}}
\newcommand{\T}{\mathbb{T}}
\newcommand{\s}{\mathbb{S}}
\newcommand{\U}{\mathbb{U}}
\newcommand{\IH}{\mathbb{IH}}
\newcommand{\M}{\mathcal{M}}
\newcommand{\E}{\mathcal{E}}
\newcommand{\eq}[1]{\stackrel{\scalebox{.6}{#1}}{=}}
\newcommand{\defeq}[1]{\stackrel{\scalebox{.6}{#1}}{:=}}
\newcommand{\ZXA}{\mathsf{ZX}\textit{\&}}
\newcommand{\Vect}{\mathsf{Vect}}
\newcommand{\Lag}{\mathsf{Lag}}
\newcommand{\im}{\mathsf{im}}
\newcommand{\ZX}{\mathsf{ZX}}
\newcommand{\ZH}{\mathsf{ZH}}
\DeclareMathSymbol{\bot}{\mathord}{symbols}{"3F}
\newcommand{\pullbackcorner}[1][dl]{\save*!/#1-1pc/#1:(-1,1)@^{|-}\restore}
\renewcommand{\epsilon}{\varepsilon}
\renewcommand{\bar}[1]{\overline{#1}\hspace*{.01cm}}
\newcommand{\Stab}{{\sf Stab}}
\newcommand{\LinRel}{\sf LinRel}
\newcommand{\lbparen}{\{
}
\newcommand{\rbparen}{ \}
}
\newcommand{\skewpullbackcorner}[1][dl]{\save*!/#1-1.1pc/#1:(-.5,1)@^{|>}\restore}
\newcommand{\skewpushoutcorner}[1][dl]{\save*!/#1-1pc/#1:(-1,1)@^{|<}\restore}
\DeclareFontFamily{U}{mathx}{\hyphenchar\font45}
\DeclareFontShape{U}{mathx}{m}{n}{
      <5> <6> <7> <8> <9> <10>
      <10.95> <12> <14.4> <17.28> <20.74> <24.88>
      mathx10
      }{}
\DeclareSymbolFont{mathx}{U}{mathx}{m}{n}
\DeclareMathAccent{\widecheck}{0}{mathx}{"71}
\DeclareMathAccent{\wideparen}{0}{mathx}{"75}
\def\cs#1{\texttt{\char`\\#1}}
\newcommand\numeq[2]%
\title{A Graphical Calculus for Lagrangian Relations}
\def\titlerunning{A Graphical Calculus for Lagrangian Relations}
\date{\today}
\author{Cole Comfort, Aleks Kissinger\\ Department of Computer Science, University of Oxford}
\def\authorrunning{Cole Comfort, Aleks Kissinger}
\begin{document}

\maketitle

\begin{abstract}
Symplectic vector spaces are the phase spaces of linear mechanical systems.  The symplectic form describes, for example, the relation between position and momentum as well as current and voltage. The category of linear Lagrangian relations between symplectic vector spaces is a symmetric monoidal subcategory of relations which gives a semantics for the evolution --- and more generally linear constraints on the evolution --- of various physical systems.

We give a new presentation of the category of Lagrangian relations over an arbitrary field as a `doubled' category of linear relations. More precisely, we show that it arises as a variation of Selinger's CPM construction applied to linear relations, where the covariant orthogonal complement functor plays the role of conjugation. Furthermore, for linear relations over prime fields, this corresponds exactly to the CPM construction for a suitable choice of dagger. We can furthermore extend this construction by a single affine shift operator to obtain a category of affine Lagrangian relations. Using this new presentation, we prove the equivalence of the prop of affine Lagrangian relations with the prop of qudit stabilizer theory in odd prime dimensions. We hence obtain a unified graphical language for several disparate process theories, including electrical circuits, Spekkens' toy theory, and odd-prime-dimensional stabilizer quantum circuits.
\end{abstract}

Linear Lagrangian relations, or more generally, affine Lagrangian relations provide a rich, compositional setting for modelling the evolutions of various physical systems. For example, certain classes of electrical circuits can be interpreted in terms of Lagrangian relations over the field of real rational functions~\cite{passive,affine}. On a quite different note, the state preparation and quantum evolution of $p$-dimensional generalizations of Spekkens' toy theory~\cite{spekkens2016quasi} and (consequently) odd-prime-dimensional stabilizer quantum theory~\cite{gross} have semantics in terms of affine Lagrangian relations over $\F_p$.  Specifically, the state preparation corresponds to the affine Lagrangian relations from the tensor unit, and the evolution corresponds to affine symplectomorphisms.  In this paper we extend this correspondance to the full category of Lagrangian relations, giving these circuits a proper categorical treatment.

Formally, the category of Lagrangian relations is the symmetric monoidal subcategory of linear relations where the objects are symplectic vector spaces and the morphisms are linear relations satisfying an extra condition which can be captured graphically as the following, where $V^\perp$ denotes the orthogonal complement and the grey box denotes the \textit{antipode} from the graphical theory of linear relations:
$$
\begin{tikzpicture}
	\begin{pgfonlayer}{nodelayer}
		\node [style=map] (0) at (2, -2) {$V$};
		\node [style=none] (1) at (1.75, -1.25) {};
		\node [style=none] (2) at (2.25, -1.25) {};
		\node [style=none] (3) at (1.75, -2.75) {};
		\node [style=none] (4) at (2.25, -2.75) {};
	\end{pgfonlayer}
	\begin{pgfonlayer}{edgelayer}
		\draw [in=120, out=-90] (1.center) to (0);
		\draw [in=-90, out=60] (0) to (2.center);
		\draw [in=-60, out=90] (4.center) to (0);
		\draw [in=90, out=-120] (0) to (3.center);
	\end{pgfonlayer}
\end{tikzpicture}
=
\begin{tikzpicture}
	\begin{pgfonlayer}{nodelayer}
		\node [style=map] (0) at (2, -2) {$V^\perp$};
		\node [style=none] (1) at (1.75, -1.25) {};
		\node [style=none] (2) at (2.25, -1.25) {};
		\node [style=none] (3) at (1.75, -2.75) {};
		\node [style=none] (4) at (2.25, -2.75) {};
		\node [style=none] (5) at (2.25, -0.5) {};
		\node [style=none] (6) at (1.75, -0.5) {};
		\node [style=none] (7) at (2.25, -3.5) {};
		\node [style=none] (8) at (1.75, -3.5) {};
		\node [style=s] (9) at (2.25, -1.25) {};
		\node [style=s] (10) at (2.25, -2.75) {};
	\end{pgfonlayer}
	\begin{pgfonlayer}{edgelayer}
		\draw [in=120, out=-90] (1.center) to (0);
		\draw [in=-90, out=60] (0) to (2.center);
		\draw [in=-60, out=90] (4.center) to (0);
		\draw [in=90, out=-120] (0) to (3.center);
		\draw [in=90, out=-90] (6.center) to (2.center);
		\draw [in=270, out=90] (1.center) to (5.center);
		\draw [in=270, out=90] (7.center) to (3.center);
		\draw [in=270, out=90] (8.center) to (4.center);
	\end{pgfonlayer}
\end{tikzpicture}
$$
We show that any linear relation $V$ determines a Lagrangian relation in terms of `doubling', i.e.\ taking the tensor product of a linear relation with its complement:
$$
\begin{tikzpicture}
	\begin{pgfonlayer}{nodelayer}
		\node [style=map] (24) at (3.25, -1) {$V$};
		\node [style=none] (25) at (3.25, -0.25) {};
		\node [style=none] (27) at (3.25, -1.75) {};
	\end{pgfonlayer}
	\begin{pgfonlayer}{edgelayer}
		\draw (25.center) to (24);
		\draw (27.center) to (24);
	\end{pgfonlayer}
\end{tikzpicture}
\mapsto
\begin{tikzpicture}
	\begin{pgfonlayer}{nodelayer}
		\node [style=map] (24) at (3.25, -1) {$V^\perp$};
		\node [style=none] (25) at (3.25, -0.25) {};
		\node [style=none] (27) at (3.25, -1.75) {};
		\node [style=map] (28) at (4, -1) {$V$};
		\node [style=none] (29) at (4, -0.25) {};
		\node [style=none] (30) at (4, -1.75) {};
	\end{pgfonlayer}
	\begin{pgfonlayer}{edgelayer}
		\draw (25.center) to (24);
		\draw (27.center) to (24);
		\draw (29.center) to (28);
		\draw (30.center) to (28);
	\end{pgfonlayer}
\end{tikzpicture}
$$
By analogy to the CPM construction for the category of completely positive maps, we call these \textit{pure} Lagrangian relations. In Theorem \ref{theorem:unbiased} we show that only one more class of `discard' generators $d_a$ for each $a$ in the underlying field $k$ is required to generate all Lagrangian relations.
$$
d_a := \tikzfig{da}
$$

From this, we immediately obtain a complete graphical calculus for Lagrangian relations over any field $k$, namely we can apply the complete calculus $\ih_k$ for linear relations~\cite{ihpub} to diagrams built from pure morphisms and discard maps. 
This extends the doubled presentation of bond graphs, given in \cite[5.3]{coya}, which are not universal for Lagrangian relations, and is instead only universal for a fragment of the pure morphisms.
In Corollary \ref{cor:pure}, we also immediately get a \textit{purification theorem} for Lagrangian relations, much like the purification (a.k.a. Stinespring dilation) of quantum channels which can be proven straightforwardly in the CPM construction over Hilbert spaces.

Furthermore, in the case of prime fields, i.e. finite fields $\mathbb F_p$ for $p$, we show in Corollary \ref{cor} that this is actually an instance of the original CPM construction, for a suitably defined dagger on the category of linear relations.

In Section \ref{sec:aff} we show that only one more generator is needed to obtain {\em affine} Lagrangian relations.  In the case of odd prime fields, we show in Theorem \ref{theorem:spekkens} that affine Lagrangian relations are prime-dimensional qudit stabilizer circuits, modulo invertible scalars.  This give a graphical calculus extends previous work on the qubit \cite{backensspek}, and qutrit \cite{qutrit} cases.  We also discuss the relation to electrical circuits.

\paragraph{Related work.} It was previously shown that certain classes of electrical circuits have a semantics in terms of affine Lagrangian relations over the field of the real numbers and the real rational functions ${\mathbb{R}[x,y]/\langle xy-1\rangle}$ \cite{network,passive}. Similarly in \cite[\S VI]{affine}, the authors give an interpretation of non-passive electrical circuits in terms of these `doubled' string diagrams for affine relations over the real rational functions. However the authors did not give a full characterisation for the category of Lagrangian relations in terms of diagrammatic generators.
We restate the interpretations of the electrical components given in  \cite[\S VI]{affine} in terms of the graphical calculus for affine Lagrangian relations in Example \ref{ex:circuits}.

A presentation of odd-prime stabilizer theory in terms of affine symplectomorphisms applied to Lagrangian subspaces appears in~\cite{gross} and several follow-on works relating stabilizer theory to classical phase spaces via the discrete Wigner function. Our Theorem \ref{theorem:spekkens} is a categorical reformulation of the result of Spekkens' in which he shows that so called odd-prime-dimensional `quadrature epistricted theories' are operationally equivalent to prime-dimensional qudit stabilizer circuits \cite{spekkens2016quasi}, following earlier work in \cite{spekkens}.  This operational equivalence has also been further explored in the non-prime case \cite{catani}. Note that operational equivalence is not the same as categorical equivalence. The notion of operational equivalence used in \cite{spekkens2016quasi,catani}  refers to the equivalence of protocols in which circuits are prepared, evolve and then are measured, whereas ours is more `process-theoretic', i.e.\ we consider the category that contains states, effects, evolutions, and all possible compositions thereof. A complete presentation for Spekkens' qubit toy model in terms of a category of relations has also been given \cite{backensspek} following the categorical description by \cite{coecke2012spekkens}.  However, the authors do not explicitly establish that this is the category of affine Lagrangian relations over $\F_2$, but merely a subcategory of finite sets and relations. There is also a complete presentation for qutrit stabilizer theory \cite{qutrit} which, by Theorem \ref{theorem:spekkens}, is equivalent to Spekkens' qutrit toy model, up to scalars; the connection to relations, in this case, is unexplored.

\section{Linear relations}

\label{sec:linear}

In order to decribe Lagrangian relations diagramatically, we must first recall the symmetric monoidal theory of linear relations.  To do so, we first recall the symmetric monoidal theory of matrices:

\begin{definition}[{\cite[Defn.\ 3.4]{ih}}]
Given a ring $k$, let $\cb_k$ denote the prop given by the generators\footnote{We use the ZX-style colouring which is dual to that used in \cite{ih}.}:
$$
\begin{tikzpicture}
	\begin{pgfonlayer}{nodelayer}
		\node [style=Z] (0) at (4, 2) {};
		\node [style=none] (1) at (3.75, 2.5) {};
		\node [style=none] (2) at (4.25, 2.5) {};
		\node [style=none] (3) at (4, 1.5) {};
	\end{pgfonlayer}
	\begin{pgfonlayer}{edgelayer}
		\draw [in=135, out=-90] (1.center) to (0);
		\draw (0) to (3.center);
		\draw [in=-90, out=45, looseness=1.25] (0) to (2.center);
	\end{pgfonlayer}
\end{tikzpicture}\hspace*{,5cm}
\begin{tikzpicture}
	\begin{pgfonlayer}{nodelayer}
		\node [style=Z] (0) at (4, 2) {};
		\node [style=none] (3) at (4, 1.5) {};
	\end{pgfonlayer}
	\begin{pgfonlayer}{edgelayer}
		\draw (0) to (3.center);
	\end{pgfonlayer}
\end{tikzpicture}\hspace*{,5cm}
\begin{tikzpicture}
	\begin{pgfonlayer}{nodelayer}
		\node [style=X] (0) at (4, 2) {};
		\node [style=none] (1) at (3.75, 1.5) {};
		\node [style=none] (2) at (4.25, 1.5) {};
		\node [style=none] (3) at (4, 2.5) {};
	\end{pgfonlayer}
	\begin{pgfonlayer}{edgelayer}
		\draw [in=-135, out=90] (1.center) to (0);
		\draw (0) to (3.center);
		\draw [in=90, out=-45, looseness=1.25] (0) to (2.center);
	\end{pgfonlayer}
\end{tikzpicture}\hspace*{,5cm}
\begin{tikzpicture}
	\begin{pgfonlayer}{nodelayer}
		\node [style=X] (0) at (4, 2) {};
		\node [style=none] (3) at (4, 2.5) {};
	\end{pgfonlayer}
	\begin{pgfonlayer}{edgelayer}
		\draw (0) to (3.center);
	\end{pgfonlayer}
\end{tikzpicture}\hspace*{,5cm}
\text{for all $a \in k$ }
\hspace*{,3cm}
\begin{tikzpicture}
	\begin{pgfonlayer}{nodelayer}
		\node [style=scalar] (0) at (4, 2) {$a$};
		\node [style=none] (3) at (4, 2.5) {};
		\node [style=none] (4) at (4, 1.5) {};
	\end{pgfonlayer}
	\begin{pgfonlayer}{edgelayer}
		\draw (0) to (3.center);
		\draw (4.center) to (0);
	\end{pgfonlayer}
\end{tikzpicture}
$$
modulo the equations of a bicommutative bialgebra:
$$
\begin{tikzpicture}
	\begin{pgfonlayer}{nodelayer}
		\node [style=Z] (0) at (5, 2) {};
		\node [style=none] (1) at (4.75, 2.5) {};
		\node [style=none] (2) at (5.25, 2.5) {};
		\node [style=none] (3) at (5, 1.5) {};
	\end{pgfonlayer}
	\begin{pgfonlayer}{edgelayer}
		\draw (3.center) to (0);
		\draw [in=-90, out=135] (0) to (1.center);
		\draw [in=-90, out=45] (0) to (2.center);
	\end{pgfonlayer}
\end{tikzpicture}
=
\begin{tikzpicture}
	\begin{pgfonlayer}{nodelayer}
		\node [style=Z] (0) at (5, 2) {};
		\node [style=none] (1) at (4.75, 2.5) {};
		\node [style=none] (2) at (5.25, 2.5) {};
		\node [style=none] (3) at (5, 1.5) {};
		\node [style=none] (4) at (5.25, 3) {};
		\node [style=none] (5) at (4.75, 3) {};
	\end{pgfonlayer}
	\begin{pgfonlayer}{edgelayer}
		\draw (3.center) to (0);
		\draw [in=-90, out=135] (0) to (1.center);
		\draw [in=-90, out=45] (0) to (2.center);
		\draw [in=270, out=90] (1.center) to (4.center);
		\draw [in=270, out=90] (2.center) to (5.center);
	\end{pgfonlayer}
\end{tikzpicture}
\hspace*{.5cm}
\begin{tikzpicture}
	\begin{pgfonlayer}{nodelayer}
		\node [style=Z] (0) at (5, 2) {};
		\node [style=none] (1) at (4.75, 2.5) {};
		\node [style=none] (2) at (5.25, 2.5) {};
		\node [style=none] (3) at (5, 1.5) {};
		\node [style=none] (4) at (4.75, 3) {};
		\node [style=Z] (5) at (5.25, 2.5) {};
	\end{pgfonlayer}
	\begin{pgfonlayer}{edgelayer}
		\draw (3.center) to (0);
		\draw [in=-90, out=135] (0) to (1.center);
		\draw [in=-90, out=45] (0) to (2.center);
		\draw [in=270, out=90] (1.center) to (4.center);
	\end{pgfonlayer}
\end{tikzpicture}
=
\begin{tikzpicture}
	\begin{pgfonlayer}{nodelayer}
		\node [style=none] (1) at (4.75, 1.5) {};
		\node [style=none] (4) at (4.75, 3) {};
	\end{pgfonlayer}
	\begin{pgfonlayer}{edgelayer}
		\draw (1.center) to (4.center);
	\end{pgfonlayer}
\end{tikzpicture}
\hspace*{.5cm}
\begin{tikzpicture}
	\begin{pgfonlayer}{nodelayer}
		\node [style=Z] (0) at (5, 2) {};
		\node [style=none] (1) at (4.75, 2.5) {};
		\node [style=none] (2) at (5.25, 2.5) {};
		\node [style=none] (3) at (5, 1.5) {};
		\node [style=Z] (4) at (4.75, 2.5) {};
		\node [style=none] (5) at (4.5, 3) {};
		\node [style=none] (6) at (5, 3) {};
		\node [style=none] (7) at (5.25, 3) {};
	\end{pgfonlayer}
	\begin{pgfonlayer}{edgelayer}
		\draw (3.center) to (0);
		\draw [in=-90, out=135] (0) to (1.center);
		\draw [in=-90, out=45] (0) to (2.center);
		\draw [in=-90, out=135] (4) to (5.center);
		\draw [in=-90, out=45] (4) to (6.center);
		\draw (2.center) to (7.center);
	\end{pgfonlayer}
\end{tikzpicture}
=
\begin{tikzpicture}[xscale=-1]
	\begin{pgfonlayer}{nodelayer}
		\node [style=Z] (0) at (5, 2) {};
		\node [style=none] (1) at (4.75, 2.5) {};
		\node [style=none] (2) at (5.25, 2.5) {};
		\node [style=none] (3) at (5, 1.5) {};
		\node [style=Z] (4) at (4.75, 2.5) {};
		\node [style=none] (5) at (4.5, 3) {};
		\node [style=none] (6) at (5, 3) {};
		\node [style=none] (7) at (5.25, 3) {};
	\end{pgfonlayer}
	\begin{pgfonlayer}{edgelayer}
		\draw (3.center) to (0);
		\draw [in=-90, out=135] (0) to (1.center);
		\draw [in=-90, out=45] (0) to (2.center);
		\draw [in=-90, out=135] (4) to (5.center);
		\draw [in=-90, out=45] (4) to (6.center);
		\draw (2.center) to (7.center);
	\end{pgfonlayer}
\end{tikzpicture}
\hspace*{.5cm}
\begin{tikzpicture}[yscale=-1]
	\begin{pgfonlayer}{nodelayer}
		\node [style=X] (0) at (5, 2) {};
		\node [style=none] (1) at (4.75, 2.5) {};
		\node [style=none] (2) at (5.25, 2.5) {};
		\node [style=none] (3) at (5, 1.5) {};
	\end{pgfonlayer}
	\begin{pgfonlayer}{edgelayer}
		\draw (3.center) to (0);
		\draw [in=-90, out=135] (0) to (1.center);
		\draw [in=-90, out=45] (0) to (2.center);
	\end{pgfonlayer}
\end{tikzpicture}
=
\begin{tikzpicture}[yscale=-1]
	\begin{pgfonlayer}{nodelayer}
		\node [style=X] (0) at (5, 2) {};
		\node [style=none] (1) at (4.75, 2.5) {};
		\node [style=none] (2) at (5.25, 2.5) {};
		\node [style=none] (3) at (5, 1.5) {};
		\node [style=none] (4) at (5.25, 3) {};
		\node [style=none] (5) at (4.75, 3) {};
	\end{pgfonlayer}
	\begin{pgfonlayer}{edgelayer}
		\draw (3.center) to (0);
		\draw [in=-90, out=135] (0) to (1.center);
		\draw [in=-90, out=45] (0) to (2.center);
		\draw [in=270, out=90] (1.center) to (4.center);
		\draw [in=270, out=90] (2.center) to (5.center);
	\end{pgfonlayer}
\end{tikzpicture}
\hspace*{.5cm}
\begin{tikzpicture}[yscale=-1]
	\begin{pgfonlayer}{nodelayer}
		\node [style=X] (0) at (5, 2) {};
		\node [style=none] (1) at (4.75, 2.5) {};
		\node [style=none] (2) at (5.25, 2.5) {};
		\node [style=none] (3) at (5, 1.5) {};
		\node [style=none] (4) at (4.75, 3) {};
		\node [style=X] (5) at (5.25, 2.5) {};
	\end{pgfonlayer}
	\begin{pgfonlayer}{edgelayer}
		\draw (3.center) to (0);
		\draw [in=-90, out=135] (0) to (1.center);
		\draw [in=-90, out=45] (0) to (2.center);
		\draw [in=270, out=90] (1.center) to (4.center);
	\end{pgfonlayer}
\end{tikzpicture}
=
\begin{tikzpicture}
	\begin{pgfonlayer}{nodelayer}
		\node [style=none] (1) at (4.75, 1.5) {};
		\node [style=none] (4) at (4.75, 3) {};
	\end{pgfonlayer}
	\begin{pgfonlayer}{edgelayer}
		\draw (1.center) to (4.center);
	\end{pgfonlayer}
\end{tikzpicture}
\hspace*{.5cm}
\begin{tikzpicture}[yscale=-1]
	\begin{pgfonlayer}{nodelayer}
		\node [style=X] (0) at (5, 2) {};
		\node [style=none] (1) at (4.75, 2.5) {};
		\node [style=none] (2) at (5.25, 2.5) {};
		\node [style=none] (3) at (5, 1.5) {};
		\node [style=X] (4) at (4.75, 2.5) {};
		\node [style=none] (5) at (4.5, 3) {};
		\node [style=none] (6) at (5, 3) {};
		\node [style=none] (7) at (5.25, 3) {};
	\end{pgfonlayer}
	\begin{pgfonlayer}{edgelayer}
		\draw (3.center) to (0);
		\draw [in=-90, out=135] (0) to (1.center);
		\draw [in=-90, out=45] (0) to (2.center);
		\draw [in=-90, out=135] (4) to (5.center);
		\draw [in=-90, out=45] (4) to (6.center);
		\draw (2.center) to (7.center);
	\end{pgfonlayer}
\end{tikzpicture}
=
\begin{tikzpicture}[scale=-1]
	\begin{pgfonlayer}{nodelayer}
		\node [style=X] (0) at (5, 2) {};
		\node [style=none] (1) at (4.75, 2.5) {};
		\node [style=none] (2) at (5.25, 2.5) {};
		\node [style=none] (3) at (5, 1.5) {};
		\node [style=X] (4) at (4.75, 2.5) {};
		\node [style=none] (5) at (4.5, 3) {};
		\node [style=none] (6) at (5, 3) {};
		\node [style=none] (7) at (5.25, 3) {};
	\end{pgfonlayer}
	\begin{pgfonlayer}{edgelayer}
		\draw (3.center) to (0);
		\draw [in=-90, out=135] (0) to (1.center);
		\draw [in=-90, out=45] (0) to (2.center);
		\draw [in=-90, out=135] (4) to (5.center);
		\draw [in=-90, out=45] (4) to (6.center);
		\draw (2.center) to (7.center);
	\end{pgfonlayer}
\end{tikzpicture}
$$
$$
\begin{tikzpicture}
	\begin{pgfonlayer}{nodelayer}
		\node [style=Z] (0) at (5, 2) {};
		\node [style=X] (1) at (5, 1.5) {};
		\node [style=none] (2) at (4.75, 2.5) {};
		\node [style=none] (3) at (5.25, 2.5) {};
	\end{pgfonlayer}
	\begin{pgfonlayer}{edgelayer}
		\draw (1) to (0);
		\draw [in=-90, out=135] (0) to (2.center);
		\draw [in=-90, out=45] (0) to (3.center);
	\end{pgfonlayer}
\end{tikzpicture}
=
\begin{tikzpicture}
	\begin{pgfonlayer}{nodelayer}
		\node [style=none] (2) at (4.75, 2.5) {};
		\node [style=none] (3) at (5.25, 2.5) {};
		\node [style=X] (4) at (4.75, 2) {};
		\node [style=X] (5) at (5.25, 2) {};
	\end{pgfonlayer}
	\begin{pgfonlayer}{edgelayer}
		\draw (4) to (2.center);
		\draw (5) to (3.center);
	\end{pgfonlayer}
\end{tikzpicture}
\hspace*{.5cm}
\begin{tikzpicture}[yscale=-1]
	\begin{pgfonlayer}{nodelayer}
		\node [style=X] (0) at (5, 2) {};
		\node [style=Z] (1) at (5, 1.5) {};
		\node [style=none] (2) at (4.75, 2.5) {};
		\node [style=none] (3) at (5.25, 2.5) {};
	\end{pgfonlayer}
	\begin{pgfonlayer}{edgelayer}
		\draw (1) to (0);
		\draw [in=-90, out=135] (0) to (2.center);
		\draw [in=-90, out=45] (0) to (3.center);
	\end{pgfonlayer}
\end{tikzpicture}
=
\begin{tikzpicture}[yscale=-1]
	\begin{pgfonlayer}{nodelayer}
		\node [style=none] (2) at (4.75, 2.5) {};
		\node [style=none] (3) at (5.25, 2.5) {};
		\node [style=Z] (4) at (4.75, 2) {};
		\node [style=Z] (5) at (5.25, 2) {};
	\end{pgfonlayer}
	\begin{pgfonlayer}{edgelayer}
		\draw (4) to (2.center);
		\draw (5) to (3.center);
	\end{pgfonlayer}
\end{tikzpicture}
\hspace*{.5cm}
\begin{tikzpicture}
	\begin{pgfonlayer}{nodelayer}
		\node [style=Z] (0) at (5, 2) {};
		\node [style=X] (1) at (5, 1.5) {};
		\node [style=none] (2) at (4.75, 2.5) {};
		\node [style=none] (3) at (5.25, 2.5) {};
		\node [style=none] (4) at (4.75, 1) {};
		\node [style=none] (5) at (5.25, 1) {};
	\end{pgfonlayer}
	\begin{pgfonlayer}{edgelayer}
		\draw (1) to (0);
		\draw [in=-90, out=135] (0) to (2.center);
		\draw [in=-90, out=45] (0) to (3.center);
		\draw [in=-45, out=90] (5.center) to (1);
		\draw [in=90, out=-135] (1) to (4.center);
	\end{pgfonlayer}
\end{tikzpicture}
=
\begin{tikzpicture}
	\begin{pgfonlayer}{nodelayer}
		\node [style=Z] (0) at (5, 2) {};
		\node [style=Z] (1) at (5.5, 2) {};
		\node [style=X] (2) at (5, 2.5) {};
		\node [style=X] (3) at (5.5, 2.5) {};
		\node [style=none] (4) at (5.5, 3) {};
		\node [style=none] (5) at (5, 3) {};
		\node [style=none] (6) at (5, 1.5) {};
		\node [style=none] (7) at (5.5, 1.5) {};
	\end{pgfonlayer}
	\begin{pgfonlayer}{edgelayer}
		\draw (5.center) to (2);
		\draw [bend right=45, looseness=1.25] (2) to (0);
		\draw (0) to (3);
		\draw (1) to (2);
		\draw [bend left=45, looseness=1.25] (3) to (1);
		\draw (1) to (7.center);
		\draw (6.center) to (0);
		\draw (3) to (4.center);
	\end{pgfonlayer}
\end{tikzpicture}
\hspace*{.5cm}
\begin{tikzpicture}
	\begin{pgfonlayer}{nodelayer}
		\node [style=Z] (0) at (5, 2) {};
		\node [style=X] (1) at (5, 1.5) {};
	\end{pgfonlayer}
	\begin{pgfonlayer}{edgelayer}
		\draw (1) to (0);
	\end{pgfonlayer}
\end{tikzpicture}
=
$$
and the additional equations:
$$
\begin{tikzpicture}
	\begin{pgfonlayer}{nodelayer}
		\node [style=none] (1) at (5, 1.25) {};
		\node [style=scalar] (2) at (5, 1.75) {$a$};
		\node [style=Z] (3) at (5, 2.25) {};
		\node [style=none] (4) at (5.25, 2.75) {};
		\node [style=none] (5) at (4.75, 2.75) {};
	\end{pgfonlayer}
	\begin{pgfonlayer}{edgelayer}
		\draw (1.center) to (2);
		\draw (2) to (3);
		\draw [in=-90, out=135] (3) to (5.center);
		\draw [in=-90, out=45] (3) to (4.center);
	\end{pgfonlayer}
\end{tikzpicture}
=
\begin{tikzpicture}
	\begin{pgfonlayer}{nodelayer}
		\node [style=none] (0) at (6.5, 1.75) {};
		\node [style=Z] (1) at (6.5, 2.25) {};
		\node [style=scalar] (4) at (6.25, 2.75) {$a$};
		\node [style=scalar] (5) at (6.75, 2.75) {$a$};
		\node [style=none] (6) at (6.75, 3.25) {};
		\node [style=none] (7) at (6.25, 3.25) {};
	\end{pgfonlayer}
	\begin{pgfonlayer}{edgelayer}
		\draw (5) to (6.center);
		\draw (7.center) to (4);
		\draw (1) to (0.center);
		\draw [in=-90, out=45] (1) to (5);
		\draw [in=-90, out=135] (1) to (4);
	\end{pgfonlayer}
\end{tikzpicture}
\hspace*{.5cm}
\begin{tikzpicture}
	\begin{pgfonlayer}{nodelayer}
		\node [style=none] (0) at (5, 1.25) {};
		\node [style=scalar] (1) at (5, 1.75) {$a$};
		\node [style=Z] (2) at (5, 2.25) {};
	\end{pgfonlayer}
	\begin{pgfonlayer}{edgelayer}
		\draw (0.center) to (1);
		\draw (1) to (2);
	\end{pgfonlayer}
\end{tikzpicture}
=
\begin{tikzpicture}
	\begin{pgfonlayer}{nodelayer}
		\node [style=none] (0) at (5, 1.25) {};
		\node [style=Z] (2) at (5, 2.25) {};
	\end{pgfonlayer}
	\begin{pgfonlayer}{edgelayer}
		\draw (0.center) to (2);
	\end{pgfonlayer}
\end{tikzpicture}
\hspace*{.5cm}
\begin{tikzpicture}
	\begin{pgfonlayer}{nodelayer}
		\node [style=none] (0) at (5, 2.75) {};
		\node [style=scalar] (1) at (5, 2.25) {$a$};
		\node [style=X] (2) at (5, 1.75) {};
		\node [style=none] (3) at (5.25, 1.25) {};
		\node [style=none] (4) at (4.75, 1.25) {};
	\end{pgfonlayer}
	\begin{pgfonlayer}{edgelayer}
		\draw (0.center) to (1);
		\draw (1) to (2);
		\draw [in=90, out=-135] (2) to (4.center);
		\draw [in=90, out=-45] (2) to (3.center);
	\end{pgfonlayer}
\end{tikzpicture}
=
\begin{tikzpicture}
	\begin{pgfonlayer}{nodelayer}
		\node [style=none] (0) at (6.5, 3.25) {};
		\node [style=X] (1) at (6.5, 2.75) {};
		\node [style=scalar] (4) at (6.25, 2.25) {$a$};
		\node [style=scalar] (5) at (6.75, 2.25) {$a$};
		\node [style=none] (6) at (6.75, 1.75) {};
		\node [style=none] (7) at (6.25, 1.75) {};
	\end{pgfonlayer}
	\begin{pgfonlayer}{edgelayer}
		\draw (5) to (6.center);
		\draw (7.center) to (4);
		\draw (1) to (0.center);
		\draw [in=-45, out=90] (5) to (1);
		\draw [in=90, out=-135] (1) to (4);
	\end{pgfonlayer}
\end{tikzpicture}
\hspace*{.5cm}
\begin{tikzpicture}
	\begin{pgfonlayer}{nodelayer}
		\node [style=none] (0) at (5, 2.25) {};
		\node [style=scalar] (1) at (5, 1.75) {$a$};
		\node [style=X] (2) at (5, 1.25) {};
	\end{pgfonlayer}
	\begin{pgfonlayer}{edgelayer}
		\draw (0.center) to (1);
		\draw (1) to (2);
	\end{pgfonlayer}
\end{tikzpicture}
=
\begin{tikzpicture}
	\begin{pgfonlayer}{nodelayer}
		\node [style=none] (0) at (5, 2.25) {};
		\node [style=X] (1) at (5, 1.25) {};
	\end{pgfonlayer}
	\begin{pgfonlayer}{edgelayer}
		\draw (0.center) to (1);
	\end{pgfonlayer}
\end{tikzpicture}
\hspace*{.5cm}
\begin{tikzpicture}
	\begin{pgfonlayer}{nodelayer}
		\node [style=scalar] (28) at (128.25, 2) {$b$};
		\node [style=none] (29) at (128.25, 2.5) {};
		\node [style=scalar] (30) at (128.25, 1.25) {$a$};
		\node [style=none] (31) at (128.25, 0.75) {};
	\end{pgfonlayer}
	\begin{pgfonlayer}{edgelayer}
		\draw (28) to (29.center);
		\draw (31.center) to (30);
		\draw (30) to (28);
	\end{pgfonlayer}
\end{tikzpicture}
=
\begin{tikzpicture}
	\begin{pgfonlayer}{nodelayer}
		\node [style=scalar] (0) at (4, 2) {$ab$};
		\node [style=none] (3) at (4, 2.5) {};
		\node [style=none] (4) at (4, 1.5) {};
	\end{pgfonlayer}
	\begin{pgfonlayer}{edgelayer}
		\draw (0) to (3.center);
		\draw (4.center) to (0);
	\end{pgfonlayer}
\end{tikzpicture}
\hspace*{.5cm}
\begin{tikzpicture}
	\begin{pgfonlayer}{nodelayer}
		\node [style=none] (0) at (5, 2.25) {};
		\node [style=none] (1) at (5, 1.25) {};
		\node [style=scalar] (2) at (5, 1.75) {$1$};
	\end{pgfonlayer}
	\begin{pgfonlayer}{edgelayer}
		\draw (1.center) to (2);
		\draw (2) to (0.center);
	\end{pgfonlayer}
\end{tikzpicture}
=
\begin{tikzpicture}
	\begin{pgfonlayer}{nodelayer}
		\node [style=none] (2) at (5.5, 2.5) {};
		\node [style=none] (3) at (5.5, 1) {};
	\end{pgfonlayer}
	\begin{pgfonlayer}{edgelayer}
		\draw (3.center) to (2.center);
	\end{pgfonlayer}
\end{tikzpicture}
\hspace*{.5cm}
\begin{tikzpicture}
	\begin{pgfonlayer}{nodelayer}
		\node [style=none] (1) at (5.25, 2.75) {};
		\node [style=none] (3) at (5.25, 0.75) {};
		\node [style=Z] (4) at (5.25, 1.25) {};
		\node [style=X] (5) at (5.25, 2.25) {};
		\node [style=scalar] (6) at (5, 1.75) {$a$};
		\node [style=scalar] (7) at (5.5, 1.75) {$b$};
	\end{pgfonlayer}
	\begin{pgfonlayer}{edgelayer}
		\draw (3.center) to (4);
		\draw [in=-90, out=135] (4) to (6);
		\draw [in=-150, out=90] (6) to (5);
		\draw (5) to (1.center);
		\draw [in=90, out=-30] (5) to (7);
		\draw [in=45, out=-90] (7) to (4);
	\end{pgfonlayer}
\end{tikzpicture}
=
\begin{tikzpicture}
	\begin{pgfonlayer}{nodelayer}
		\node [style=none] (1) at (5, 2.25) {};
		\node [style=none] (3) at (5, 1.25) {};
		\node [style=scalar] (6) at (5, 1.75) {$a+b$};
	\end{pgfonlayer}
	\begin{pgfonlayer}{edgelayer}
		\draw (3.center) to (6);
		\draw (6) to (1.center);
	\end{pgfonlayer}
\end{tikzpicture}
\hspace*{.5cm}
\begin{tikzpicture}
	\begin{pgfonlayer}{nodelayer}
		\node [style=none] (1) at (5, 2.25) {};
		\node [style=none] (3) at (5, 1.25) {};
		\node [style=scalar] (6) at (5, 1.75) {$0$};
	\end{pgfonlayer}
	\begin{pgfonlayer}{edgelayer}
		\draw (3.center) to (6);
		\draw (6) to (1.center);
	\end{pgfonlayer}
\end{tikzpicture}
=
\begin{tikzpicture}
	\begin{pgfonlayer}{nodelayer}
		\node [style=X] (7) at (5.5, 2) {};
		\node [style=Z] (8) at (5.5, 1.5) {};
		\node [style=none] (9) at (5.5, 2.5) {};
		\node [style=none] (10) at (5.5, 1) {};
	\end{pgfonlayer}
	\begin{pgfonlayer}{edgelayer}
		\draw (10.center) to (8);
		\draw (7) to (9.center);
	\end{pgfonlayer}
\end{tikzpicture}
$$

\end{definition}

\begin{proposition}[{\cite[Prop.\ 3.9]{ih}}]
Given a ring $k$, $\cb_k$ is a presentation for the prop $\Mat_k$, of matrices over $k$ under the direct sum.
\end{proposition}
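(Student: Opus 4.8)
The plan is to exhibit an isomorphism of props $F\colon \cb_k \to \Mat_k$ and to establish it through a normal-form argument. First I would define $F$ on objects by $n \mapsto n$ (both props having $\N$ as their monoid of objects) and on generators by sending the white comultiplication to the copy matrix $\binom{1}{1}$, the white counit to the unique $0\times 1$ matrix, the grey multiplication to the addition matrix $\begin{pmatrix}1 & 1\end{pmatrix}$, the grey unit to the unique $1 \times 0$ matrix, and the scalar box $a$ to the $1\times 1$ matrix $(a)$. Checking that $F$ is well defined amounts to verifying that every defining equation holds in $\Mat_k$: the bicommutative-bialgebra equations say precisely that copying is a cocommutative comonoid, addition a commutative monoid, and that the two interact as a bialgebra (with the grey--white composite being the zero map); the remaining equations record that a scalar distributes over copying and addition, that scalars compose multiplicatively (the $ab$ rule) and add through the copy--add diagram (the $a+b$ rule), and that $0$ and $1$ behave as expected. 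All of these are routine matrix identities.

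The substance is completeness, which I would obtain via a matrix normal form. For an $m \times n$ matrix $A = (a_{ij})$ define its canonical diagram $\mathrm{mf}(A)\colon n \to m$ to be the bipartite form in which each of the $n$ inputs is copied (using coassociativity and cocommutativity of the white comonoid) into $m$ wires, the wire running from input $j$ toward output $i$ carries the scalar $a_{ij}$, and the $n$ wires arriving at output $i$ are summed by the grey monoid. By construction $F(\mathrm{mf}(A)) = A$, so $F$ is full. The key lemma is that every diagram $d$ is provably equal to $\mathrm{mf}(F(d))$. I would prove this by structural induction on diagrams, showing that the class of matrix normal forms is closed, up to the axioms, under identities, symmetries, monoidal product (which corresponds to the block-diagonal matrix, i.e.\ the direct sum), and---crucially---sequential composition. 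For composition one must show $\mathrm{mf}(B)\circ \mathrm{mf}(A) = \mathrm{mf}(BA)$: here the bialgebra law is used to slide the copy layer of $\mathrm{mf}(B)$ past the add layer of $\mathrm{mf}(A)$, creating one wire for each directed path from an input, through the middle, to an output; the scalar-fusion laws then multiply the two scalars along each such path (giving $b_{ik}a_{kj}$), and the copy/add-scalar rules together with the $a+b$ rule collect the $k$ parallel paths into the single scalar $\sum_k b_{ik}a_{kj} = (BA)_{ij}$.

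Faithfulness is then immediate from the normal-form lemma: if $F(d_1) = F(d_2)$ then $d_1 = \mathrm{mf}(F(d_1)) = \mathrm{mf}(F(d_2)) = d_2$, so $F$ is an isomorphism of props. The main obstacle I anticipate is not the soundness check but the composition step of the normal-form reduction: verifying that the bialgebra equation together with the scalar axioms suffices to transport an arbitrary diagram into bipartite form, and that serial composition of two bipartite forms reduces again to a bipartite form. This is exactly the point where the graphical ``counting of weighted paths'' must be shown to reproduce matrix multiplication, and where cocommutativity, associativity, the bialgebra law, and both scalar-fusion rules all have to be orchestrated together; once this reduction is in hand, the remainder of the argument is bookkeeping.
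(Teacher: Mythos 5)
The paper does not actually prove this proposition: it is imported verbatim from \cite[Prop.~3.9]{ih}, so there is no in-paper argument to compare against. Your proposal is a correct proof and is, in substance, the standard argument behind the cited result: soundness by checking the bialgebra and scalar axioms against elementary matrix identities, then a bipartite copy--scale--add normal form whose closure under sequential composition is exactly the graphical encoding of matrix multiplication (the bialgebra law interchanges the adding layer of one factor with the copying layer of the next, the multiplicative scalar-fusion rule composes scalars along each directed path, and the copy/add-scalar rules plus the $a+b$ rule collapse parallel paths into $\sum_k b_{ik}a_{kj}$). The only organizational difference from the source is that \cite{ih} packages this factorization through Lack's technique of composing props via a distributive law --- $\Mat_k$ is exhibited as the composite of the ``adding'' and ``copying'' sub-props, with the bialgebra equations arising as exactly the equations generated by the distributive law --- whereas you run the induction by hand; the abstract packaging buys modularity (it is reused in \cite{ih} to assemble $\ih_k$ from $\cb_k^{\op}+\cb_k$), while your direct induction is more self-contained and proves the same normal-form lemma. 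One point to make explicit when you write out the composition step: the axiom identifying the scalar $0$ with the counit-followed-by-unit composite is what lets you put disconnected input--output pairs into the uniform bipartite shape (one scalar-labelled wire per pair), and it is needed for the normal form to be unique; without invoking it the faithfulness argument at the end does not go through.
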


One should interpret the grey monoid as addition and the white comonoid as copying.

\begin{definition}[{\cite[Defn.\ 3.42]{ih}}]
Given a field $k$, the prop of {\bf linear relations}, $\LinRel_k$, has morphisms $n\to m$  as linear subspaces of $k^n \oplus k^m$, under relational composition and the direct sum as the tensor product.
\end{definition}

It is only necessary for $k$ to be a principal ideal domain for composition to be well defined, but a field will do for the purposes of this paper.

\begin{definition}[{\cite[Defn.\ 3.44]{ih}}]
Given a field $k$, let $\ih_k$ denote the prop given by the quotient of the coproduct of props $\cb_k^\op+\cb_k$ by the following equations, for all invertible $a \in k$ (where the generators of $\cb_k^\op$ are drawn by reflecting those of $\cb_k$ along the x-axis):
$$
\begin{tabular}{c}
\begin{tikzpicture}
	\begin{pgfonlayer}{nodelayer}
		\node [style=none] (1) at (5, 1.25) {};
		\node [style=scalarop] (2) at (5, 1.75) {$a$};
		\node [style=none] (3) at (5, 2.75) {};
		\node [style=scalar] (5) at (5, 2.25) {$a$};
	\end{pgfonlayer}
	\begin{pgfonlayer}{edgelayer}
		\draw (1.center) to (2);
		\draw (5) to (3.center);
		\draw (2) to (5);
	\end{pgfonlayer}
\end{tikzpicture}
=
\begin{tikzpicture}
	\begin{pgfonlayer}{nodelayer}
		\node [style=none] (6) at (6, 1.25) {};
		\node [style=none] (8) at (6, 2.75) {};
		\node [style=scalarop] (10) at (6, 2.25) {$a$};
		\node [style=scalar] (11) at (6, 1.75) {$a$};
	\end{pgfonlayer}
	\begin{pgfonlayer}{edgelayer}
		\draw (6.center) to (11);
		\draw (11) to (10);
		\draw (10) to (8.center);
	\end{pgfonlayer}
\end{tikzpicture}
=
\begin{tikzpicture}
	\begin{pgfonlayer}{nodelayer}
		\node [style=none] (12) at (7, 2.75) {};
		\node [style=none] (13) at (7, 1.25) {};
	\end{pgfonlayer}
	\begin{pgfonlayer}{edgelayer}
		\draw (13.center) to (12.center);
	\end{pgfonlayer}
\end{tikzpicture}
\hspace*{.1cm}
\begin{tikzpicture}
	\begin{pgfonlayer}{nodelayer}
		\node [style=Z] (0) at (5, 2) {};
		\node [style=Z] (1) at (5.5, 2.5) {};
		\node [style=none] (2) at (5, 1.5) {};
		\node [style=none] (3) at (5.75, 2) {};
		\node [style=none] (4) at (5.75, 1.5) {};
		\node [style=none] (5) at (4.75, 2.5) {};
		\node [style=none] (6) at (5.5, 3) {};
		\node [style=none] (7) at (4.75, 3) {};
	\end{pgfonlayer}
	\begin{pgfonlayer}{edgelayer}
		\draw (7.center) to (5.center);
		\draw [in=120, out=-90] (5.center) to (0);
		\draw (0) to (2.center);
		\draw (0) to (1);
		\draw (1) to (6.center);
		\draw [in=90, out=-60] (1) to (3.center);
		\draw (3.center) to (4.center);
	\end{pgfonlayer}
\end{tikzpicture}
=
\begin{tikzpicture}
	\begin{pgfonlayer}{nodelayer}
		\node [style=Z] (1) at (5.5, 2.5) {};
		\node [style=none] (3) at (5.75, 2) {};
		\node [style=none] (7) at (5.25, 2) {};
		\node [style=Z] (8) at (5.5, 3) {};
		\node [style=none] (9) at (5.75, 3.5) {};
		\node [style=none] (10) at (5.25, 3.5) {};
	\end{pgfonlayer}
	\begin{pgfonlayer}{edgelayer}
		\draw [in=90, out=-60] (1) to (3.center);
		\draw [in=-120, out=90] (7.center) to (1);
		\draw [in=-90, out=60] (8) to (9.center);
		\draw [in=120, out=-90] (10.center) to (8);
		\draw (1) to (8);
	\end{pgfonlayer}
\end{tikzpicture}
=
\begin{tikzpicture}
	\begin{pgfonlayer}{nodelayer}
		\node [style=Z] (0) at (5.5, 2) {};
		\node [style=Z] (1) at (5, 2.5) {};
		\node [style=none] (2) at (5.5, 1.5) {};
		\node [style=none] (3) at (4.75, 2) {};
		\node [style=none] (4) at (4.75, 1.5) {};
		\node [style=none] (5) at (5.75, 2.5) {};
		\node [style=none] (6) at (5, 3) {};
		\node [style=none] (7) at (5.75, 3) {};
	\end{pgfonlayer}
	\begin{pgfonlayer}{edgelayer}
		\draw (7.center) to (5.center);
		\draw [in=60, out=-90] (5.center) to (0);
		\draw (0) to (2.center);
		\draw (0) to (1);
		\draw (1) to (6.center);
		\draw [in=90, out=-120] (1) to (3.center);
		\draw (3.center) to (4.center);
	\end{pgfonlayer}
\end{tikzpicture}
\hspace*{.2cm}
\begin{tikzpicture}
	\begin{pgfonlayer}{nodelayer}
		\node [style=X] (0) at (5, 2) {};
		\node [style=X] (1) at (5.5, 2.5) {};
		\node [style=none] (2) at (5, 1.5) {};
		\node [style=none] (3) at (5.75, 2) {};
		\node [style=none] (4) at (5.75, 1.5) {};
		\node [style=none] (5) at (4.75, 2.5) {};
		\node [style=none] (6) at (5.5, 3) {};
		\node [style=none] (7) at (4.75, 3) {};
	\end{pgfonlayer}
	\begin{pgfonlayer}{edgelayer}
		\draw (7.center) to (5.center);
		\draw [in=120, out=-90] (5.center) to (0);
		\draw (0) to (2.center);
		\draw (0) to (1);
		\draw (1) to (6.center);
		\draw [in=90, out=-60] (1) to (3.center);
		\draw (3.center) to (4.center);
	\end{pgfonlayer}
\end{tikzpicture}
=
\begin{tikzpicture}
	\begin{pgfonlayer}{nodelayer}
		\node [style=X] (1) at (5.5, 2.5) {};
		\node [style=none] (3) at (5.75, 2) {};
		\node [style=none] (7) at (5.25, 2) {};
		\node [style=X] (8) at (5.5, 3) {};
		\node [style=none] (9) at (5.75, 3.5) {};
		\node [style=none] (10) at (5.25, 3.5) {};
	\end{pgfonlayer}
	\begin{pgfonlayer}{edgelayer}
		\draw [in=90, out=-60] (1) to (3.center);
		\draw [in=-120, out=90] (7.center) to (1);
		\draw [in=-90, out=60] (8) to (9.center);
		\draw [in=120, out=-90] (10.center) to (8);
		\draw (1) to (8);
	\end{pgfonlayer}
\end{tikzpicture}
=
\begin{tikzpicture}
	\begin{pgfonlayer}{nodelayer}
		\node [style=X] (0) at (5.5, 2) {};
		\node [style=X] (1) at (5, 2.5) {};
		\node [style=none] (2) at (5.5, 1.5) {};
		\node [style=none] (3) at (4.75, 2) {};
		\node [style=none] (4) at (4.75, 1.5) {};
		\node [style=none] (5) at (5.75, 2.5) {};
		\node [style=none] (6) at (5, 3) {};
		\node [style=none] (7) at (5.75, 3) {};
	\end{pgfonlayer}
	\begin{pgfonlayer}{edgelayer}
		\draw (7.center) to (5.center);
		\draw [in=60, out=-90] (5.center) to (0);
		\draw (0) to (2.center);
		\draw (0) to (1);
		\draw (1) to (6.center);
		\draw [in=90, out=-120] (1) to (3.center);
		\draw (3.center) to (4.center);
	\end{pgfonlayer}
\end{tikzpicture}
\hspace*{.1cm}
\begin{tikzpicture}
	\begin{pgfonlayer}{nodelayer}
		\node [style=Z] (3) at (5.5, 3.25) {};
		\node [style=Z] (6) at (5.5, 4) {};
		\node [style=none] (7) at (5.5, 4.5) {};
		\node [style=none] (8) at (5.5, 2.75) {};
	\end{pgfonlayer}
	\begin{pgfonlayer}{edgelayer}
		\draw (8.center) to (3);
		\draw [in=225, out=135, looseness=1.25] (3) to (6);
		\draw [in=45, out=-45, looseness=1.25] (6) to (3);
		\draw (6) to (7.center);
	\end{pgfonlayer}
\end{tikzpicture}
=
\begin{tikzpicture}
	\begin{pgfonlayer}{nodelayer}
		\node [style=X] (3) at (5.5, 3.25) {};
		\node [style=X] (6) at (5.5, 4) {};
		\node [style=none] (7) at (5.5, 4.5) {};
		\node [style=none] (8) at (5.5, 2.75) {};
	\end{pgfonlayer}
	\begin{pgfonlayer}{edgelayer}
		\draw (8.center) to (3);
		\draw [in=225, out=135, looseness=1.25] (3) to (6);
		\draw [in=45, out=-45, looseness=1.25] (6) to (3);
		\draw (6) to (7.center);
	\end{pgfonlayer}
\end{tikzpicture}
=
\begin{tikzpicture}
	\begin{pgfonlayer}{nodelayer}
		\node [style=none] (7) at (5.5, 4.5) {};
		\node [style=none] (8) at (5.5, 2.75) {};
	\end{pgfonlayer}
	\begin{pgfonlayer}{edgelayer}
		\draw (8.center) to (7.center);
	\end{pgfonlayer}
\end{tikzpicture}
\hspace*{.1cm}
\begin{tikzpicture}
	\begin{pgfonlayer}{nodelayer}
		\node [style=Z] (0) at (6, 3.25) {};
		\node [style=Z] (1) at (6, 2.75) {};
		\node [style=none] (2) at (5.75, 3.75) {};
		\node [style=none] (3) at (6.25, 4.25) {};
		\node [style=none] (4) at (5.75, 4.25) {};
		\node [style=scalar] (5) at (6.25, 3.75) {$-1$};
	\end{pgfonlayer}
	\begin{pgfonlayer}{edgelayer}
		\draw (1) to (0);
		\draw [in=-90, out=30] (0) to (5);
		\draw (5) to (3.center);
		\draw (4.center) to (2.center);
		\draw [in=150, out=-90] (2.center) to (0);
	\end{pgfonlayer}
\end{tikzpicture}
=
\begin{tikzpicture}
	\begin{pgfonlayer}{nodelayer}
		\node [style=X] (0) at (6, 3.25) {};
		\node [style=X] (1) at (6, 2.75) {};
		\node [style=none] (2) at (5.75, 4) {};
		\node [style=none] (3) at (6.25, 4) {};
	\end{pgfonlayer}
	\begin{pgfonlayer}{edgelayer}
		\draw (1) to (0);
		\draw [in=-90, out=30] (0) to (3.center);
		\draw [in=150, out=-90] (2.center) to (0);
	\end{pgfonlayer}
\end{tikzpicture}
\hspace*{.1cm}
\begin{tikzpicture}
	\begin{pgfonlayer}{nodelayer}
		\node [style=X] (4) at (7.5, 3.75) {};
		\node [style=X] (5) at (7.5, 4.25) {};
		\node [style=none] (6) at (7.25, 3.25) {};
		\node [style=none] (7) at (7.75, 2.75) {};
		\node [style=none] (8) at (7.25, 2.75) {};
		\node [style=scalarop] (9) at (7.75, 3.25) {$-1$};
	\end{pgfonlayer}
	\begin{pgfonlayer}{edgelayer}
		\draw (5) to (4);
		\draw [in=90, out=-30] (4) to (9);
		\draw (9) to (7.center);
		\draw (8.center) to (6.center);
		\draw [in=-150, out=90] (6.center) to (4);
	\end{pgfonlayer}
\end{tikzpicture}
=
\begin{tikzpicture}
	\begin{pgfonlayer}{nodelayer}
		\node [style=Z] (0) at (6, 3.5) {};
		\node [style=Z] (1) at (6, 4) {};
		\node [style=none] (2) at (5.75, 2.75) {};
		\node [style=none] (3) at (6.25, 2.75) {};
	\end{pgfonlayer}
	\begin{pgfonlayer}{edgelayer}
		\draw (1) to (0);
		\draw [in=90, out=-30] (0) to (3.center);
		\draw [in=-150, out=90] (2.center) to (0);
	\end{pgfonlayer}
\end{tikzpicture}
\end{tabular}
$$
\end{definition}

\begin{theorem}[{\cite[Thm.\ 3.49]{ih}}]
$\ih_k$ is a presentation for $\LinRel_k$.
\end{theorem}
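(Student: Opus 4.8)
The plan is to exhibit a PROP isomorphism $\ih_k \cong \LinRel_k$ by the standard generators-and-relations recipe: construct an interpretation functor on generators, then prove it sound, full, and faithful.

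First I would define the interpretation functor $\llbracket - \rrbracket \colon \ih_k \to \LinRel_k$ on generators, sending the white ($Z$) structure to the copy relation $\{(x ; x,x)\}$ together with its unit and counit, the grey ($X$) structure to the addition relation $\{(x,y ; x+y)\}$ with its unit and counit, and each scalar box $a$ to the relation $\{(x ; ax)\}$, with the transposed generators inherited from $\cb_k^\op$ going to the corresponding relational converses. \emph{Soundness} then amounts to checking that every defining equation of $\ih_k$ -- the two bicommutative-bialgebra copies, the Frobenius/special equations coupling $\cb_k$ with $\cb_k^\op$, and the scalar and $-1$ (antipode) equations -- holds as an equality of subspaces in $\LinRel_k$. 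Since $\ih_k$ is presented by generators and relations, verifying these finitely many identities semantically suffices to extend $\llbracket-\rrbracket$ to a well-defined symmetric monoidal functor.

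For \emph{fullness}, I would use the Proposition that $\cb_k$ presents $\Mat_k$. The two evident sub-PROPs of $\ih_k$ arising from $\cb_k$ and $\cb_k^\op$ are carried by $\llbracket-\rrbracket$ onto the graphs of all matrices and their relational converses, respectively. Since every linear subspace $V \subseteq k^n \oplus k^m$ admits a \emph{range} (span) representation $V = \{(Ax, Bx) : x \in k^r\}$ for matrices $A,B$ -- equivalently a \emph{kernel} (cospan) representation $V = \{(u,v) : Cu = Dv\}$ -- it is the relational composite of a matrix graph with a co-matrix graph, both of which lie in the image. Hence $\llbracket-\rrbracket$ is full.

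The hard part will be \emph{faithfulness}: showing that $\llbracket d \rrbracket = \llbracket d' \rrbracket$ forces $d = d'$ in $\ih_k$. My strategy is a normal-form theorem. Using the $\ih_k$ axioms I would rewrite an arbitrary diagram into a canonical \emph{span form}, i.e. a $\cb_k^\op$-morphism followed by a $\cb_k$-morphism (a pullback presentation of the denoted subspace), and then apply the matrix axioms inherited from $\cb_k$ -- in particular the invertible-scalar equations -- to bring the representing matrix into reduced row-echelon form, a unique invariant of the subspace. The engine is Lack's theory of composing PROPs through distributive laws: one shows $\LinRel_k \cong \Mat_k^\op \mathbin{;} \Mat_k$ via the distributive law encoding pullbacks of matrices (dually $\Mat_k \mathbin{;} \Mat_k^\op$ via pushouts), and that the coupling equations of $\ih_k$ are exactly those needed to generate this law. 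The genuine obstacle is not the \emph{existence} of span/pullback forms but the \emph{completeness} of the rewriting -- proving that the given axioms suffice to transport every diagram to its canonical form and that these forms biject with subspaces. Establishing this termination and confluence, equivalently that $\ih_k$ is precisely the distributive-law composite rather than a proper quotient of it, is where the real work lies.
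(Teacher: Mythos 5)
This theorem is imported by the paper from \cite{ih} without its own proof, and your outline reproduces the strategy of that reference: interpret the white/grey generators as copying and addition, verify soundness of the axiom schemes, obtain fullness from the span (equivalently cospan) representation of an arbitrary subspace $V=\{(Ax,Bx)\}$, and establish faithfulness by exhibiting $\LinRel_k$ as the composite $\Mat_k^{\op}\mathbin{;}\Mat_k$ via the pullback distributive law and showing the $\ih_k$ axioms present exactly that composite. Your approach is essentially the same as the cited proof, and you correctly locate the genuine difficulty in showing that the stated axioms suffice to rewrite every diagram into span normal form and that such forms are classified by subspaces.
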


There is an interesting folklore result which was remarked in \cite{cole}: \footnote{One should note that the black box is the antipode, {\em not} the Fourier transform/Hadamard gate.}:
\begin{lemma}
\label{lemma:phasefree}
For a prime number $p$, $\ih_{\F_p}$ is a presentation for the phase-free, Fourier-free $p$-dimensional qudit ZX-calculus, modulo invertible scalars.
\end{lemma}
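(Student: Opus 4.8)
The plan is to exhibit a faithful symmetric monoidal functor from $\LinRel_{\F_p}$ into the prop of complex matrices $\Mat_\C$ (graded by powers of $p$, i.e.\ the object $n$ is sent to $\C^{p^n}$), taken modulo invertible scalars, whose image is exactly the phase-free, Fourier-free fragment of the $p$-dimensional qudit $\ZX$-calculus. Since $\ih_{\F_p}$ is a presentation of $\LinRel_{\F_p}$ by the cited theorem, establishing that this functor is a (scalar-quotiented) isomorphism onto that fragment immediately yields the claim.

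Concretely, I would send a linear relation $R \subseteq \F_p^n \oplus \F_p^m$ to the class, modulo invertible scalars, of its indicator matrix $M_R$, defined by $(M_R)_{\vec y, \vec x} = 1$ when $(\vec x, \vec y) \in R$ and $0$ otherwise. First I would check the generators: the white copy comonoid goes to the Z-spider $\sum_j |j\rangle^{\otimes m}\langle j|^{\otimes n}$ (the relation forcing all legs equal), the grey addition monoid goes to the X-spider (the relation $\sum \mathrm{out} = \sum \mathrm{in}$, whose matrix has entries $[\sum_j y_j = \sum_i x_i]$), and the scalar $a$ goes to the multiplier $\sum_x |ax\rangle\langle x|$. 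These are precisely the generators of the phase-free, Fourier-free calculus, so the image is exactly that fragment. Here primality of $p$ is used twice: so that $\F_p$ is a field (making $\LinRel_{\F_p}$ and $\ih_{\F_p}$ well defined), and so that the character-sum identity $\sum_{j \in \Z_p}\omega^{jk} = p\,[k=0]$ holds, which is what identifies the Fourier conjugate of the Z-spider with the additive X-spider.

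The two substantive steps are functoriality and faithfulness. For functoriality, the only nontrivial point is relational composition: for composable $R, S$ one computes $(M_R M_S)_{\vec z, \vec x} = \lvert\{\vec y : (\vec x, \vec y)\in S,\ (\vec y,\vec z)\in R\}\rvert$, and this fibre is either empty (when $(\vec x,\vec z)\notin R\circ S$) or a coset of the fixed interface space $W = \{\vec y : (\vec 0,\vec y)\in S,\ (\vec y,\vec 0)\in R\}$, hence of constant size $p^{\dim W}$. Thus $M_R M_S = p^{\dim W} M_{R\circ S}$, which equals $M_{R\circ S}$ modulo invertible scalars, while the monoidal case $M_{R\oplus S} = M_R\otimes M_S$ is exact. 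This is exactly where the scalar quotient is forced, and it is why the soundness of the \emph{special} and bialgebra equations of $\ih_{\F_p}$ (which differ from their $\Mat_\C$ counterparts by dimension factors) holds only up to scalars. For faithfulness, note every $M_R$ is a genuine $0/1$ matrix whose entry at $(\vec 0,\vec 0)$ is $1$; hence if $c\,M_R = M_{R'}$ for an invertible scalar $c$ then $c = 1$ and the supports coincide, so $R = R'$.

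The main obstacle I anticipate is not any single calculation but pinning down the target precisely: one must fix the semantics, the grading, and the scalar quotient so that the stated image really is the phase-free, Fourier-free qudit $\ZX$-calculus and nothing larger or smaller. In particular one has to verify that no phase gates or Fourier/Hadamard generators sneak into — or are needed to generate — the image, equivalently that the multipliers $a\in\F_p$ together with the two spiders already generate the whole image, and that quotienting by invertible scalars does not identify distinct diagrams beyond what the faithfulness argument above controls. Once the functor is set up, the lemma follows by transporting the completeness of $\ih_{\F_p}$ for $\LinRel_{\F_p}$ across this isomorphism.
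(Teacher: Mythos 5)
The paper itself offers no proof of this lemma: it is presented as a folklore observation with a pointer to the literature, so there is no in-paper argument to measure yours against. Judged on its own terms, your construction is the standard and correct one, and most of it goes through: the indicator-matrix functor $R \mapsto M_R$, the coset argument giving $M_R M_S = p^{\dim W} M_{R\circ S}$ (so relational composition is respected on the nose in the scalar-quotiented target), and the faithfulness argument via the $(\vec 0,\vec 0)$ entry (every linear subspace contains the origin, so $M_R$ is a nonzero $0/1$ matrix and the normalising scalar is forced to be $1$) are all sound.

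The one genuine gap is the step you yourself flag as ``the main obstacle'' and then leave open: identifying the image with \emph{exactly} the phase-free, Fourier-free fragment. The two spider families of $\ih_{\F_p}$ land on the phase-free Z- and X-spiders, but $\ih_{\F_p}$ also has the multiplier generators $a \in \F_p$ (and the antipode), which are not primitive generators of that fragment; you must show they are redundant, i.e.\ that the relation $\{(x,ax)\}$ is a composite of spiders. Over $\F_p$ this is immediate --- write $a = 1 + \cdots + 1$, copy the input onto $a$ wires with the white comonoid and sum them with the grey monoid, and check that the resulting indicator matrix is exactly $\sum_x |ax\rangle\langle x|$ --- but this is the decisive use of primality, not the character-sum identity you cite (which holds for every dimension $d$, composite or not). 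Primality is really doing two things: it makes $\F_p$ a field so that $\LinRel_{\F_p}$ and $\ih_{\F_p}$ are defined and every nonzero multiplier is invertible (as the $\ih$ axioms require), and it makes the scalar generators expressible from the spiders so that the image neither exceeds nor falls short of the phase-free fragment. Adding that one computation closes the argument.
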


The following theorem will be useful for graphical manipulations:
\begin{theorem}[\cite{spider} (Spider Theorem)]
All connected components of special commutative Frobenius algebras with the same arity are equal.
\end{theorem}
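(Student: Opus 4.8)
The plan is to prove that every \emph{connected} string diagram built from the generators of a special commutative Frobenius algebra (SCFA) --- the multiplication $\mu$, unit $\eta$, comultiplication $\delta$, counit $\epsilon$, together with the identity wires and swaps of the ambient symmetric monoidal category --- with $n$ inputs and $m$ outputs equals a single canonical morphism $S_{n,m} \colon n \to m$ that depends only on the arity $(n,m)$. Since two connected diagrams of the same arity then both equal $S_{n,m}$, they equal each other.

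First I would fix the normal form. Define $S_{n,m}$ by folding the $n$ inputs into a single wire with an (arbitrarily bracketed) tree of multiplications, using $\eta$ when $n=0$, and then unfolding that wire into the $m$ outputs with a tree of comultiplications, using $\epsilon$ when $m=0$. Associativity and commutativity of the monoid, and dually of the comonoid, make $S_{n,m}$ independent of the bracketing and of any permutation of its legs, so it is well defined; in particular every generator is already a spider ($\mu = S_{2,1}$, $\eta = S_{0,1}$, $\delta = S_{1,2}$, $\epsilon = S_{1,0}$, and a bare wire is $S_{1,1}$).

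The technical core is a \emph{fusion lemma}: if a spider $S_{n,m}$ and a spider $S_{p,q}$ are joined along $k \geq 1$ wires, their composite equals $S_{n+p-k,\,m+q-k}$. I would prove this in two moves. Along the first connecting wire, the Frobenius law $(\mu \otimes \mathrm{id})(\mathrm{id} \otimes \delta) = \delta\mu = (\mathrm{id}\otimes\mu)(\delta\otimes\mathrm{id})$ lets the unfolding tree of the first spider merge with the folding tree of the second into one fold-then-unfold tree, producing a single spider of arity $(n+p-1,\,m+q-1)$. Each of the remaining $k-1$ connecting wires then becomes a self-loop joining an output leg of this merged spider to one of its input legs; such a loop is precisely $\mu\circ\delta$ on the central wire, which speciality collapses to the identity, reducing the arity by $(1,1)$ each time and yielding $S_{n+p-k,\,m+q-k}$. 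I expect the bookkeeping of legs, and the verification that commutativity suffices to route each connecting wire into the position where Frobenius and speciality apply, to be the main obstacle.

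Finally I would conclude by induction on the number of spider-nodes in the diagram, regarding each generator as a spider. If there are zero or one nodes the claim follows from the normal form, with any residual wire crossings absorbed by (co)commutativity. If there are at least two, connectedness guarantees two spider-nodes joined by at least one wire; the fusion lemma replaces them by a single spider, strictly decreasing the node count while preserving connectedness and the overall arity. Iterating collapses the whole diagram to one spider, which by its arity must be $S_{n,m}$, completing the proof.
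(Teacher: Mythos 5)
The paper does not actually prove this statement: it is imported verbatim from the cited reference, so there is no in-paper argument to compare against. Your proposal is the standard proof of the spider theorem (canonical fold--unfold normal form, a fusion lemma, induction on the number of generator nodes) and the overall strategy is sound. The one genuine wrinkle is the order of operations inside your fusion lemma: you apply the Frobenius law along the first connecting wire and then describe the remaining $k-1$ wires as ``self-loops'' joining an output leg of the merged spider to one of its own input legs. A wire from an output of a node back to an input of the same node is a feedback loop, which is not a priori a morphism of a plain symmetric monoidal category --- you would need a trace or compact structure to even write those intermediate diagrams down. The cleaner order is the reverse: use co/commutativity and co/associativity to group the $k$ connecting wires so that the relevant sub-diagram is $f_k \circ u_k$ (fold back together $k$ wires that were just unfolded); show $f_k \circ u_k = \mathrm{id}$ by an easy induction whose base case is exactly speciality $\mu\delta = \mathrm{id}$; and only then apply the Frobenius law once to the single remaining connecting wire to merge the two trees. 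With that reordering every intermediate diagram is progressive, and the ``bookkeeping'' you flag is precisely these co/associativity and co/commutativity rearrangements. (Alternatively, one can first observe that the Frobenius structure equips its object with a self-dual compact structure, with cup $\delta\eta$ and cap $\epsilon\mu$, which legitimises your self-loops --- but then that compact structure must be established before the fusion lemma rather than after.)
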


That is to say, we can unambiguously refer to these connected components by spiders.  In $\ih_k$, there are two spiders, so for example we can apply spider fusion to the following circuit:\hspace*{.2cm}
$
\begin{tikzpicture}
	\begin{pgfonlayer}{nodelayer}
		\node [style=Z] (0) at (6, 4) {};
		\node [style=Z] (1) at (6, 3.5) {};
		\node [style=X] (2) at (6.75, 3.5) {};
		\node [style=X] (3) at (6.75, 4) {};
		\node [style=none] (4) at (6.5, 4.5) {};
		\node [style=none] (5) at (7, 4.5) {};
		\node [style=X] (6) at (6.25, 3) {};
		\node [style=none] (7) at (5.75, 3) {};
		\node [style=none] (8) at (6.25, 2.5) {};
		\node [style=none] (9) at (7, 3) {};
		\node [style=none] (10) at (7, 2.5) {};
		\node [style=none] (11) at (5.75, 2.5) {};
	\end{pgfonlayer}
	\begin{pgfonlayer}{edgelayer}
		\draw [in=240, out=90, looseness=0.75] (7.center) to (1);
		\draw (1) to (6);
		\draw (6) to (8.center);
		\draw (6) to (2);
		\draw (2) to (3);
		\draw [in=-90, out=60] (3) to (5.center);
		\draw [in=-90, out=120] (3) to (4.center);
		\draw (1) to (0);
		\draw (11.center) to (7.center);
		\draw (10.center) to (9.center);
		\draw [in=300, out=90] (9.center) to (2);
	\end{pgfonlayer}
\end{tikzpicture}
=
\begin{tikzpicture}
	\begin{pgfonlayer}{nodelayer}
		\node [style=Z] (13) at (8.25, 3.75) {};
		\node [style=X] (14) at (8.75, 3.25) {};
		\node [style=X] (15) at (8.75, 3.25) {};
		\node [style=none] (16) at (8.75, 4.5) {};
		\node [style=none] (17) at (9.25, 4.5) {};
		\node [style=X] (18) at (8.75, 3.25) {};
		\node [style=none] (19) at (8, 3.25) {};
		\node [style=none] (20) at (8.5, 2.5) {};
		\node [style=none] (21) at (9.25, 2.5) {};
		\node [style=none] (23) at (8, 2.5) {};
	\end{pgfonlayer}
	\begin{pgfonlayer}{edgelayer}
		\draw [in=240, out=90, looseness=0.75] (19.center) to (13);
		\draw (13) to (18);
		\draw [in=90, out=-150] (18) to (20.center);
		\draw [in=-90, out=45, looseness=0.75] (15) to (17.center);
		\draw [in=-90, out=90, looseness=0.50] (15) to (16.center);
		\draw (23.center) to (19.center);
		\draw [in=-30, out=90] (21.center) to (14);
	\end{pgfonlayer}
\end{tikzpicture}
$

We  shall use the following results:
\begin{lemma}[{\cite[(D4)]{ih}}, {\cite[p. 4]{control}}, {\cite[(D3)]{ih}}]
$$
\begin{tikzpicture}
	\begin{pgfonlayer}{nodelayer}
		\node [style=Z] (0) at (6.75, 4.5) {};
		\node [style=X] (1) at (7.75, 3.5) {};
		\node [style=none] (3) at (6.75, 5) {};
		\node [style=none] (4) at (8, 5) {};
		\node [style=none] (5) at (6.5, 3) {};
		\node [style=none] (6) at (7.75, 3) {};
	\end{pgfonlayer}
	\begin{pgfonlayer}{edgelayer}
		\draw (3.center) to (0);
		\draw [in=60, out=-90] (4.center) to (1);
		\draw (1) to (6.center);
		\draw [in=90, out=-120] (0) to (5.center);
		\draw (1) to (0);
	\end{pgfonlayer}
\end{tikzpicture}
=
\begin{tikzpicture}
	\begin{pgfonlayer}{nodelayer}
		\node [style=Z] (0) at (6.75, 3.5) {};
		\node [style=X] (1) at (7.75, 4.5) {};
		\node [style=scalar] (2) at (7.25, 4) {$-1$};
		\node [style=none] (3) at (6.75, 3) {};
		\node [style=none] (4) at (8, 3) {};
		\node [style=none] (5) at (6.5, 5) {};
		\node [style=none] (6) at (7.75, 5) {};
	\end{pgfonlayer}
	\begin{pgfonlayer}{edgelayer}
		\draw (3.center) to (0);
		\draw (0) to (2);
		\draw (2) to (1);
		\draw [in=-60, out=90] (4.center) to (1);
		\draw (1) to (6.center);
		\draw [in=-90, out=120] (0) to (5.center);
	\end{pgfonlayer}
\end{tikzpicture}
\hspace*{1cm}
\begin{tikzpicture}
	\begin{pgfonlayer}{nodelayer}
		\node [style=scalar] (2) at (7.75, 3.25) {$a$};
		\node [style=Z] (3) at (8, 3.75) {};
		\node [style=Z] (4) at (7.5, 2.75) {};
		\node [style=none] (5) at (8.25, 3.25) {};
		\node [style=none] (6) at (8.25, 2.5) {};
		\node [style=none] (7) at (7.25, 3.25) {};
		\node [style=none] (8) at (7.25, 4) {};
	\end{pgfonlayer}
	\begin{pgfonlayer}{edgelayer}
		\draw (8.center) to (7.center);
		\draw [in=120, out=-90, looseness=0.75] (7.center) to (4);
		\draw [in=-90, out=45] (4) to (2);
		\draw [in=-135, out=90] (2) to (3);
		\draw [in=90, out=-60, looseness=0.75] (3) to (5.center);
		\draw (5.center) to (6.center);
	\end{pgfonlayer}
\end{tikzpicture}
=
\begin{tikzpicture}
	\begin{pgfonlayer}{nodelayer}
		\node [style=scalar] (2) at (7.75, 3.25) {$a$};
		\node [style=X] (3) at (8, 3.75) {};
		\node [style=X] (4) at (7.5, 2.75) {};
		\node [style=none] (5) at (8.25, 3.25) {};
		\node [style=none] (6) at (8.25, 2.5) {};
		\node [style=none] (7) at (7.25, 3.25) {};
		\node [style=none] (8) at (7.25, 4) {};
	\end{pgfonlayer}
	\begin{pgfonlayer}{edgelayer}
		\draw (8.center) to (7.center);
		\draw [in=120, out=-90, looseness=0.75] (7.center) to (4);
		\draw [in=-90, out=45] (4) to (2);
		\draw [in=-135, out=90] (2) to (3);
		\draw [in=90, out=-60, looseness=0.75] (3) to (5.center);
		\draw (5.center) to (6.center);
	\end{pgfonlayer}
\end{tikzpicture}
=
\begin{tikzpicture}
	\begin{pgfonlayer}{nodelayer}
		\node [style=scalarop] (2) at (7.75, 3.25) {$a$};
		\node [style=none] (6) at (7.75, 2.5) {};
		\node [style=none] (8) at (7.75, 4) {};
	\end{pgfonlayer}
	\begin{pgfonlayer}{edgelayer}
		\draw (6.center) to (2);
		\draw (2) to (8.center);
	\end{pgfonlayer}
\end{tikzpicture}
\hspace*{1cm}
\begin{tikzpicture}
	\begin{pgfonlayer}{nodelayer}
		\node [style=none] (2) at (7.75, 3.75) {};
		\node [style=none] (3) at (7.75, 2.75) {};
		\node [style=scalarop] (5) at (7.75, 3.25) {$-1$};
	\end{pgfonlayer}
	\begin{pgfonlayer}{edgelayer}
		\draw (5) to (3.center);
		\draw (5) to (2.center);
	\end{pgfonlayer}
\end{tikzpicture}
=
\begin{tikzpicture}
	\begin{pgfonlayer}{nodelayer}
		\node [style=none] (2) at (7.75, 3.75) {};
		\node [style=none] (3) at (7.75, 2.75) {};
		\node [style=scalar] (5) at (7.75, 3.25) {$-1$};
	\end{pgfonlayer}
	\begin{pgfonlayer}{edgelayer}
		\draw (5) to (3.center);
		\draw (5) to (2.center);
	\end{pgfonlayer}
\end{tikzpicture}
=
\begin{tikzpicture}
	\begin{pgfonlayer}{nodelayer}
		\node [style=Z] (0) at (7, 3.5) {};
		\node [style=X] (2) at (7.25, 4) {};
		\node [style=none] (4) at (7.5, 3.5) {};
		\node [style=none] (5) at (7.5, 3.25) {};
		\node [style=none] (6) at (6.75, 4) {};
		\node [style=none] (7) at (6.75, 4.25) {};
	\end{pgfonlayer}
	\begin{pgfonlayer}{edgelayer}
		\draw (2) to (0);
		\draw [in=120, out=-90, looseness=0.75] (6.center) to (0);
		\draw [in=300, out=90, looseness=0.75] (4.center) to (2);
		\draw (5.center) to (4.center);
		\draw (6.center) to (7.center);
	\end{pgfonlayer}
\end{tikzpicture}
=
\begin{tikzpicture}
	\begin{pgfonlayer}{nodelayer}
		\node [style=Z] (0) at (7, 4) {};
		\node [style=X] (2) at (7.25, 3.5) {};
		\node [style=none] (4) at (7.5, 4) {};
		\node [style=none] (5) at (7.5, 4.25) {};
		\node [style=none] (6) at (6.75, 3.5) {};
		\node [style=none] (7) at (6.75, 3.25) {};
	\end{pgfonlayer}
	\begin{pgfonlayer}{edgelayer}
		\draw (2) to (0);
		\draw [in=-120, out=90, looseness=0.75] (6.center) to (0);
		\draw [in=-300, out=-90, looseness=0.75] (4.center) to (2);
		\draw (5.center) to (4.center);
		\draw (6.center) to (7.center);
	\end{pgfonlayer}
\end{tikzpicture}
$$
\end{lemma}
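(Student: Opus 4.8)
The plan is to derive each of the displayed equalities directly from the generating relations of $\ih_k$ listed above, using the Spider Theorem to fuse connected nodes of a single colour at will. Since $\ih_k$ is a presentation of $\LinRel_k$, I would keep the intended semantics at hand purely as a correctness check: the white spider forces all of its legs equal, the grey spider forces its legs to sum to zero, the box scalar $a$ is multiplication by $a$, and the antipode (the $-1$ scalar, drawn as a grey box) relates the two bialgebras as their Hopf antipode. Each claimed equation is a closed rewrite between finite diagrams, so it suffices to exhibit a rewrite sequence; alternatively, by completeness one could simply evaluate both sides as subspaces of $k^2\oplus k^2$ and compare, which I would use to confirm the syntactic derivations.

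For the first equality I would read both sides as a single white spider joined by one wire to a single grey spider, each carrying two boundary legs. Using Frobenius flexibility (the Spider Theorem) the two sides have the same underlying spider skeleton, so the only content is the reflection together with the antipode on the connecting wire. I would obtain it by expanding the shared edge with the bicommutative bialgebra law and then collapsing it again on the reflected side, the discrepancy being precisely one application of the Hopf/antipode relation among the defining $-1$ equations of $\ih_k$. I expect this to be the one genuinely nontrivial step, and the main obstacle is tracking exactly where the $-1$ lands after the fusions.

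The middle block consists of transposition identities. The gadget there bends a scalar $a$ around the self-dual compact structure of a \emph{single} colour, using that colour's cup and cap; by the snake equations (which follow from the special Frobenius laws via the Spider Theorem) this equals the colour-transpose of $a$, and by definition the transpose of a $\cb_k$ scalar is the matching $\cb_k^\op$ scalar, namely the co-scalar $a$ on a bare wire. I would run the white and grey versions in parallel, the arguments being identical after swapping colours.

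Finally, the antipode identities follow from the defining relation $a^\op\mathbin{;}a=\mathrm{id}$ of $\ih_k$, valid for invertible $a$: specialising to $a=-1$ gives $(-1)^\op=(-1)^{-1}=-1$, which is the first equality. To rewrite this common value as the explicit Z--X gadget in the remaining equalities, I would start from the presentation's $-1$ equations expressing the antipode through paired white and grey cups, and then yank those cups straight with the snake equations to reach the two stated forms, which are vertical reflections of each other. No input beyond the listed axioms and the Spider Theorem is required; the only delicate part throughout is the orientation- and colour-bookkeeping in the diagrams.
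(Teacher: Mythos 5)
The paper does not actually prove this lemma: it is imported wholesale by citation from the interacting-Hopf-algebras literature (the (D3)/(D4) laws of \cite{ih} and the identity on p.~4 of \cite{control}), so there is no in-paper argument to match yours against. That said, your proposal does contain a complete proof, but it is the one you relegate to a ``correctness check'': since $\ih_k$ is sound and complete for $\LinRel_k$ (Theorem 3.49 of \cite{ih}, already invoked by the paper), it suffices to compute the denotations of both sides of each equation as subspaces and observe they agree --- e.g.\ both sides of the first equation denote $\{((a,b),(a,b-a))\}$ and both transposition gadgets denote the converse relation $\{(ax,x)\}$. You should promote that to the primary argument rather than an afterthought; it is rigorous, short, and exactly in the spirit of the paper, which explicitly ``defers to the completeness of $\ih_k \cong \LinRel_k$'' for such identities.

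Your syntactic sketches are less reliable. The third block is fine: $(-1)^{\op}=(-1)$ follows from $(-1)^{\op};(-1)=\mathrm{id}=(-1);(-1)$, and substituting the cup-colour-change axiom into the mixed white-cup/grey-cap snake and yanking does yield $-1$. But for the first equation, ``expanding the shared edge with the bicommutative bialgebra law'' is not the mechanism: the left-hand side lives entirely in $\cb_k^{\op}$ and the right-hand side entirely in $\cb_k$, so the bialgebra axioms (which stay within one copy) cannot bridge them; what is needed is the Frobenius structure to rotate the white multiplication and grey comultiplication into their $\cb_k$ counterparts, at which point the mixed-colour snake --- i.e.\ precisely your third block --- supplies the $-1$. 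You have the right ingredient but attach it to the wrong step. Likewise, in the middle block the assertion that ``by definition the transpose of a $\cb_k$ scalar is the matching $\cb_k^{\op}$ scalar'' is circular: that identification is exactly the equation being proven, not a definition, and deriving it syntactically for non-invertible $a$ (e.g.\ $a=0$) takes more than the snake equations. None of this is fatal because the completeness route covers every case, but as written the purely diagrammatic derivation has gaps in two of the three blocks.
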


Because of the symmetry of $-1$, we use the following (symmetric) notation for the antipode:
\hfil
$
\begin{tikzpicture}
	\begin{pgfonlayer}{nodelayer}
		\node [style=none] (2) at (7.75, 3.75) {};
		\node [style=none] (3) at (7.75, 2.75) {};
		\node [style=s] (5) at (7.75, 3.25) {};
	\end{pgfonlayer}
	\begin{pgfonlayer}{edgelayer}
		\draw (5) to (3.center);
		\draw (5) to (2.center);
	\end{pgfonlayer}
\end{tikzpicture}
:=
\begin{tikzpicture}
	\begin{pgfonlayer}{nodelayer}
		\node [style=none] (2) at (7.75, 3.75) {};
		\node [style=none] (3) at (7.75, 2.75) {};
		\node [style=scalar] (5) at (7.75, 3.25) {$-1$};
	\end{pgfonlayer}
	\begin{pgfonlayer}{edgelayer}
		\draw (5) to (3.center);
		\draw (5) to (2.center);
	\end{pgfonlayer}
\end{tikzpicture}
$

\begin{lemma}[{\cite{ortho}}]
The functor $(\_)^\perp:\ih_k\to \ih_k$;
\hfil
$
\begin{tikzpicture}
	\begin{pgfonlayer}{nodelayer}
		\node [style=Z] (0) at (7, 3.5) {};
		\node [style=none] (1) at (6.75, 4) {};
		\node [style=none] (2) at (7.25, 4) {};
		\node [style=none] (3) at (6.75, 3) {};
		\node [style=none] (4) at (7.25, 3) {};
		\node [style=none] (5) at (6.75, 4.25) {};
		\node [style=none] (6) at (7.25, 4.25) {};
		\node [style=none] (7) at (7.25, 2.75) {};
		\node [style=none] (8) at (6.75, 2.75) {};
		\node [style=none] (9) at (7, 4) {$\cdots$};
		\node [style=none] (10) at (7, 3) {$\cdots$};
	\end{pgfonlayer}
	\begin{pgfonlayer}{edgelayer}
		\draw [in=-90, out=135] (0) to (1.center);
		\draw [in=-90, out=45] (0) to (2.center);
		\draw [in=-45, out=90] (4.center) to (0);
		\draw [in=90, out=-135] (0) to (3.center);
		\draw (8.center) to (3.center);
		\draw (7.center) to (4.center);
		\draw (2.center) to (6.center);
		\draw (5.center) to (1.center);
	\end{pgfonlayer}
\end{tikzpicture}
\mapsto
\begin{tikzpicture}
	\begin{pgfonlayer}{nodelayer}
		\node [style=X] (0) at (7, 3.5) {};
		\node [style=none] (1) at (6.75, 4) {};
		\node [style=none] (2) at (7.25, 4) {};
		\node [style=none] (3) at (6.75, 3) {};
		\node [style=none] (4) at (7.25, 3) {};
		\node [style=none] (5) at (6.75, 4.25) {};
		\node [style=none] (6) at (7.25, 4.25) {};
		\node [style=none] (7) at (7.25, 2.75) {};
		\node [style=none] (8) at (6.75, 2.75) {};
		\node [style=none] (9) at (7, 4) {$\cdots$};
		\node [style=none] (10) at (7, 3) {$\cdots$};
	\end{pgfonlayer}
	\begin{pgfonlayer}{edgelayer}
		\draw [in=-90, out=135] (0) to (1.center);
		\draw [in=-90, out=45] (0) to (2.center);
		\draw [in=-45, out=90] (4.center) to (0);
		\draw [in=90, out=-135] (0) to (3.center);
		\draw (8.center) to (3.center);
		\draw (7.center) to (4.center);
		\draw (2.center) to (6.center);
		\draw (5.center) to (1.center);
	\end{pgfonlayer}
\end{tikzpicture}
\hspace*{.5cm}
\begin{tikzpicture}
	\begin{pgfonlayer}{nodelayer}
		\node [style=X] (0) at (7, 3.5) {};
		\node [style=none] (1) at (6.75, 4) {};
		\node [style=none] (2) at (7.25, 4) {};
		\node [style=none] (3) at (6.75, 3) {};
		\node [style=none] (4) at (7.25, 3) {};
		\node [style=none] (5) at (6.75, 4.25) {};
		\node [style=none] (6) at (7.25, 4.25) {};
		\node [style=none] (7) at (7.25, 2.75) {};
		\node [style=none] (8) at (6.75, 2.75) {};
		\node [style=none] (9) at (7, 4) {$\cdots$};
		\node [style=none] (10) at (7, 3) {$\cdots$};
	\end{pgfonlayer}
	\begin{pgfonlayer}{edgelayer}
		\draw [in=-90, out=135] (0) to (1.center);
		\draw [in=-90, out=45] (0) to (2.center);
		\draw [in=-45, out=90] (4.center) to (0);
		\draw [in=90, out=-135] (0) to (3.center);
		\draw (8.center) to (3.center);
		\draw (7.center) to (4.center);
		\draw (2.center) to (6.center);
		\draw (5.center) to (1.center);
	\end{pgfonlayer}
\end{tikzpicture}
\mapsto
\begin{tikzpicture}
	\begin{pgfonlayer}{nodelayer}
		\node [style=Z] (0) at (7, 3.5) {};
		\node [style=none] (1) at (6.75, 4) {};
		\node [style=none] (2) at (7.25, 4) {};
		\node [style=none] (3) at (6.75, 3) {};
		\node [style=none] (4) at (7.25, 3) {};
		\node [style=none] (5) at (6.75, 4.25) {};
		\node [style=none] (6) at (7.25, 4.25) {};
		\node [style=none] (7) at (7.25, 2.75) {};
		\node [style=none] (8) at (6.75, 2.75) {};
		\node [style=none] (9) at (7, 4) {$\cdots$};
		\node [style=none] (10) at (7, 3) {$\cdots$};
	\end{pgfonlayer}
	\begin{pgfonlayer}{edgelayer}
		\draw [in=-90, out=135] (0) to (1.center);
		\draw [in=-90, out=45] (0) to (2.center);
		\draw [in=-45, out=90] (4.center) to (0);
		\draw [in=90, out=-135] (0) to (3.center);
		\draw (8.center) to (3.center);
		\draw (7.center) to (4.center);
		\draw (2.center) to (6.center);
		\draw (5.center) to (1.center);
	\end{pgfonlayer}
\end{tikzpicture}
\hspace*{.5cm}
\begin{tikzpicture}
	\begin{pgfonlayer}{nodelayer}
		\node [style=none] (2) at (7.75, 3.75) {};
		\node [style=none] (3) at (7.75, 2.75) {};
		\node [style=scalar] (5) at (7.75, 3.25) {$a$};
	\end{pgfonlayer}
	\begin{pgfonlayer}{edgelayer}
		\draw (5) to (3.center);
		\draw (5) to (2.center);
	\end{pgfonlayer}
\end{tikzpicture}
\mapsto
\begin{tikzpicture}
	\begin{pgfonlayer}{nodelayer}
		\node [style=none] (2) at (7.75, 3.75) {};
		\node [style=none] (3) at (7.75, 2.75) {};
		\node [style=scalarop] (5) at (7.75, 3.25) {$a$};
	\end{pgfonlayer}
	\begin{pgfonlayer}{edgelayer}
		\draw (5) to (3.center);
		\draw (5) to (2.center);
	\end{pgfonlayer}
\end{tikzpicture}
\hspace*{.5cm}
\begin{tikzpicture}
	\begin{pgfonlayer}{nodelayer}
		\node [style=none] (2) at (7.75, 3.75) {};
		\node [style=none] (3) at (7.75, 2.75) {};
		\node [style=scalarop] (5) at (7.75, 3.25) {$a$};
	\end{pgfonlayer}
	\begin{pgfonlayer}{edgelayer}
		\draw (5) to (3.center);
		\draw (5) to (2.center);
	\end{pgfonlayer}
\end{tikzpicture}
\mapsto
\begin{tikzpicture}
	\begin{pgfonlayer}{nodelayer}
		\node [style=none] (2) at (7.75, 3.75) {};
		\node [style=none] (3) at (7.75, 2.75) {};
		\node [style=scalar] (5) at (7.75, 3.25) {$a$};
	\end{pgfonlayer}
	\begin{pgfonlayer}{edgelayer}
		\draw (5) to (3.center);
		\draw (5) to (2.center);
	\end{pgfonlayer}
\end{tikzpicture}
$

is the isomorphism which takes linear subspaces to their orthogonal complement, that is to say:
$$
V \mapsto V^\perp := \{  v \in V: \forall w \in V , \langle v,w\rangle = 0\}
$$

\end{lemma}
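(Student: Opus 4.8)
The plan is to leverage the cited fact that $\ih_k$ presents $\LinRel_k$: it then suffices to recognise orthogonal complementation as a monoidal endofunctor of $\LinRel_k$ and to match it against the generator-wise assignment in the statement. \emph{Well-definedness first.} The defining equations of $\ih_k$ occur in dual pairs, each obtained from the other by simultaneously interchanging the two Frobenius structures ($Z\leftrightarrow X$) and the two copies of $\cb_k$ (i.e.\ $\cb_k$ against $\cb_k^{\op}$, equivalently scalar $\leftrightarrow$ scalarop). Hence relabelling each generator by the rule given carries every axiom to another axiom (or an immediate consequence), so $(\_)^\perp$ descends to a prop morphism $\ih_k\to\ih_k$; since the relabelling is visibly an involution on generators, $(\_)^\perp\circ(\_)^\perp=\mathrm{id}$ and it is an isomorphism.

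The substance, and the main obstacle, is to check that the assignment $V\mapsto V^\perp$ on subspaces of $k^n\oplus k^m$ is a covariant monoidal functor, where $\perp$ is taken with respect to the split symmetric form that is the standard form on the domain legs $k^n$ and its negative on the codomain legs $k^m$. This sign twist is exactly what forces $\mathrm{id}^\perp=\mathrm{id}$ rather than the antipode; with the naive positive-definite form the assignment fixes no identity and so cannot be functorial. Monoidality is immediate, as $\perp$ of an orthogonal direct sum of forms splits as a direct sum. Compatibility with relational composition, $(W\circ V)^\perp=W^\perp\circ V^\perp$, is where the work lies: writing $W\circ V$ as an intersection of cylinders $V\oplus k^p$ and $k^n\oplus W$ followed by projection away from the internal legs $k^m$, one pushes the complement through using that $\perp$ interchanges sum with intersection and image with preimage, together with non-degeneracy of the form on $k^m$; the cancellation of the internal legs is consistent precisely because they carry opposite signs on the two sides of the cut. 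Double complementation $(V^\perp)^\perp=V$ then re-exhibits the inverse.

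It remains to compare the two functors on generators, which is a short finite check. The white ($Z$) spider denotes the diagonal $\{(\lambda,\dots,\lambda)\}$, whose complement under the split form is $\{\vec v:\sum_{\mathrm{in}}v_i=\sum_{\mathrm{out}}v_j\}$, which is exactly the grey ($X$) spider; the symmetric computation gives $X\mapsto Z$. The scalar $a$ denotes the graph $\{(x,ax)\}$ of multiplication by $a$, whose complement is $\{(u,v):u=av\}$, i.e.\ multiplication by $a^{-1}$, which is precisely the denotation of the scalarop $a$, and the converse case is again symmetric. Since $(\_)^\perp$ and $\perp$ are both monoidal functors on $\LinRel_k$ agreeing on every generator, they coincide, establishing that $(\_)^\perp$ is the covariant orthogonal complement isomorphism as claimed.
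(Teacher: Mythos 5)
The paper does not actually prove this lemma --- it is imported wholesale from the cited reference --- so there is no in-paper argument to compare against; your proposal has to stand on its own, and on the whole it does. The overall strategy (check the relabelling respects the axioms of $\ih_k$, check that orthogonal complementation is a covariant monoidal endofunctor of $\LinRel_k$, then match the two on generators) is the standard and correct route. Your best observation is the one about the sign convention: the complement must be taken with respect to the form $\langle(x,y),(x',y')\rangle = x\cdot x' - y\cdot y'$ that negates the codomain legs, since otherwise the identity is sent to the antipode and no covariant functor results. This genuinely sharpens the statement, whose displayed formula for $V^\perp$ is silent about which form is meant (and, read literally, even quantifies $v$ over $V$ rather than over the ambient space); note that for states the two conventions agree, which is why the paper's later algebraic computation of $W^\omega$ is unaffected.

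Two steps remain sketches and would need to be written out in a self-contained proof. First, the claim that every axiom of $\ih_k$ is carried to an axiom ``or an immediate consequence'' hides a small but real derivation: the image of the cup axiom relating the $Z$-cup to the $X$-cup via $-1$ is not literally among the listed equations and must be recovered by composing with the invertible scalar $-1$; the remaining axioms do pair up under the swap $\cb_k \leftrightarrow \cb_k^{\op}$, $Z\leftrightarrow X$ as you say. Second, and more substantively, the inclusion $(W\circ V)^\perp \subseteq W^\perp\circ V^\perp$ is the real content of functoriality, and your cylinder-and-projection sketch, while salvageable, conceals a sign subtlety: the complements of the two cylinders inside $k^n\oplus k^m\oplus k^p$ are taken with respect to a single form on the middle factor, which can only match the twisted convention for one of $V$, $W$ at a time; the resulting sign discrepancy is cancelled exactly by the sign arising from the condition that the middle components sum to zero. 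You gesture at this (``opposite signs on the two sides of the cut''), but a referee would want it, or the cleaner alternative: given $(u,w)\in(W\circ V)^\perp$, define functionals on the middle projections of $V$ and of $W$ by $y\mapsto u\cdot x$ and $y\mapsto w\cdot z$, check they are well defined and agree on the overlap using membership in $(W\circ V)^\perp$, glue, extend, and represent by some $v\in k^m$ using non-degeneracy. Finally, the phrase ``multiplication by $a^{-1}$'' for the image of the scalar $a$ should be replaced by ``the converse relation $\{(ax,x)\}$'', since the lemma is stated for all $a\in k$, not only invertible ones; your identification with the scalarop generator is nonetheless correct as stated for $a=0$ as well.
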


Notice that the orthogonal complement is an involution so that $(V^\perp)^\perp = V$.

\section{Lagrangian relations}
\label{sec:sym}

Now that we have a graphical presentation of linear relations, we can do the same for (linear) Lagrangian relations.  We first recall some of the basic theory of symplectic vector spaces.  This is expounded upon in much greater generality in the not-necessarily-linear case in \cite{weinstein}.  In this entire paper, we only care about the linear and affine cases; and things will be assumed to be linear unless otherwise stated.  As previously mentioned, Lagrangian relations (and their affine counterpart) have been studied within the context of monoidal categories  to model electrical circuits among other things \cite{passive,network,coya}, although, to the knowledge of the authors, no proof of universality exists in the literature.

\begin{definition}
  Given a field  $k$ and a $k$-vector space $V$, a {\bf symplectic form} on $V$ is a bilinear map $\omega:V\times V\to k$ which is:
\begin{itemize}
 \item {\bf Alternating:} $\forall v \in V$, $\omega(v,v)=0$
 \item {\bf Non-degenerate:} If $\forall w \in V \omega(v,w)=0 $ then $ v=0$.
\end{itemize}
  A {\bf symplectic vector space} is a vector space equipped with a symplectic form. A (linear) {\bf symplectomorphism} is a linear isomorphism between symplectic vector spaces that preserves the symplectic form.
\end{definition}

\begin{lemma}
\label{lemma:sform}
Every vector space $k^{2n}$ is equipped with a bilinear form given by the following block matrix:
$$
\omega:=
\begin{bmatrix}
0_n & I_n\\
-I_n & 0_n
\end{bmatrix}
$$
so that $\omega(v,w) := v \omega w^T$.
Moreover, every finite-dimensional symplectic vector space over $k$ is symplectomorphic to one of the form $k^{2n}$ with such a symplectic form.
\end{lemma}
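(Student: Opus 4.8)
The plan is to prove the two assertions separately: first that the displayed block matrix genuinely defines a symplectic form on $k^{2n}$, and then the classification up to symplectomorphism via the construction of a symplectic (Darboux) basis.

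For the first part, bilinearity of $\omega(v,w) = v\,\omega\,w^T$ is immediate from matrix multiplication. To verify the alternating condition I would write $v = (a \mid b)$ with $a,b \in k^n$ and compute $v\,\omega\,v^T = a\,b^T - b\,a^T = \sum_i a_i b_i - \sum_i b_i a_i = 0$; it is worth noting that this vanishes in every characteristic, so the \emph{alternating} condition (which is strictly stronger than antisymmetry in characteristic $2$) is honestly checked. For nondegeneracy, I would observe that $\omega$ is an invertible matrix---indeed $\omega^2 = -I_{2n}$---so the linear map $v \mapsto \omega(v,-)$ into the dual has trivial kernel, which is exactly the nondegeneracy condition.

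For the second part, I would construct a symplectic basis by induction on $\dim V$. Given a nonzero symplectic space $(V,\sigma)$, pick any $e_1 \neq 0$; by nondegeneracy there is a vector on which $\sigma(e_1,-)$ does not vanish, and after rescaling I obtain $f_1$ with $\sigma(e_1,f_1)=1$. Since $\sigma$ is alternating, $e_1$ and $f_1$ are linearly independent and span a plane $W$. The crucial step is to establish the decomposition $V = W \oplus W^\perp$, where $W^\perp = \{\, v : \sigma(v,e_1)=\sigma(v,f_1)=0 \,\}$: the intersection $W \cap W^\perp$ is trivial by a direct computation using $\sigma(e_1,f_1)=1$, and every $v \in V$ splits explicitly as $v = \big(\sigma(v,f_1)\,e_1 - \sigma(v,e_1)\,f_1\big) + v'$ with $v' \in W^\perp$. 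One then checks that $\sigma$ restricted to $W^\perp$ is still nondegenerate---any $v \in W^\perp$ orthogonal to all of $W^\perp$ is also orthogonal to $W$, hence to all of $V = W \oplus W^\perp$, and so is zero---which lets the inductive hypothesis apply to the strictly smaller space $W^\perp$.

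Iterating yields a basis $e_1,\dots,e_n,f_1,\dots,f_n$ with $\sigma(e_i,e_j)=\sigma(f_i,f_j)=0$ and $\sigma(e_i,f_j)=\delta_{ij}$; in particular $\dim V = 2n$ is forced to be even. The linear isomorphism sending this basis to the standard basis of $k^{2n}$ (the $e_i$ to the first block, the $f_i$ to the second) then pulls the standard form $\omega$ back to $\sigma$ by construction, giving the required symplectomorphism. I expect the main obstacle to be the decomposition $V = W \oplus W^\perp$ together with nondegeneracy of the restricted form: this is precisely where nondegeneracy of $\sigma$ is used and where the induction is actually fed, whereas the remaining verifications are routine bookkeeping.
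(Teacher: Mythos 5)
Your proof is correct and complete: the verification that the block matrix is alternating (checked honestly, so it works in characteristic $2$ as well) and nondegenerate is right, and the inductive construction of a Darboux basis via the splitting $V = W \oplus W^\perp$ is the standard classification argument, with the key identities ($\sigma(v',e_1)=\sigma(v',f_1)=0$ for your explicit $v'$, using antisymmetry $\sigma(f_1,e_1)=-1$) all holding as claimed. The paper itself states this lemma as a known classical fact and gives no proof, so there is nothing to compare against; your argument is exactly the textbook proof one would supply, and the ordering of the basis as $e_1,\dots,e_n,f_1,\dots,f_n$ correctly matches the sign convention of the displayed block matrix.
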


\begin{definition}

Let $W \subseteq V$ be a linear subspace of a symplectic space $V$.
The {\bf symplectic dual} of the subspace $W$ is defined to be
$
W^\omega:= \{v \in V : \forall w \in W, \omega(v,w)=0 \}
$.
A linear subspace  $W$ of a symplectic vector space $V$ is {\bf isotropic} when $W^\omega \supseteq W$, {\bf coisotropic} when $W^\omega \subseteq W$ and {\bf Lagrangian} when $W^\omega=W$.

\end{definition}

\begin{lemma}
Every symplectomorphism $f:V\to V$ induces a Lagrangian relation $\Gamma_f:=\{ (fv, v) | v \in V \}$.
\end{lemma}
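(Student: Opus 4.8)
The plan is to verify directly that $\Gamma_f$ is a Lagrangian subspace of the appropriate symplectic space, rather than to check the graphical condition displayed earlier. A relation $V \to V$ is a subspace of $V \oplus V$, and to speak of it being \emph{Lagrangian} we must equip the composite $V \oplus V$ with the \emph{doubled} form
$$
\widetilde\omega\big((a,b),(a',b')\big) := -\omega(a,a') + \omega(b,b'),
$$
i.e. $\omega$ on one copy and $-\omega$ on the other. This is the form on domain-plus-codomain that makes the diagonal (the identity relation) Lagrangian and that makes relational composition agree with composition of symplectomorphisms. Since $\omega$ is alternating and nondegenerate by Lemma~\ref{lemma:sform}, so is $\widetilde\omega$, and $(V \oplus V, \widetilde\omega)$ is again a symplectic vector space. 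The strategy is then the standard one: show $\Gamma_f$ is \emph{isotropic} and has dimension exactly half that of $V \oplus V$, whence it is Lagrangian.

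First I would check isotropy. Taking two arbitrary elements $(fv, v), (fw, w) \in \Gamma_f$, I compute
$$
\widetilde\omega\big((fv,v),(fw,w)\big) = -\omega(fv,fw) + \omega(v,w).
$$
The defining property of a symplectomorphism is that $f$ preserves $\omega$, i.e. $\omega(fv,fw) = \omega(v,w)$, so the two contributions cancel and the expression vanishes. Hence $\Gamma_f \subseteq \Gamma_f^{\widetilde\omega}$. This is the conceptual heart of the statement, and it is exactly where the hypothesis on $f$ is used: the negation of one copy of $\omega$ is essential, since with the naive form $\omega \oplus \omega$ one would instead obtain $2\,\omega(v,w)$, which does not vanish in general and so $\Gamma_f$ would fail to be isotropic.

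It then remains to upgrade isotropy to the full Lagrangian condition by a dimension count. Because $f$ is a linear isomorphism, the assignment $v \mapsto (fv,v)$ is injective, so $\dim \Gamma_f = \dim V = \tfrac12 \dim(V \oplus V)$. For any subspace $W$ of a space carrying a nondegenerate form one has $\dim W + \dim W^{\widetilde\omega} = \dim(V \oplus V)$; combining this with $\Gamma_f \subseteq \Gamma_f^{\widetilde\omega}$ and the dimension equality forces $\dim \Gamma_f^{\widetilde\omega} = \dim \Gamma_f$, and therefore $\Gamma_f = \Gamma_f^{\widetilde\omega}$. Thus $\Gamma_f$ is Lagrangian. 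The only genuine subtlety I anticipate is pinning down the sign convention on $\widetilde\omega$ so that the categorical bookkeeping (identity Lagrangian, composition well behaved) is consistent with the chosen orientation of $\Gamma_f = \{(fv,v)\}$; note, however, that the isotropy computation above is insensitive to the overall sign of $\widetilde\omega$ and to which copy is negated, so once the convention is fixed both remaining steps are routine.
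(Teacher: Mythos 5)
Your argument is correct, and it is worth noting that the paper states this lemma with no proof at all, so there is nothing to diverge from: you have supplied the standard ``isotropic plus half-dimensional'' argument, which is exactly what is needed. The isotropy computation correctly isolates where the symplectomorphism hypothesis enters, the injectivity of $v \mapsto (fv,v)$ gives $\dim \Gamma_f = \tfrac12\dim(V\oplus V)$, and the identity $\dim W + \dim W^{\widetilde\omega} = \dim(V\oplus V)$ for a nondegenerate form closes the gap from isotropic to Lagrangian. Your handling of the sign convention is also the right call, and is arguably more careful than the paper itself: read literally, the definition of $\Lag\Rel_k$ places a relation $k^{2n}\to k^{2m}$ inside $k^{n+m}\oplus k^{n+m}$ with the standard form, which under the naive regrouping of $X$- and $Z$-gradings restricts to $\omega\oplus\omega$ on domain and codomain --- a form for which neither the identity relation nor $\Gamma_f$ is isotropic outside characteristic $2$. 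The twisted form $\widetilde\omega = (-\omega)\oplus\omega$ that you adopt is the one forced by requiring identities to be Lagrangian and composition to behave, and it is what the paper implicitly encodes through the antipode ($-1$) decorations in its graphical Lagrangian criterion and through the covariant $(\_)^\perp$ functor. One cosmetic remark: alternation and nondegeneracy of $\omega$ are part of the definition of a symplectic form rather than consequences of Lemma~\ref{lemma:sform}, so the citation there is unnecessary; and since $\Gamma_f$ is written as $\{(fv,v)\}$ rather than $\{(v,fv)\}$, your ``which copy is negated is immaterial'' remark is doing real work and is correct.
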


These spaces have a natural grading into two distinct parts $V \oplus W \subseteq k^n \oplus k^n$. By analogy to the case of quantum stabilizer theory, we call the left part the \textit{X-grading} and the right part the \textit{Z-grading}.

As a matter of convention, we consider linear subspaces as being represented as the row space of a matrix. So in particular, a symplectic subspace of $k^{2n}$ is represented by a matrix of the form $[X|Z]$ where $X,Z$ are both $n\times n$-dimensional matrices.
An isotropic subspace can equivalently be described as a matrix $[X|Z]$ so that $[X|Z] \omega [X|Z]^T = 0$.
Moreover, a Lagrangian subspace can be described as a matrix as above which additionally has rank $n$.

\begin{definition}
Given a field $k$, the prop of {\bf Lagrangian relations},  $\Lag\Rel_k$, has morphisms $k^{2n}\to k^{2m}$ as Lagrangian subspaces of the symplectic vector space $k^{n+m} \oplus k^{n+m}$ with symplectic form given above.  Composition is given by relational composition and the tensor product is given by the direct sum.
\end{definition}

The direct sum of Lagrangian subspaces is graphically depicted as follows:
$$
\begin{tikzpicture}
	\begin{pgfonlayer}{nodelayer}
		\node [style=map] (616) at (272, 0) {$V$};
		\node [style=none] (617) at (271.75, 1) {};
		\node [style=none] (618) at (272.25, 1) {};
	\end{pgfonlayer}
	\begin{pgfonlayer}{edgelayer}
		\draw [in=-90, out=60] (616) to (618.center);
		\draw [in=-90, out=120] (616) to (617.center);
	\end{pgfonlayer}
\end{tikzpicture}
\oplus
\begin{tikzpicture}
	\begin{pgfonlayer}{nodelayer}
		\node [style=map] (616) at (272, 0) {$W$};
		\node [style=none] (617) at (271.75, 1) {};
		\node [style=none] (618) at (272.25, 1) {};
	\end{pgfonlayer}
	\begin{pgfonlayer}{edgelayer}
		\draw [in=-90, out=60] (616) to (618.center);
		\draw [in=-90, out=120] (616) to (617.center);
	\end{pgfonlayer}
\end{tikzpicture}
:=
\begin{tikzpicture}
	\begin{pgfonlayer}{nodelayer}
		\node [style=map] (616) at (272, 0) {$V$};
		\node [style=none] (617) at (271.75, 1) {};
		\node [style=none] (618) at (272.75, 1) {};
		\node [style=map] (619) at (272.75, 0) {$W$};
		\node [style=none] (620) at (272, 1) {};
		\node [style=none] (621) at (273, 1) {};
	\end{pgfonlayer}
	\begin{pgfonlayer}{edgelayer}
		\draw [in=-90, out=60] (616) to (618.center);
		\draw [in=-90, out=120] (616) to (617.center);
		\draw [in=-90, out=60] (619) to (621.center);
		\draw [in=-90, out=120] (619) to (620.center);
	\end{pgfonlayer}
\end{tikzpicture}
$$
where we are grouping the $X$ gradings together on the left and the $Z$ gradings together on the right. Note that this means the embedding of $\Lag\Rel_k$ into $\LinRel_k$ preserves the monoidal product only up to isomorphism. More precisely, we have the following fact.

\begin{lemma}
\label{lemma:strong}
The forgetful functor $E:\Lag\Rel_k \to \LinRel_k$  is faithful, strong symmetric monoidal.
\end{lemma}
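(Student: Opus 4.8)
The plan is to verify functoriality, faithfulness, and strong symmetric monoidality in turn, with only the last requiring real work. First I would pin down how $E$ acts. On objects it forgets the symplectic form but retains the $X/Z$ grading, sending the symplectic space $k^{2n}$ to the object $2n$ of $\LinRel_k$ with its first $n$ coordinates read as the $X$-grading and its last $n$ as the $Z$-grading; in particular $E$ is strict on objects, since the symplectic direct sum $k^{2n}\oplus k^{2m}$ is $k^{2(n+m)}$ and $E$ of it is $2(n+m)=2n+2m$. On a morphism $L\colon k^{2n}\to k^{2m}$, which by definition is a Lagrangian subspace of $k^{n+m}\oplus k^{n+m}$ (the $X$-gradings of domain and codomain grouped together, then the $Z$-gradings), $E$ returns the \emph{same} subspace, reindexed by the fixed permutation rewriting the grouping $(A_X,B_X,A_Z,B_Z)$ as the grouping $(A_X,A_Z,B_X,B_Z)$ expected of a linear relation $E(k^{2n})\to E(k^{2m})$. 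Functoriality is then immediate: the diagonal Lagrangian reindexes to the identity relation, and since both props compose by relational composition and the reindexing agrees on the shared middle object, $E(M\circ L)=E(M)\circ E(L)$. The only mathematical content, that identities and composites of Lagrangians are Lagrangian, is already subsumed in $\Lag\Rel_k$ being a well-defined prop.

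Faithfulness is then essentially formal: on each hom-set $E$ is the composite of the inclusion of Lagrangian subspaces into all linear subspaces with the fixed coordinate bijection above, hence injective. Thus distinct Lagrangian relations have distinct images and $E$ is faithful.

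For strong symmetric monoidality I would exhibit the coherence isomorphism $\phi_{A,B}\colon E(A)\oplus E(B)\to E(A\oplus B)$ as the ``riffle'' permutation of $\LinRel_k$ that regroups the interleaved blocks $k^n_X\oplus k^n_Z\oplus k^m_X\oplus k^m_Z$ into $k^{n+m}_X\oplus k^{n+m}_Z$, together with $\phi_0=\mathrm{id}_0$ on the unit. Each $\phi_{A,B}$ is a permutation, hence a symmetry, and in particular an isomorphism in $\LinRel_k$. The substantive step is naturality: for $f\colon A\to A'$ and $g\colon B\to B'$ in $\Lag\Rel_k$ one must check $E(f\oplus g)\circ\phi_{A,B}=\phi_{A',B'}\circ(E(f)\oplus E(g))$, which unwinds to the statement that the regrouped direct sum defining $\oplus$ in $\Lag\Rel_k$ is exactly the conjugate by the riffle permutations of the ordinary direct sum of linear relations. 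Granting naturality, the coherence axioms (the associativity and unit coherences and compatibility with the braiding) all hold because $\phi$ is assembled entirely from symmetries, so they reduce to the coherence theorem for symmetric monoidal categories, i.e.\ to naturality and the defining equations of the symmetry in $\LinRel_k$. The main obstacle is precisely this bookkeeping: tracking the four blocks $A_X,A_Z,B_X,B_Z$ across the two groupings and confirming that the iso witnessing ``preservation up to isomorphism'' is a genuine symmetry rather than an arbitrary linear iso; everything else follows formally once the actions of $E$ and $\phi$ are set up consistently.
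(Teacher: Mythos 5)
Your proposal is correct and follows essentially the same route as the paper: the paper also takes functoriality and faithfulness as immediate and exhibits the strong monoidal structure as the middle-swap permutation $A\oplus A\oplus B\oplus B\xrightarrow{1\oplus\sigma\oplus 1}A\oplus B\oplus A\oplus B$, which is exactly your ``riffle'' isomorphism regrouping the $X$ and $Z$ blocks. The only difference is that you spell out the bookkeeping and the naturality/coherence obligations that the paper leaves implicit (it simply declares the symmetric structure on $\Lag\Rel_k$ to be chosen consistently with this map).
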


\begin{proof}
  Functoriality and faithfulness is immediate. The strong monoidal structure is given by $E(I) = I$ and
  \[ E(A) \oplus E(B) := A \oplus A \oplus B \oplus B \xrightarrow{1 \oplus \sigma \oplus 1} A \oplus B \oplus A \oplus B =: E(A \oplus B). \]
  The symmetric monoidal structure on $\Lag\Rel_k$ is chosen such that it is consistent with the monoidal structure above.
\end{proof}

Due to the above lemma, we will regard $\Lag\Rel_k$ as a symmetric monoidal subcategory of $\LinRel_k$.
As such, we can ask what the generators of $\Lag\Rel_k$ look like in terms of string diagrams of $\ih_k$ generators. We first describe what it means to be a Lagrangian relation in pictures, where the $X$ block is the wire on the left and $Z$ block is the wire on the right:
\begin{equation}
\label{eq:lag}
\begin{tikzpicture}
	\begin{pgfonlayer}{nodelayer}
		\node [style=map] (0) at (0.75, -1) {$W$};
		\node [style=none] (1) at (0.5, 0) {};
		\node [style=none] (2) at (1, 0) {};
	\end{pgfonlayer}
	\begin{pgfonlayer}{edgelayer}
		\draw [in=120, out=-90] (1.center) to (0);
		\draw [in=-90, out=60] (0) to (2.center);
	\end{pgfonlayer}
\end{tikzpicture}
=
\begin{tikzpicture}
	\begin{pgfonlayer}{nodelayer}
		\node [style=map] (0) at (0.75, -1.75) {$W^\perp$};
		\node [style=none] (1) at (0.5, -1) {};
		\node [style=none] (2) at (1, -1) {};
		\node [style=none] (3) at (1, 0) {};
		\node [style=none] (4) at (0.5, 0) {};
		\node [style=s] (5) at (1, -1) {};
	\end{pgfonlayer}
	\begin{pgfonlayer}{edgelayer}
		\draw [in=120, out=-90] (1.center) to (0);
		\draw [in=-90, out=60] (0) to (2.center);
		\draw [in=-90, out=90] (2.center) to (4.center);
		\draw [in=-270, out=-90] (3.center) to (1.center);
	\end{pgfonlayer}
\end{tikzpicture}
\end{equation}
Algebraically, for $W$ a subspace of $V$, the right hand side is interpreted as follows:
\begin{align*}
W^\omega :&= \{(v_1,v_2) \in V : \forall (w_1,w_2) \in W, \omega((v_1,v_2),(w_1,w_2))=0 \}\\
                    &= \{(v_1,v_2) \in V : \forall (w_1,w_2) \in W,  \langle (v_2,-v_1) ,(w_1,w_2)\rangle =0 \}\\
                    &= \{(v_2,-v_1) \in V : \forall (w_1,w_2) \in W,  \langle (v_1,v_2) ,(w_1,w_2)\rangle =0 \}
\end{align*}

The category of Lagrangian relations is compact closed.  Given a relation $V$ between symplectic vector spaces, we can curry it into a state $\hat V$; and similarily, we can uncurry a state $W$ into a process $\widecheck W$,
$$
\begin{tikzpicture}
	\begin{pgfonlayer}{nodelayer}
		\node [style=map] (0) at (0.75, -1.75) {$V$};
		\node [style=none] (1) at (0.5, -1) {};
		\node [style=none] (2) at (1, -1) {};
		\node [style=none] (3) at (0.5, -2.5) {};
		\node [style=none] (4) at (1, -2.5) {};
	\end{pgfonlayer}
	\begin{pgfonlayer}{edgelayer}
		\draw [in=120, out=-90] (1.center) to (0);
		\draw [in=-90, out=60] (0) to (2.center);
		\draw [in=-60, out=90] (4.center) to (0);
		\draw [in=90, out=-120] (0) to (3.center);
	\end{pgfonlayer}
\end{tikzpicture}
\xmapsto{\hat{(\_)} }
\begin{tikzpicture}
	\begin{pgfonlayer}{nodelayer}
		\node [style=map] (0) at (0.75, -1.75) {$V$};
		\node [style=none] (1) at (0, -1) {};
		\node [style=none] (2) at (1.25, -1) {};
		\node [style=none] (4) at (1.25, -2.5) {};
		\node [style=X] (5) at (0, -3) {};
		\node [style=Z] (6) at (0.75, -3) {};
		\node [style=none] (7) at (0.75, -1) {};
		\node [style=none] (8) at (-0.5, -1) {};
		\node [style=none] (9) at (0, -2) {};
	\end{pgfonlayer}
	\begin{pgfonlayer}{edgelayer}
		\draw [in=120, out=-90] (1.center) to (0);
		\draw [in=-90, out=60] (0) to (2.center);
		\draw [in=-45, out=90] (4.center) to (0);
		\draw [in=-90, out=135, looseness=0.75] (5) to (8.center);
		\draw [in=30, out=-90] (4.center) to (6);
		\draw [in=90, out=-90] (7.center) to (9.center);
		\draw [in=150, out=-90] (9.center) to (6);
		\draw [in=45, out=-135] (0) to (5);
	\end{pgfonlayer}
\end{tikzpicture}
\hspace*{1cm}
\begin{tikzpicture}
	\begin{pgfonlayer}{nodelayer}
		\node [style=map] (0) at (1.5, -2) {$W$};
		\node [style=none] (1) at (1.25, -1) {};
		\node [style=none] (2) at (2.25, -1) {};
		\node [style=none] (6) at (1.75, -1) {};
		\node [style=none] (7) at (0.75, -1) {};
	\end{pgfonlayer}
	\begin{pgfonlayer}{edgelayer}
		\draw [in=105, out=-90] (1.center) to (0);
		\draw [in=-90, out=60] (0) to (2.center);
		\draw [in=120, out=-90] (7.center) to (0);
		\draw [in=-90, out=75] (0) to (6.center);
	\end{pgfonlayer}
\end{tikzpicture}
\xmapsto{\widecheck{(\_)} }
\begin{tikzpicture}
	\begin{pgfonlayer}{nodelayer}
		\node [style=map] (0) at (1.75, -2.25) {$W$};
		\node [style=none] (1) at (1, -0.75) {};
		\node [style=none] (2) at (2, -0.75) {};
		\node [style=none] (6) at (1.5, -1.25) {};
		\node [style=none] (7) at (0.75, -1.25) {};
		\node [style=Z] (8) at (1.5, -1.25) {};
		\node [style=X] (9) at (0.75, -1.25) {};
		\node [style=none] (10) at (0.25, -2.5) {};
		\node [style=none] (11) at (1, -2.5) {};
	\end{pgfonlayer}
	\begin{pgfonlayer}{edgelayer}
		\draw [in=105, out=-90, looseness=1.25] (1.center) to (0);
		\draw [in=-90, out=45, looseness=0.75] (0) to (2.center);
		\draw [in=135, out=-30] (7.center) to (0);
		\draw [in=-45, out=75] (0) to (6.center);
		\draw [in=-135, out=90] (10.center) to (9);
		\draw [in=-135, out=90] (11.center) to (8);
	\end{pgfonlayer}
\end{tikzpicture}
$$
It is easy to see that these two constructions are inverse to each other.
This allows us to derive a graphical criteria for abitrary Lagrangian relations, generalizing Equation \ref{eq:lag}:
$$
\begin{tikzpicture}
	\begin{pgfonlayer}{nodelayer}
		\node [style=map] (0) at (0.75, -1.75) {$V$};
		\node [style=none] (1) at (0, -0.75) {};
		\node [style=none] (2) at (1.25, -0.75) {};
		\node [style=none] (4) at (1.25, -2.5) {};
		\node [style=X] (5) at (0, -3) {};
		\node [style=Z] (6) at (0.75, -3) {};
		\node [style=none] (7) at (0.75, -0.75) {};
		\node [style=none] (8) at (-0.5, -0.75) {};
		\node [style=none] (9) at (0, -2) {};
	\end{pgfonlayer}
	\begin{pgfonlayer}{edgelayer}
		\draw [in=120, out=-90] (1.center) to (0);
		\draw [in=-90, out=60] (0) to (2.center);
		\draw [in=-45, out=90] (4.center) to (0);
		\draw [in=-90, out=135, looseness=0.75] (5) to (8.center);
		\draw [in=30, out=-90] (4.center) to (6);
		\draw [in=90, out=-90] (7.center) to (9.center);
		\draw [in=150, out=-90] (9.center) to (6);
		\draw [in=45, out=-135] (0) to (5);
	\end{pgfonlayer}
\end{tikzpicture}
=
\begin{tikzpicture}
	\begin{pgfonlayer}{nodelayer}
		\node [style=map] (43) at (13.5, -1.75) {$V^\perp$};
		\node [style=none] (44) at (12.75, -0.75) {};
		\node [style=none] (45) at (14, -0.75) {};
		\node [style=none] (46) at (14, -2.5) {};
		\node [style=Z] (47) at (12.75, -3) {};
		\node [style=X] (48) at (13.5, -3) {};
		\node [style=none] (49) at (13.5, -0.75) {};
		\node [style=none] (50) at (12.25, -0.75) {};
		\node [style=none] (51) at (12.75, -2) {};
		\node [style=none] (52) at (13.5, 0.75) {};
		\node [style=none] (53) at (14, 0.75) {};
		\node [style=none] (54) at (12.25, 0.75) {};
		\node [style=none] (55) at (12.75, 0.75) {};
		\node [style=s] (56) at (14, -0.75) {};
		\node [style=s] (57) at (13.5, -0.75) {};
		\node [style=none] (58) at (13.25, -3.5) {};
	\end{pgfonlayer}
	\begin{pgfonlayer}{edgelayer}
		\draw [in=120, out=-90] (44.center) to (43);
		\draw [in=-90, out=60] (43) to (45.center);
		\draw [in=-45, out=90] (46.center) to (43);
		\draw [in=-90, out=135, looseness=0.75] (47) to (50.center);
		\draw [in=30, out=-90] (46.center) to (48);
		\draw [in=90, out=-90] (49.center) to (51.center);
		\draw [in=150, out=-90] (51.center) to (48);
		\draw [in=45, out=-135] (43) to (47);
		\draw [in=270, out=90] (45.center) to (55.center);
		\draw [in=270, out=90] (49.center) to (54.center);
		\draw [in=270, out=90] (44.center) to (53.center);
		\draw [in=270, out=90] (50.center) to (52.center);
	\end{pgfonlayer}
\end{tikzpicture}
\iff
\begin{tikzpicture}
	\begin{pgfonlayer}{nodelayer}
		\node [style=map] (0) at (2, -2) {$V$};
		\node [style=none] (1) at (1.75, -1.25) {};
		\node [style=none] (2) at (2.25, -1.25) {};
		\node [style=none] (3) at (1.75, -2.75) {};
		\node [style=none] (4) at (2.25, -2.75) {};
	\end{pgfonlayer}
	\begin{pgfonlayer}{edgelayer}
		\draw [in=120, out=-90] (1.center) to (0);
		\draw [in=-90, out=60] (0) to (2.center);
		\draw [in=-60, out=90] (4.center) to (0);
		\draw [in=90, out=-120] (0) to (3.center);
	\end{pgfonlayer}
\end{tikzpicture}
=
\begin{tikzpicture}
	\begin{pgfonlayer}{nodelayer}
		\node [style=map] (0) at (2.5, -1.75) {$V$};
		\node [style=none] (1) at (2, -0.5) {};
		\node [style=none] (2) at (3, -0.5) {};
		\node [style=none] (3) at (3, -2.5) {};
		\node [style=X] (4) at (1.75, -3) {};
		\node [style=Z] (5) at (2.5, -3) {};
		\node [style=none] (6) at (2.5, -0.75) {};
		\node [style=none] (7) at (1.5, -0.75) {};
		\node [style=none] (8) at (1.75, -2) {};
		\node [style=X] (9) at (1.5, -0.75) {};
		\node [style=Z] (10) at (2.5, -0.75) {};
		\node [style=none] (11) at (0.5, -3.25) {};
		\node [style=none] (12) at (1, -3.25) {};
	\end{pgfonlayer}
	\begin{pgfonlayer}{edgelayer}
		\draw [in=120, out=-90] (1.center) to (0);
		\draw [in=-90, out=60] (0) to (2.center);
		\draw [in=-45, out=90] (3.center) to (0);
		\draw [in=-45, out=150] (4) to (7.center);
		\draw [in=30, out=-90] (3.center) to (5);
		\draw [in=90, out=-45, looseness=1.25] (6.center) to (8.center);
		\draw [in=150, out=-90] (8.center) to (5);
		\draw [in=45, out=-135] (0) to (4);
		\draw [in=90, out=-150] (10) to (12.center);
		\draw [in=90, out=-120] (9) to (11.center);
	\end{pgfonlayer}
\end{tikzpicture}
=
\begin{tikzpicture}
	\begin{pgfonlayer}{nodelayer}
		\node [style=map] (59) at (17, -1.75) {$V^\perp$};
		\node [style=none] (60) at (16.25, -0.75) {};
		\node [style=none] (61) at (17.5, -0.75) {};
		\node [style=none] (62) at (17.5, -2.5) {};
		\node [style=Z] (63) at (16.25, -3) {};
		\node [style=X] (64) at (17, -3) {};
		\node [style=none] (65) at (17, -0.75) {};
		\node [style=none] (66) at (15.75, -0.75) {};
		\node [style=none] (67) at (16.25, -2) {};
		\node [style=none] (68) at (16.5, 0.75) {};
		\node [style=none] (69) at (17.5, 1) {};
		\node [style=none] (70) at (15.5, 0.75) {};
		\node [style=none] (71) at (16, 1) {};
		\node [style=s] (72) at (17.5, -0.75) {};
		\node [style=s] (73) at (17, -0.75) {};
		\node [style=X] (74) at (15.5, 0.75) {};
		\node [style=Z] (75) at (16.5, 0.75) {};
		\node [style=none] (76) at (15, -3.25) {};
		\node [style=none] (77) at (15.5, -3.25) {};
		\node [style=none] (78) at (16.5, 1.5) {};
	\end{pgfonlayer}
	\begin{pgfonlayer}{edgelayer}
		\draw [in=120, out=-90] (60.center) to (59);
		\draw [in=-90, out=60] (59) to (61.center);
		\draw [in=-45, out=90] (62.center) to (59);
		\draw [in=-90, out=135, looseness=0.75] (63) to (66.center);
		\draw [in=30, out=-90] (62.center) to (64);
		\draw [in=90, out=-90] (65.center) to (67.center);
		\draw [in=150, out=-90] (67.center) to (64);
		\draw [in=45, out=-135] (59) to (63);
		\draw [in=-90, out=90, looseness=1.25] (61.center) to (71.center);
		\draw [in=-60, out=90] (65.center) to (70.center);
		\draw [in=270, out=90] (60.center) to (69.center);
		\draw [in=-45, out=90, looseness=1.25] (66.center) to (68.center);
		\draw [in=-135, out=90, looseness=0.50] (76.center) to (74);
		\draw [in=-150, out=90] (77.center) to (75);
	\end{pgfonlayer}
\end{tikzpicture}
=
\begin{tikzpicture}
	\begin{pgfonlayer}{nodelayer}
		\node [style=map] (0) at (4.5, -1.75) {$V^\perp$};
		\node [style=none] (1) at (4, -1) {};
		\node [style=none] (2) at (5, -1) {};
		\node [style=none] (3) at (5, -2.5) {};
		\node [style=X] (4) at (4.75, -3) {};
		\node [style=none] (5) at (3.75, -2.25) {};
		\node [style=none] (6) at (5, 0) {};
		\node [style=none] (7) at (4, 0) {};
		\node [style=s] (8) at (5, -1) {};
		\node [style=none] (9) at (3.25, -3.25) {};
		\node [style=none] (10) at (4, -3.25) {};
		\node [style=Z] (11) at (3.75, -2.25) {};
	\end{pgfonlayer}
	\begin{pgfonlayer}{edgelayer}
		\draw [in=135, out=-90] (1.center) to (0);
		\draw [in=-90, out=60] (0) to (2.center);
		\draw [in=-45, out=90] (3.center) to (0);
		\draw [in=30, out=-90] (3.center) to (4);
		\draw [in=165, out=-15, looseness=1.25] (5.center) to (4);
		\draw [in=-90, out=90, looseness=1.25] (2.center) to (7.center);
		\draw [in=270, out=90] (1.center) to (6.center);
		\draw [in=240, out=90] (10.center) to (0);
		\draw [in=90, out=-150] (11) to (9.center);
	\end{pgfonlayer}
\end{tikzpicture}
=
\begin{tikzpicture}
	\begin{pgfonlayer}{nodelayer}
		\node [style=map] (0) at (2, -2) {$V^\perp$};
		\node [style=none] (1) at (1.75, -1.25) {};
		\node [style=none] (2) at (2.25, -1.25) {};
		\node [style=none] (3) at (1.75, -2.75) {};
		\node [style=none] (4) at (2.25, -2.75) {};
		\node [style=none] (5) at (2.25, -0.5) {};
		\node [style=none] (6) at (1.75, -0.5) {};
		\node [style=none] (7) at (2.25, -3.5) {};
		\node [style=none] (8) at (1.75, -3.5) {};
		\node [style=s] (9) at (2.25, -1.25) {};
		\node [style=s] (10) at (2.25, -2.75) {};
	\end{pgfonlayer}
	\begin{pgfonlayer}{edgelayer}
		\draw [in=120, out=-90] (1.center) to (0);
		\draw [in=-90, out=60] (0) to (2.center);
		\draw [in=-60, out=90] (4.center) to (0);
		\draw [in=90, out=-120] (0) to (3.center);
		\draw [in=90, out=-90] (6.center) to (2.center);
		\draw [in=270, out=90] (1.center) to (5.center);
		\draw [in=270, out=90] (7.center) to (3.center);
		\draw [in=270, out=90] (8.center) to (4.center);
	\end{pgfonlayer}
\end{tikzpicture}
$$
For this reason, we will depict Lagrangian relations as processes, where the input wires are on the bottom and output wires on on the top.

\begin{lemma}
There is a faithful, strong symmetric monoidal functor $L:\LinRel_k\to\Lag\Rel_k$ given by the following action on the generators of $\ih_k$ (i.e.\ doubling, and then changing the colours of one of the copies):
$$
\begin{tikzpicture}
	\begin{pgfonlayer}{nodelayer}
		\node [style=map] (0) at (-3, -1) {$V$};
		\node [style=none] (1) at (-3, -0.25) {};
		\node [style=none] (2) at (-3, -1.75) {};
	\end{pgfonlayer}
	\begin{pgfonlayer}{edgelayer}
		\draw (1.center) to (0);
		\draw (0) to (2.center);
	\end{pgfonlayer}
\end{tikzpicture}
\mapsto
\begin{tikzpicture}
	\begin{pgfonlayer}{nodelayer}
		\node [style=map] (0) at (-3, -1) {$V^\perp$};
		\node [style=none] (1) at (-3, -0.25) {};
		\node [style=none] (2) at (-3, -1.75) {};
		\node [style=map] (3) at (-2.25, -1) {$V$};
		\node [style=none] (4) at (-2.25, -0.25) {};
		\node [style=none] (5) at (-2.25, -1.75) {};
	\end{pgfonlayer}
	\begin{pgfonlayer}{edgelayer}
		\draw (1.center) to (0);
		\draw (0) to (2.center);
		\draw (4.center) to (3);
		\draw (3) to (5.center);
	\end{pgfonlayer}
\end{tikzpicture}
$$
\end{lemma}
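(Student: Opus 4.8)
The plan is to exploit the fact that $L$ is, by construction, a composite of maps we already understand, together with the fact that composition and the monoidal product in both $\LinRel_k$ and $\Lag\Rel_k$ are relational composition and $\oplus$. First I would record that $L$ factors as $V \mapsto (V^\perp, V) \mapsto V^\perp \oplus V$, followed by the coordinate regrouping that collects the two $X$-graded blocks and the two $Z$-graded blocks of each object. The assignment $V \mapsto (V^\perp, V)$ is the pairing of the orthogonal complement functor of the preceding lemma with the identity, $\langle (\_)^\perp, \mathrm{id}\rangle \colon \LinRel_k \to \LinRel_k \times \LinRel_k$, hence a functor; postcomposing with $\oplus$ and the regrouping (a coordinate permutation) again yields a functor. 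The one point needing care is that relational composition respects the grading split: composing $L(W)$ after $L(V)$ only ever connects an $X$-wire to an $X$-wire and a $Z$-wire to a $Z$-wire, so the composite decomposes as $(W^\perp \circ V^\perp) \oplus (W \circ V)$. Functoriality of $(\_)^\perp$ gives $W^\perp \circ V^\perp = (W \circ V)^\perp$, so $L(W)\circ L(V) = (W\circ V)^\perp \oplus (W\circ V) = L(W\circ V)$, and the identity is preserved since $(\_)^\perp$ fixes the diagonal.

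Next I would check that the image lands in $\Lag\Rel_k$, i.e. that $L(V)$ is Lagrangian. Writing $L(V) = \{(a,b) : a \in V^\perp,\ b \in V\}$ with $a$ in the $X$-grading and $b$ in the $Z$-grading, I would compute the symplectic dual using the expression for $\omega$ in terms of the antipode-twisted inner product displayed after Equation \ref{eq:lag}. Because the $X$- and $Z$-contributions to $\omega\big((v_1,v_2),(a,b)\big)$ separate, the dual condition splits into $\langle v_2, a\rangle = 0$ for all $a \in V^\perp$ and $\langle v_1, b\rangle = 0$ for all $b \in V$; using $(V^\perp)^\perp = V$ this gives exactly $v_2 \in V$ and $v_1 \in V^\perp$, so $L(V)^\omega = L(V)$. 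Combined with the dimension count $\dim V^\perp + \dim V = n+m$, this certifies that $L(V)$ is Lagrangian. Equivalently, one can verify the pictorial criterion of Equation \ref{eq:lag} directly: doubling, bending the legs, and applying $(\_)^\perp$ to the diagram returns the original by involutivity of the orthogonal complement.

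Faithfulness is then immediate: $V$ is recovered from $L(V)$ by projecting onto the $Z$-graded summand (or by applying $(\_)^\perp$ to the $X$-graded summand), so $L$ is injective on each hom-set. For the strong symmetric monoidal structure I would take $L(I) = I$ and the structure isomorphism $L(A) \oplus L(B) \xrightarrow{\cong} L(A \oplus B)$ to be precisely the grading-regrouping twist $1 \oplus \sigma \oplus 1$ used for the forgetful functor in Lemma \ref{lemma:strong}; naturality, the coherence conditions, and compatibility with the symmetries all reduce to the corresponding facts for that regrouping. I expect the main obstacle to be bookkeeping rather than ideas: one must keep the sign and grading conventions consistent, so that the single symbol $(\_)^\perp$ — whose functoriality forces the codomain-twisted pairing — is the same orthogonal complement appearing both in the doubling and in the symplectic-dual computation, ensuring that the splitting of composition and of $L(V)^\omega$ both go through coherently.
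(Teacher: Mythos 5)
Your proposal is correct and follows essentially the same route as the paper: the only substantive point is that the doubled relation $V^\perp\oplus V$ is Lagrangian, which you verify by computing $L(V)^\omega$ directly from the splitting of the symplectic form into its $X$- and $Z$-graded contributions, while the paper dispatches the same check graphically by appealing to naturality of the antipode $-1$ in the criterion of Equation~\ref{eq:lag}. The remaining points (functoriality via $W^\perp\circ V^\perp=(W\circ V)^\perp$, faithfulness by projection onto the $Z$-graded summand, and the strong monoidal structure via the regrouping of Lemma~\ref{lemma:strong}) are exactly the ones the paper leaves implicit, and your sign-convention caveat is the right thing to flag.
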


To check this is a functor, all we have to show is that it produces Lagrangian relations. This follows immediately from the naturality of $-1$.
This functor is symmetric monoidal and faithful but not full, as for example, the following Lagrangian relation is not in the image of $L$:
$$
\begin{tikzpicture}
	\begin{pgfonlayer}{nodelayer}
		\node [style=Z] (0) at (0.5, 0) {};
		\node [style=none] (1) at (0.5, 1) {};
		\node [style=none] (2) at (0.5, -1) {};
		\node [style=X] (3) at (1.5, 0) {};
		\node [style=X] (4) at (1, 0.5) {};
		\node [style=none] (5) at (1.5, 1) {};
		\node [style=none] (6) at (1.5, -1) {};
	\end{pgfonlayer}
	\begin{pgfonlayer}{edgelayer}
		\draw (2.center) to (0);
		\draw (0) to (1.center);
		\draw (6.center) to (3);
		\draw (3) to (5.center);
		\draw (3) to (4);
		\draw (4) to (0);
	\end{pgfonlayer}
\end{tikzpicture}
=
\begin{tikzpicture}
	\begin{pgfonlayer}{nodelayer}
		\node [style=Z] (14) at (4.5, -0.25) {};
		\node [style=none] (15) at (4.5, 0.5) {};
		\node [style=none] (16) at (4.5, -0.75) {};
		\node [style=X] (17) at (3.5, -0.25) {};
		\node [style=none] (18) at (3.5, 0.5) {};
		\node [style=none] (19) at (3.5, -0.75) {};
		\node [style=none] (21) at (3.5, 1.5) {};
		\node [style=none] (22) at (4.5, 1.5) {};
		\node [style=none] (23) at (3.5, -1.75) {};
		\node [style=none] (24) at (4.5, -1.75) {};
		\node [style=X] (25) at (4, 0.25) {};
	\end{pgfonlayer}
	\begin{pgfonlayer}{edgelayer}
		\draw (16.center) to (14);
		\draw (14) to (15.center);
		\draw (19.center) to (17);
		\draw (17) to (18.center);
		\draw [in=270, out=90] (18.center) to (22.center);
		\draw [in=270, out=90] (15.center) to (21.center);
		\draw [in=270, out=90] (23.center) to (16.center);
		\draw [in=270, out=90] (24.center) to (19.center);
		\draw (17) to (25);
		\draw (14) to (25);
	\end{pgfonlayer}
\end{tikzpicture}
=
\begin{tikzpicture}
	\begin{pgfonlayer}{nodelayer}
		\node [style=Z] (96) at (21.5, -0.25) {};
		\node [style=none] (97) at (21.5, 0.75) {};
		\node [style=none] (98) at (21.5, -0.75) {};
		\node [style=X] (99) at (20.5, -0.25) {};
		\node [style=none] (100) at (20.5, 0.75) {};
		\node [style=none] (101) at (20.5, -0.75) {};
		\node [style=Z] (102) at (21, 0.75) {};
		\node [style=none] (103) at (20.5, 1.75) {};
		\node [style=none] (104) at (21.5, 1.75) {};
		\node [style=none] (105) at (20.5, -1.75) {};
		\node [style=none] (106) at (21.5, -1.75) {};
		\node [style=s] (107) at (21.25, 0.25) {};
	\end{pgfonlayer}
	\begin{pgfonlayer}{edgelayer}
		\draw (98.center) to (96);
		\draw (96) to (97.center);
		\draw (101.center) to (99);
		\draw (99) to (100.center);
		\draw [in=-135, out=60] (99) to (102);
		\draw [in=270, out=90] (100.center) to (104.center);
		\draw [in=270, out=90] (97.center) to (103.center);
		\draw [in=270, out=90] (105.center) to (98.center);
		\draw [in=270, out=90] (106.center) to (101.center);
		\draw [in=-90, out=120] (96) to (107);
		\draw [in=-45, out=90] (107) to (102);
	\end{pgfonlayer}
\end{tikzpicture}
=
\begin{tikzpicture}
	\begin{pgfonlayer}{nodelayer}
		\node [style=Z] (0) at (2.5, 0) {};
		\node [style=none] (1) at (2.5, 0.75) {};
		\node [style=none] (2) at (2.5, -0.75) {};
		\node [style=X] (3) at (1.5, 0) {};
		\node [style=none] (5) at (1.5, 0.75) {};
		\node [style=none] (6) at (1.5, -0.75) {};
		\node [style=Z] (7) at (2, 0.5) {};
		\node [style=none] (8) at (1.5, 1.75) {};
		\node [style=none] (9) at (2.5, 1.75) {};
		\node [style=none] (10) at (1.5, -1.75) {};
		\node [style=none] (11) at (2.5, -1.75) {};
		\node [style=s] (12) at (2.5, -0.75) {};
		\node [style=s] (13) at (2.5, 0.75) {};
	\end{pgfonlayer}
	\begin{pgfonlayer}{edgelayer}
		\draw (2.center) to (0);
		\draw (0) to (1.center);
		\draw (6.center) to (3);
		\draw (3) to (5.center);
		\draw (3) to (7);
		\draw (0) to (7);
		\draw [in=270, out=90] (5.center) to (9.center);
		\draw [in=270, out=90] (1.center) to (8.center);
		\draw [in=270, out=90] (10.center) to (2.center);
		\draw [in=270, out=90] (11.center) to (6.center);
	\end{pgfonlayer}
\end{tikzpicture}
$$

\section{Generators for Lagrangian relations}
\label{sec:univ}

In this section, we shall give a universal set of generators for $\Lag\Rel_k$, although, we do not directly give a complete set of identities.  Instead we defer to the completeness of the underlying category $\ih_k\cong\LinRel_k$.

Consider the following symplectomorphisms, ie.\ the discrete Fourier transform $F$,  the $a$-shift gate $S_a$ and the controlled-$a$ gate $C_a$:
$$
\left\llbracket
\begin{tikzpicture}
	\begin{pgfonlayer}{nodelayer}
		\node [style=none] (0) at (0.5, 1) {};
		\node [style=none] (1) at (0.5, -0.25) {};
		\node [style=none] (2) at (1, -0.25) {};
		\node [style=none] (3) at (1, 1) {};
		\node [style=s] (4) at (1, 0.5) {};
		\node [style=none] (5) at (0.5, 0.5) {};
	\end{pgfonlayer}
	\begin{pgfonlayer}{edgelayer}
		\draw (4) to (3.center);
		\draw [in=90, out=-90] (4) to (1.center);
		\draw [in=-90, out=90] (2.center) to (5.center);
		\draw (5.center) to (0.center);
	\end{pgfonlayer}
\end{tikzpicture}
\right\rrbracket
=
\begin{bmatrix}
0   & 1 \\
-1  & 0
\end{bmatrix}
\hspace*{.2cm}
\left\llbracket
\begin{tikzpicture}
	\begin{pgfonlayer}{nodelayer}
		\node [style=X] (0) at (0.5, 1.25) {};
		\node [style=Z] (1) at (1.5, -0.25) {};
		\node [style=scalar] (2) at (1, 0.5) {$a$};
		\node [style=none] (3) at (0.5, 1.75) {};
		\node [style=none] (4) at (1.5, 1.75) {};
		\node [style=none] (5) at (1.5, -0.75) {};
		\node [style=none] (6) at (0.5, -0.75) {};
	\end{pgfonlayer}
	\begin{pgfonlayer}{edgelayer}
		\draw (5.center) to (1);
		\draw (1) to (4.center);
		\draw [in=-90, out=135] (1) to (2);
		\draw [in=-45, out=90] (2) to (0);
		\draw (3.center) to (0);
		\draw (0) to (6.center);
	\end{pgfonlayer}
\end{tikzpicture}
\right\rrbracket
=
\begin{bmatrix}
1 &a\\
0 & 1
\end{bmatrix}
\hspace*{.2cm}
\left\llbracket
\begin{tikzpicture}
	\begin{pgfonlayer}{nodelayer}
		\node [style=Z] (430) at (219.75, 0) {};
		\node [style=X] (431) at (220.75, 1.5) {};
		\node [style=none] (432) at (219.75, 0) {};
		\node [style=none] (433) at (221.25, -0.75) {};
		\node [style=none] (434) at (219.25, 2.25) {};
		\node [style=none] (435) at (220.75, 2.25) {};
		\node [style=scalar] (436) at (220.25, 0.75) {$a$};
		\node [style=X] (437) at (217.5, 0) {};
		\node [style=Z] (438) at (218.5, 1.5) {};
		\node [style=none] (439) at (217.5, -0.75) {};
		\node [style=none] (440) at (219, -0.75) {};
		\node [style=none] (441) at (217.25, 2.25) {};
		\node [style=none] (442) at (218.5, 1.5) {};
		\node [style=scalarop] (443) at (218, 0.75) {$a$};
		\node [style=none] (445) at (219.75, -0.75) {};
		\node [style=none] (446) at (218.5, 2.25) {};
	\end{pgfonlayer}
	\begin{pgfonlayer}{edgelayer}
		\draw [in=-105, out=30] (430) to (436);
		\draw [in=-150, out=90] (436) to (431);
		\draw [in=90, out=-60] (431) to (433.center);
		\draw (431) to (435.center);
		\draw [in=135, out=-90, looseness=0.75] (434.center) to (430);
		\draw (439.center) to (437);
		\draw [in=-105, out=30] (437) to (443);
		\draw [in=-150, out=90] (443) to (438);
		\draw [in=90, out=-45, looseness=0.75] (438) to (440.center);
		\draw [in=120, out=-90] (441.center) to (437);
		\draw [in=270, out=90] (442.center) to (446.center);
		\draw [in=270, out=90] (445.center) to (432.center);
	\end{pgfonlayer}
\end{tikzpicture}
\right\rrbracket
=
\begin{bmatrix}
1 & -a & 0 & 0 \\
0 & 1 & 0 & 0 \\
0 & 0 & 1 & 0 \\
0 & 0 & a & 1
\end{bmatrix}
$$

We use the notation $G^{(j)}$ to denote a gate $G$ being applied to wire $j$, and the notation $C_a^{(i,j)}$ to denote the controlled-$a$ gate controlling on wire $i$ targetting wire $j$.

Note the right action of these gates in terms of matrix multiplication of Lagrangian subspaces for any nonzero $a \in k$ (as observed in \cite[p. 4]{aaronson}):

\begin{itemize}
\item
$F^{(i)}$ sets columns $x_i$ to $-z_i$ and $z_i$ to $x_i$.

\item
$S_a^{(i)}$ sets $z_i$ to $z_i+a\cdot x_i$.

\item
$C_a^{(i,j)}$ sets $x_j$ to $x_j- a \cdot x_i$ and $z_i$ to $z_i+a\cdot z_j$.

\end{itemize}

Using these symplectomorphisms regarded as Lagrangian relations, we have:

\begin{theorem}
\label{theorem:generators}
For any field $k$ the maps in $L(\LinRel_k)$ as well as $F$ and $S_a$ for all $a \in k$ generate $\Lag\Rel_k$.
\end{theorem}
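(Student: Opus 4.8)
The plan is to reduce the problem to a symplectic Gaussian elimination on the matrix representing a Lagrangian subspace, exploiting the fact that the generators act on such a matrix by elementary column operations. First I would note that the compact closed structure of $\Lag\Rel_k$ -- the cups and caps used above to curry relations into states -- lies entirely within $L(\LinRel_k)$, since it is obtained by doubling the corresponding self-dual compact structure of $\LinRel_k$. Consequently it suffices to generate every Lagrangian \emph{state} $I\to k^{2N}$, i.e. every Lagrangian subspace $W\subseteq k^{2N}$; an arbitrary Lagrangian relation is then recovered by bending wires. I would represent such a $W$ as the row space of a full-rank matrix $[X\mid Z]$ with $X,Z\in\Mat_{N\times N}(k)$ satisfying the isotropy condition $[X\mid Z]\,\omega\,[X\mid Z]^T = XZ^T - ZX^T = 0$.

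Next I would observe that the generators act on $[X\mid Z]$ by right multiplication (column operations), exactly as recorded in the bulleted list preceding the theorem, while row operations may be applied freely since they do not change the row space (they are identities in $\ih_k$). Crucially, the controlled gate $C_a$ is already a \emph{pure} morphism: its matrix is block diagonal with $X$-block $\left[\begin{smallmatrix}1 & -a\\ 0 & 1\end{smallmatrix}\right]$ and $Z$-block its inverse transpose $\left[\begin{smallmatrix}1 & 0\\ a & 1\end{smallmatrix}\right]$, so $C_a = L(M)$ for the shear $M$; hence $C_a$, and more generally every block-diagonal symplectomorphism $\left[\begin{smallmatrix}M & 0\\ 0 & M^{-T}\end{smallmatrix}\right]$ realizing the embedding $GL(N,k)\hookrightarrow Sp(2N,k)$, is available from $L(\LinRel_k)$ alone. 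The genuinely new generators $F$ and $S_a$ supply the Fourier transform, which swaps the $X$- and $Z$-gradings, and the shears that move the $X$-grading into the $Z$-grading.

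Then I would run the reduction. Using $F^{(i)}$ on the coordinates where $X$ is rank deficient -- exchanging those columns between the $X$- and $Z$-gradings -- I would reduce to the case that $X$ is invertible; a pure $GL$-operation $X\mapsto XM$ then normalizes $X$ to the identity, so that $W$ is the row space of $[I\mid Z]$. The isotropy condition now reads $Z = Z^T$, i.e. $Z$ is symmetric, and such a $Z$ is cleared by single-wire shifts $S_a^{(i)}$ (for the diagonal entries) together with controlled-$Z$ couplings (for the off-diagonal symmetric pairs), the latter obtained from the pure gate $C_a$ conjugated by Fourier transforms in the standard manner. This leaves the canonical Lagrangian $[I\mid 0]$, which is itself pure: it is $L$ applied to the full state $k^N$ of $\LinRel_k$, whose orthogonal complement is the zero state. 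Hence $W$ is realized as a composite of generators applied to a pure state, and together with the compact structure this exhibits every Lagrangian relation inside the generated subcategory.

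The main obstacle is the elimination step above: one must check that the available column operations genuinely suffice over an \emph{arbitrary} field and for \emph{every} Lagrangian $W$, in particular that rank-deficiency of $X$ is always correctable by Fourier transforms and that all required symmetric lower-left couplings (the controlled-$Z$ gates) are expressible from $F$, $S_a$ and the pure $C_a$. The clean way to organize this is as the Gottesman--Aaronson tableau reduction adapted to general $k$ (cf. the right-action computation cited from \cite{aaronson}); the isotropy identity $XZ^T = ZX^T$ is precisely the invariant guaranteeing that the procedure terminates at a Lagrangian normal form rather than stalling, and verifying its preservation under each elementary step is the key bookkeeping that the proof must carry out.
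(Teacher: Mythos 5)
Your proposal is correct and follows essentially the same route as the paper's proof: curry to a state $[X\mid Z]$, use Fourier transforms together with $GL$-operations to normalize to $[I\mid Z']$, observe via isotropy that $Z'$ is symmetric (a graph state), clear it with $S_a$ and Fourier-conjugated $C_a$ gates, and land on the pure state $[I\mid 0]$. The only differences are cosmetic — you normalize the $X$-block by pure column operations where the paper uses free row operations, and you describe the graph-state clearing directly where the paper writes it as an explicit induction on wires.
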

%See Appendix \ref{proof:theorem:generators} for the proof. 
This is proved by using these symplectomorphisms to reduce a Lagrangian relation by Guassian elimination to the state
$L(
\begin{tikzpicture}[scale=.5]
	\begin{pgfonlayer}{nodelayer}
		\node [style=X] (0) at (6, 3) {};
		\node [style=none] (1) at (6, 3.5) {};
	\end{pgfonlayer}
	\begin{pgfonlayer}{edgelayer}
		\draw (0) to (1.center);
	\end{pgfonlayer}
\end{tikzpicture}
^{\otimes n}$).

We can also give a presentation of this category which is very close to Selinger's CPM construction~\cite{cpm}. There are several equivalent ways to define the CPM construction. For our purposes, the most convenient one is the presentation used in both ~\cite{pqp,cqm1}, which defines $\CPM[\X]$ as the subcategory of a dagger compact closed category $\X$ whose objects are of the form $A^* \otimes A$ for $A \in \X$ and whose morphisms are generated by (i) `pure' morphisms, i.e.\ morphisms of the form $f_* \otimes f$ for $f \in \X$ and a covariant functor $(\_)_*$, and (ii) a `discard' morphism $d_A$ for every $A \in \X$ given by the counit $d_A := \epsilon_A : A^* \otimes A \to I$ of the compact closed structure on $A$.

We nearly obtain such a presentation for $\Lag\Rel_k$ using the covariant functor $(-)^\perp$ to define pure morphisms, with the only caveat being that we need to consider a family of discard morphisms, with each discard morphism being parametrised by a field element.

\begin{theorem}
\label{theorem:unbiased}
$\Lag\Rel_k$ is the monoidal subcategory of $\LinRel_k$ whose objects are of the form $k^n \oplus k^n$, for all natural numbers $n$, and whose morphisms are generated by \textit{pure} morphisms of the form $V^\perp \oplus V$ for $V \in \LinRel_k$ and for each $a \in k$, a `discard' morphism:
$$
d_a := \tikzfig{da}
$$
\end{theorem}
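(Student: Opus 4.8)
The plan is to deduce the statement from Theorem~\ref{theorem:generators} by matching up the two generating sets, so that the only genuine work is to trade the symplectomorphisms $F$ and $S_a$ for the discards $d_a$.

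For the inclusion of the generated subcategory into $\Lag\Rel_k$, I would first note that the pure morphisms $V^\perp\oplus V$ are exactly the image of the doubling functor $L$, hence Lagrangian, and that each $d_a$ is Lagrangian for a trivial reason: as an effect it is a one-dimensional subspace of the symplectic plane $k^2$, and every line is isotropic because $\omega$ is alternating, so it is automatically Lagrangian. Since $\Lag\Rel_k$ is a monoidal subcategory of $\LinRel_k$ closed under composition and $\oplus$, the subcategory generated by the pure morphisms together with the $d_a$ is contained in $\Lag\Rel_k$.

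For the reverse inclusion I would invoke Theorem~\ref{theorem:generators}: it is enough to show that $F$ and every $S_a$ lie in the subcategory generated by the pure morphisms and the $d_a$. Here it is useful to classify the pure symplectomorphisms: writing a symplectomorphism in $X/Z$ block form, $V^\perp\oplus V$ is invertible exactly when it is block diagonal of the shape $\mathrm{diag}(A,(A^{-1})^{T})$. In particular the controlled gates $C_a$ and the diagonal scaling symplectomorphisms are already pure and need no attention, whereas $F$ and $S_a$ are precisely the \emph{off}-diagonal, colour-mixing symplectomorphisms and so cannot be pure. The crux is therefore to manufacture these off-diagonal maps from the diagonal/pure data together with the discards. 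I would do this with a gate-teleportation gadget: the compact-closed cups and caps of $\Lag\Rel_k$ are pure (they are $L$ applied to the Frobenius (co)units of $\ih_k$), and bending a discard $d_a$ around such a cup, with pure diagonal corrections, should produce exactly the shift $S_a$; the Fourier map $F$ is then obtained either by the analogous double bend against $d_0$ or, once all $S_a$ are available, from the $SL_2(k)$ relations between the shears and the Weyl element together with the pure diagonal torus.

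The main obstacle is this last gadget: verifying the precise diagrammatic identity that recovers $S_a$ (and $F$) from the $d_a$ and the pure cups. This is a finite computation — one checks the resulting subspace against the column-action rules for $F$ and $S_a$ recorded before the theorem, together with the $\ih_k$ rewrites — but it is the step where the field parameter $a$ and the various signs (threaded through the antipode) must be pinned down exactly, so that the bent discard lands on the shift of the correct slope and not on some other Lagrangian line.
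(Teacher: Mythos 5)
Your proposal is correct and follows essentially the same route as the paper: reduce to Theorem~\ref{theorem:generators}, obtain each $S_a$ (and its colour-reversed partner $V_a$) by composing a pure morphism with $d_a$ via the bent-wire gadget, and then recover $F$ from the shears as $S_1\circ V_1\circ S_1$ (Euler decomposition), exactly the ``3 CNOT''-style identity the paper verifies. The one piece you defer --- the explicit diagrammatic check that the bent discard yields $S_a$ with the right slope and that the Euler decomposition closes up --- is precisely the content of the paper's proof, so nothing in your outline would fail.
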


\begin{proof}
  We just have to show that $F$ and $S_a$ can be constructed using these generators. The $S_a$ gate and its colour-reversed version $V_a$ can be obtained by composing a pure morphism with $d_a$ and $d_{-a}$, respectively:
$$
\tikzfig{Sa}\ = S_a
\qquad\qquad
\tikzfig{Va}\  =: V_a
$$

We can then obtain $F$ as $S_1 \circ V_1 \circ S_1$, which can be proven as a variation of the familiar `3 CNOT' rule for quantum circuits (see e.g.\ ~\cite[\S 3.2.1]{coecke2008interacting}):
$$
S_1 \circ V_1 \circ S_1
=
\begin{tikzpicture}[xscale=-1]
	\begin{pgfonlayer}{nodelayer}
		\node [style=Z] (0) at (3, 1.25) {};
		\node [style=X] (1) at (4, 1.75) {};
		\node [style=Z] (2) at (4, 2.5) {};
		\node [style=X] (3) at (3, 2) {};
		\node [style=Z] (4) at (3, 2.75) {};
		\node [style=X] (5) at (4, 3.25) {};
		\node [style=none] (6) at (3, 4.5) {};
		\node [style=none] (7) at (4, 4.5) {};
		\node [style=none] (8) at (3, 0) {};
		\node [style=none] (9) at (4, 0) {};
	\end{pgfonlayer}
	\begin{pgfonlayer}{edgelayer}
		\draw (6.center) to (4);
		\draw (4) to (3);
		\draw (0) to (3);
		\draw (8.center) to (0);
		\draw (9.center) to (1);
		\draw (1) to (2);
		\draw (2) to (5);
		\draw (5) to (7.center);
		\draw (3) to (2);
		\draw (4) to (5);
		\draw (0) to (1);
	\end{pgfonlayer}
\end{tikzpicture}
=
\begin{tikzpicture}[xscale=-1]
	\begin{pgfonlayer}{nodelayer}
		\node [style=Z] (0) at (3, 1.25) {};
		\node [style=X] (1) at (4, 0.75) {};
		\node [style=Z] (2) at (4, 2.5) {};
		\node [style=X] (3) at (3, 2) {};
		\node [style=Z] (4) at (3, 3.75) {};
		\node [style=X] (5) at (4, 3.25) {};
		\node [style=none] (6) at (3, 4.5) {};
		\node [style=none] (7) at (4, 4.5) {};
		\node [style=none] (8) at (3, 0) {};
		\node [style=none] (9) at (4, 0) {};
		\node [style=s] (10) at (3.5, 3.5) {};
		\node [style=s] (11) at (3.5, 1) {};
	\end{pgfonlayer}
	\begin{pgfonlayer}{edgelayer}
		\draw (6.center) to (4);
		\draw (4) to (3);
		\draw (0) to (3);
		\draw (8.center) to (0);
		\draw (9.center) to (1);
		\draw (1) to (2);
		\draw (2) to (5);
		\draw (5) to (7.center);
		\draw (3) to (2);
		\draw (1) to (11);
		\draw (11) to (0);
		\draw (5) to (10);
		\draw (10) to (4);
	\end{pgfonlayer}
\end{tikzpicture}
=
\begin{tikzpicture}[xscale=-1]
	\begin{pgfonlayer}{nodelayer}
		\node [style=X] (28) at (6.5, 0) {};
		\node [style=Z] (31) at (5, 4) {};
		\node [style=none] (33) at (5, 4.5) {};
		\node [style=none] (34) at (6.5, 4.5) {};
		\node [style=none] (35) at (5, -0.5) {};
		\node [style=none] (36) at (6.5, -0.5) {};
		\node [style=s] (37) at (6, 0.5) {};
		\node [style=s] (38) at (5.5, 3.5) {};
		\node [style=X] (39) at (6, 2.25) {};
		\node [style=Z] (40) at (6, 3) {};
		\node [style=X] (41) at (6.5, 2.25) {};
		\node [style=Z] (42) at (6.5, 3) {};
		\node [style=X] (43) at (5, 1) {};
		\node [style=Z] (44) at (5, 1.75) {};
		\node [style=X] (45) at (5.5, 1) {};
		\node [style=Z] (46) at (5.5, 1.75) {};
	\end{pgfonlayer}
	\begin{pgfonlayer}{edgelayer}
		\draw (33.center) to (31);
		\draw (36.center) to (28);
		\draw (28) to (37);
		\draw (31) to (38);
		\draw (40) to (41);
		\draw (41) to (42);
		\draw (40) to (39);
		\draw (39) to (42);
		\draw (44) to (45);
		\draw (45) to (46);
		\draw (44) to (43);
		\draw (43) to (46);
		\draw (34.center) to (42);
		\draw (40) to (38);
		\draw (46) to (39);
		\draw (41) to (28);
		\draw (37) to (45);
		\draw (35.center) to (43);
		\draw (31) to (44);
	\end{pgfonlayer}
\end{tikzpicture}
=
\begin{tikzpicture}[xscale=-1]
	\begin{pgfonlayer}{nodelayer}
		\node [style=X] (67) at (11.5, 0) {};
		\node [style=Z] (68) at (10, 4.25) {};
		\node [style=none] (69) at (10, 4.75) {};
		\node [style=none] (70) at (11.5, 4.75) {};
		\node [style=none] (71) at (10, -0.5) {};
		\node [style=none] (72) at (11.5, -0.5) {};
		\node [style=s] (73) at (11, 0.5) {};
		\node [style=s] (74) at (10.5, 3.75) {};
		\node [style=Z] (75) at (11, 3.25) {};
		\node [style=X] (76) at (11.5, 2.5) {};
		\node [style=Z] (77) at (11.5, 3.25) {};
		\node [style=X] (78) at (10, 1) {};
		\node [style=Z] (79) at (10, 1.75) {};
		\node [style=X] (80) at (10.5, 1) {};
		\node [style=X] (81) at (10.5, 1.75) {};
		\node [style=Z] (82) at (10.5, 2.5) {};
		\node [style=X] (83) at (11, 1.75) {};
		\node [style=Z] (84) at (11, 2.5) {};
	\end{pgfonlayer}
	\begin{pgfonlayer}{edgelayer}
		\draw (69.center) to (68);
		\draw (72.center) to (67);
		\draw (67) to (73);
		\draw (68) to (74);
		\draw (75) to (76);
		\draw (76) to (77);
		\draw (79) to (80);
		\draw (79) to (78);
		\draw (70.center) to (77);
		\draw (75) to (74);
		\draw (76) to (67);
		\draw (73) to (80);
		\draw (71.center) to (78);
		\draw (68) to (79);
		\draw (82) to (83);
		\draw (83) to (84);
		\draw (82) to (81);
		\draw (81) to (84);
		\draw (78) to (81);
		\draw (80) to (83);
		\draw (84) to (77);
		\draw (75) to (82);
	\end{pgfonlayer}
\end{tikzpicture}
=
\begin{tikzpicture}[xscale=-1]
	\begin{pgfonlayer}{nodelayer}
		\node [style=X] (0) at (11, 0.75) {};
		\node [style=Z] (1) at (10.25, 4.75) {};
		\node [style=none] (2) at (10.25, 5.25) {};
		\node [style=none] (3) at (11, 5.25) {};
		\node [style=none] (4) at (10.25, 0.25) {};
		\node [style=none] (5) at (11, 0.25) {};
		\node [style=s] (6) at (11, 1.5) {};
		\node [style=s] (7) at (10.25, 4) {};
		\node [style=Z] (8) at (10.25, 3.25) {};
		\node [style=X] (9) at (11, 0.75) {};
		\node [style=Z] (10) at (11, 3.25) {};
		\node [style=X] (11) at (10.25, 2.25) {};
		\node [style=Z] (12) at (10.25, 4.75) {};
		\node [style=X] (13) at (11, 2.25) {};
		\node [style=X] (14) at (10.25, 2.25) {};
		\node [style=Z] (15) at (10.25, 3.25) {};
		\node [style=X] (16) at (11, 2.25) {};
		\node [style=Z] (17) at (11, 3.25) {};
	\end{pgfonlayer}
	\begin{pgfonlayer}{edgelayer}
		\draw (2.center) to (1);
		\draw (5.center) to (0);
		\draw (0) to (6);
		\draw (1) to (7);
		\draw [in=135, out=-60] (8) to (9);
		\draw [bend right, looseness=1.25] (9) to (10);
		\draw [in=120, out=-45] (12) to (13);
		\draw [bend right, looseness=1.25] (12) to (11);
		\draw (3.center) to (10);
		\draw (8) to (7);
		\draw (6) to (13);
		\draw (4.center) to (11);
		\draw [in=150, out=-30, looseness=0.75] (15) to (16);
		\draw (16) to (17);
		\draw (15) to (14);
		\draw (14) to (17);
	\end{pgfonlayer}
\end{tikzpicture}
=
\begin{tikzpicture}[xscale=-1]
	\begin{pgfonlayer}{nodelayer}
		\node [style=X] (121) at (19.5, 1.25) {};
		\node [style=Z] (122) at (18, 4.25) {};
		\node [style=none] (123) at (18, 4.75) {};
		\node [style=none] (124) at (19.5, 4.75) {};
		\node [style=none] (125) at (18, 0.75) {};
		\node [style=none] (126) at (19.5, 0.75) {};
		\node [style=X] (128) at (19.5, 1.25) {};
		\node [style=Z] (129) at (19.5, 4.25) {};
		\node [style=X] (130) at (18, 1.25) {};
		\node [style=Z] (131) at (18, 4.25) {};
		\node [style=X] (132) at (19.5, 1.25) {};
		\node [style=X] (133) at (18, 1.25) {};
		\node [style=X] (134) at (19.5, 1.25) {};
		\node [style=Z] (135) at (19.5, 4.25) {};
		\node [style=s] (136) at (17.75, 3) {};
		\node [style=s] (138) at (18.75, 3) {};
		\node [style=s] (139) at (19.75, 3) {};
		\node [style=s] (140) at (18.25, 3) {};
	\end{pgfonlayer}
	\begin{pgfonlayer}{edgelayer}
		\draw (123.center) to (122);
		\draw (126.center) to (121);
		\draw [bend right=45, looseness=0.75] (128) to (129);
		\draw [in=120, out=-135] (131) to (130);
		\draw (124.center) to (129);
		\draw (125.center) to (130);
		\draw [in=-120, out=15] (133) to (135);
		\draw [in=90, out=-105] (131) to (136);
		\draw [in=90, out=-45, looseness=0.75] (131) to (138);
		\draw [in=90, out=-90] (136) to (133);
		\draw (128) to (138);
		\draw [in=-30, out=90] (134) to (131);
		\draw [in=-75, out=120, looseness=0.75] (134) to (140);
		\draw [in=285, out=90] (140) to (131);
		\draw [in=-90, out=75, looseness=0.75] (134) to (139);
		\draw [in=-75, out=90] (139) to (135);
	\end{pgfonlayer}
\end{tikzpicture}
=
\begin{tikzpicture}
	\begin{pgfonlayer}{nodelayer}
		\node [style=none] (0) at (24, 4.5) {};
		\node [style=none] (1) at (23.5, 4) {};
		\node [style=none] (2) at (24, 2.75) {};
		\node [style=none] (3) at (23.5, 2.75) {};
		\node [style=s] (4) at (24, 4) {};
		\node [style=none] (5) at (23.5, 4.5) {};
	\end{pgfonlayer}
	\begin{pgfonlayer}{edgelayer}
		\draw [in=-90, out=90, looseness=1.25] (2.center) to (1.center);
		\draw (1.center) to (5.center);
		\draw (0.center) to (4);
		\draw [in=90, out=-90, looseness=1.25] (4) to (3.center);
	\end{pgfonlayer}
\end{tikzpicture}
=
F
$$
\end{proof}

In the ZX-calculus literature, this decomposition of the Fourier transform is known as {\it Euler decomposition} \cite{duncan2009graph}.
A variant of this decomposition is given in \cite[p.\ 6]{control}, although in the context of plain old linear relations instead of Lagrangian relations,  an antipode is missing in their case.  A similar observation was made in \cite[(34)]{ranchin2014depicting} in terms of qudit controlled boost gates; however, the connection to phase-shift gates and Euler decomposition was not made.

From Theorem~\ref{theorem:generators}, we know that we can build any Lagrangian relation using pure Lagrangian relations and discard maps. Since the former is closed under composition and monoidal product, the following can be shown immediately from string diagram deformation.

\begin{corollary}[Purification]\label{cor:pure}
  Any Lagrangian relation can be written in the following form, for $V$ a linear relation:
  \ctikzfig{purification}
\end{corollary}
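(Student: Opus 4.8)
The plan is to normalise an arbitrary Lagrangian relation into a single pure morphism post-composed with a layer of discards, arguing by induction on the generating decomposition furnished by Theorem~\ref{theorem:unbiased}. First I would isolate the key closure property. By the lemma defining $L$, the pure morphisms are exactly the image $L(\LinRel_k)$, with $V^\perp\oplus V=L(V)$; since $L$ is a faithful strong symmetric monoidal functor, they are closed under composition, $L(W)\circ L(V)=L(W\circ V)$, closed under $\oplus$ (up to the symmetry coherence of $L$), and contain all identities, symmetries, cups and caps, each being $L$ of the corresponding generator of $\LinRel_k$. Hence the pure morphisms form a wide compact-closed sub-prop $\mathcal P\subseteq\Lag\Rel_k$, and Theorem~\ref{theorem:unbiased} states that every Lagrangian relation is a composite, under $\circ$ and $\oplus$, of morphisms of $\mathcal P$ together with the discards $d_a$.

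Next I would prove the normal form $L=D\circ P$, with $P\in\mathcal P$ pure and $D$ a monoidal product of discards $d_{a_i}$ and identity wires, by structural induction on such a composite. A single pure morphism is $\mathrm{id}\circ P$ and a single discard is $d_a\circ\mathrm{id}$, so the base cases hold. The monoidal step follows from closure of $\mathcal P$ under $\oplus$: if $L_i=D_i\circ P_i$, then $L_1\oplus L_2=(D_1\oplus D_2)\circ(P_1\oplus P_2)$. The composition step reduces to a single sliding observation: for a discard layer $D$ followed by a pure morphism $P$, one has $P\circ D=D'\circ P'$ with $P'$ pure and $D'$ again a discard layer. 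Indeed, writing the discarded wires as a summand $R$ (up to symmetry) so that the domain of $P$ is the complementary summand, one may take $P'=P\oplus\mathrm{id}_R$ and leave $D'$ the same discards on $R$; the equality is then just the interchange law, since $P$ acts only on wires that survive $D$. Applying this to $L_2\circ L_1=D_2\circ P_2\circ D_1\circ P_1$ gives $(D_2\circ D_1')\circ(P_2'\circ P_1)$, and $P_2'\circ P_1\in\mathcal P$ by closure, completing the induction.

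Reading off the resulting diagram yields exactly the advertised form: a linear relation $V$ whose codomain has been enlarged by the environment wires, doubled to the pure morphism $V^\perp\oplus V$, with the accumulated discards applied to the auxiliary outputs. The only point requiring any care, and the place where compact closure is genuinely used, is to observe that the structural maps employed when rerouting discards---cups, caps and symmetries---are themselves pure, so that commuting a discard past pure structure never produces a new generator; this guarantees that the pure boxes always fuse into a single $V^\perp\oplus V$ and that the induction terminates in the stated normal form. Everything else is the routine string-diagram deformation anticipated in the remark following Theorem~\ref{theorem:generators}.
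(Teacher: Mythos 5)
Your proposal is correct and follows essentially the same route as the paper: the paper derives the corollary from the pure-plus-discard generating set of Theorem~\ref{theorem:unbiased} together with closure of pure morphisms under composition and monoidal product, asserting the rest "immediately from string diagram deformation," and your structural induction with the interchange-law sliding step $P\circ D=D'\circ P'$ is precisely that deformation made explicit. No gaps; the observation that identities, symmetries, cups and caps are themselves pure (so rerouting discards stays within the normal form) is exactly the point the paper leaves implicit.
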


In the case when we are working with prime fields, then Lagrangian relations are exactly an instance of the CPM construction. Namely, in the category of linear relations, $(-)^*$ is given by relational converse, so we can define a dagger functor $(-)^\dagger := ((-)^\perp)^*$ such that $(-)_* = (-)^\perp$. It only remains to show that all of the discarding maps arise from a single fixed cap. This can be done as follows, for $k = \F_p$:
$$
\begin{tikzpicture}
	\begin{pgfonlayer}{nodelayer}
		\node [style=X] (3) at (3, 2) {};
		\node [style=scalar] (7) at (3.5, 1.5) {$n$};
		\node [style=none] (8) at (2.5, 1.5) {};
		\node [style=none] (9) at (2.5, 1) {};
		\node [style=none] (10) at (3.5, 1) {};
	\end{pgfonlayer}
	\begin{pgfonlayer}{edgelayer}
		\draw [in=-15, out=90] (7) to (3);
		\draw [in=-165, out=90] (8.center) to (3);
		\draw (9.center) to (8.center);
		\draw (10.center) to (7);
	\end{pgfonlayer}
\end{tikzpicture}
=
\begin{tikzpicture}
	\begin{pgfonlayer}{nodelayer}
		\node [style=X] (0) at (5.25, 1.75) {};
		\node [style=Z] (1) at (6.25, 0.5) {};
		\node [style=none] (2) at (4.25, 0) {};
		\node [style=none] (3) at (6.25, 0) {};
		\node [style=none] (4) at (5.75, 1.25) {$\iddots$};
		\node [style=none] (5) at (5.9, 1) {$n$};
		\node [style=X] (6) at (4.75, 2.25) {};
	\end{pgfonlayer}
	\begin{pgfonlayer}{edgelayer}
		\draw [in=0, out=90, looseness=1.25] (1) to (0);
		\draw [in=-90, out=165, looseness=1.25] (1) to (0);
		\draw [in=-120, out=90] (2.center) to (6);
		\draw [in=105, out=-15] (6) to (0);
		\draw (3.center) to (1);
	\end{pgfonlayer}
\end{tikzpicture}
=
\begin{tikzpicture}
	\begin{pgfonlayer}{nodelayer}
		\node [style=none] (21) at (6.25, 2) {};
		\node [style=none] (22) at (5.25, 2.25) {};
		\node [style=X] (23) at (5.25, 2.25) {};
		\node [style=Z] (24) at (6.25, 2) {};
		\node [style=none] (25) at (5, 3) {};
		\node [style=none] (26) at (6.5, 2.75) {};
		\node [style=Z] (28) at (6.5, 2.75) {};
		\node [style=none] (30) at (6.25, 0.75) {};
		\node [style=none] (31) at (5.25, 1) {};
		\node [style=X] (32) at (5.25, 1) {};
		\node [style=Z] (33) at (6.25, 0.75) {};
		\node [style=none] (34) at (5.25, 0.25) {};
		\node [style=none] (35) at (6.25, 0.25) {};
		\node [style=none] (36) at (5.7, 1.6) {$\vdots$};
		\node [style=none] (37) at (5.95, 1.55) {$n$};
		\node [style=X] (38) at (5, 3) {};
	\end{pgfonlayer}
	\begin{pgfonlayer}{edgelayer}
		\draw [in=-90, out=135, looseness=0.75] (23) to (25.center);
		\draw [in=-90, out=60] (24) to (26.center);
		\draw (34.center) to (32);
		\draw (32) to (23);
		\draw (35.center) to (33);
		\draw (33) to (24);
		\draw (24) to (23);
		\draw (33) to (32);
	\end{pgfonlayer}
\end{tikzpicture}
=
\begin{tikzpicture}
	\begin{pgfonlayer}{nodelayer}
		\node [style=X] (622) at (274.75, 2.25) {};
		\node [style=none] (623) at (275.5, 1.25) {};
		\node [style=none] (624) at (274.25, 1.25) {};
		\node [style=X] (625) at (274.25, 1.25) {};
		\node [style=Z] (626) at (275.5, 1.25) {};
		\node [style=none] (627) at (274.25, 2.5) {};
		\node [style=none] (628) at (275.5, 2.5) {};
		\node [style=X] (629) at (274.25, 2.5) {};
		\node [style=Z] (630) at (275.5, 2.5) {};
		\node [style=X] (631) at (274.75, 0.75) {};
		\node [style=none] (632) at (275.5, -0.25) {};
		\node [style=none] (633) at (274.25, -0.25) {};
		\node [style=X] (634) at (274.25, -0.25) {};
		\node [style=Z] (635) at (275.5, -0.25) {};
		\node [style=none] (636) at (274.25, -1) {};
		\node [style=none] (637) at (275.5, -1) {};
		\node [style=none] (638) at (274.8, 1.59) {$\vdots$};
		\node [style=none] (639) at (275.05, 1.5) {$n$};
	\end{pgfonlayer}
	\begin{pgfonlayer}{edgelayer}
		\draw [in=-15, out=120] (623.center) to (622);
		\draw [in=-165, out=165, looseness=1.50] (624.center) to (622);
		\draw (625) to (627.center);
		\draw (626) to (628.center);
		\draw [in=-30, out=120] (632.center) to (631);
		\draw [in=-165, out=150, looseness=1.50] (633.center) to (631);
		\draw (636.center) to (634);
		\draw (634) to (625);
		\draw (637.center) to (635);
		\draw (635) to (626);
	\end{pgfonlayer}
\end{tikzpicture}
$$

\begin{corollary}
\label{cor}
For $p$ prime, $\Lag\Rel_{\F_p} \cong \CPM[\LinRel_{\F_p}]$.
\end{corollary}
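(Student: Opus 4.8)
The plan is to show that $\Lag\Rel_{\F_p}$ and $\CPM[\LinRel_{\F_p}]$ are literally the \emph{same} monoidal subcategory of $\LinRel_{\F_p}$, by checking that they have the same generating morphisms. By Theorem~\ref{theorem:unbiased}, $\Lag\Rel_{\F_p}$ is generated by the pure morphisms $V^\perp \oplus V$ together with the discard family $\{d_a : a \in \F_p\}$. On the other side, with the dagger $(-)^\dagger := ((-)^\perp)^*$ chosen so that the covariant functor $(-)_*$ of the CPM construction is exactly $(-)^\perp$, the category $\CPM[\LinRel_{\F_p}]$ is by definition generated by the pure morphisms $f_* \oplus f = f^\perp \oplus f$ and the single discard $\epsilon_A$. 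The pure generators therefore coincide on the nose, and the entire content of the corollary is the reconciliation of the two notions of discarding.

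First I would verify that the prescription $(-)^\dagger = ((-)^\perp)^*$ genuinely defines a dagger compact structure, so that the CPM construction applies. Since relational converse $(-)^*$ is the standard dagger on $\LinRel_{\F_p}$ and $(-)^\perp$ is an identity-on-objects covariant monoidal involution (by the orthogonal-complement lemma and the remark that $(V^\perp)^\perp = V$), the composite is an identity-on-objects contravariant functor; involutivity $(V^\dagger)^\dagger = V$ reduces to the commutation $(V^*)^\perp = (V^\perp)^*$ of the two operations, which is a direct check on subspaces. On objects, self-duality of $\LinRel_{\F_p}$ gives $A^* \otimes A = n \oplus n$, matching the objects $2^n \oplus 2^n$ of $\Lag\Rel_{\F_p}$ appearing in Theorem~\ref{theorem:unbiased}.

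The crux is to show that the single CPM discard---the counit/cap $\epsilon_A$---generates the entire family $\{d_a\}$, and conversely. This is precisely the computation displayed immediately above the corollary, which rewrites $d_n$ as $n$ parallel copies of the cap $\epsilon$ post-composed with pure morphisms. Because $p$ is prime, every scalar $a \in \F_p$ is of the form $n\cdot 1$ for some $0 \le n < p$, so this one picture exhibits each $d_a$ inside the subcategory generated by pures and $\epsilon$; conversely $\epsilon$ is itself the discard $d_1$. Hence the two generating sets generate the same subcategory, so $\Lag\Rel_{\F_p} = \CPM[\LinRel_{\F_p}]$ as subcategories of $\LinRel_{\F_p}$, giving the claimed isomorphism.

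The step I expect to be the main obstacle---and the only place where primality is genuinely used---is this last generation of the discard family from a single cap. Over a non-prime field there is no uniform way to write an arbitrary scalar as an integer multiple of $1$, so the ``$n$ copies of the cap'' construction no longer exhausts $\{d_a\}$; this is exactly why the clean CPM description fails in general and one is forced back to the parametrised family of Theorem~\ref{theorem:unbiased}. The remaining verification, that the displayed rewriting is valid in $\ih_{\F_p}$ (spider fusion, the bialgebra laws, and the antipode manipulations of the earlier lemmas), is routine given the completeness of $\ih_k \cong \LinRel_k$.
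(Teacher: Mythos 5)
Your proposal is correct and follows essentially the same route as the paper: fix the dagger $(-)^\dagger := ((-)^\perp)^*$ so that the CPM ``lower-star'' functor is $(-)^\perp$, observe that the pure generators then coincide with those of Theorem~\ref{theorem:unbiased}, and reduce everything to showing that the single CPM cap generates the whole family $\{d_a\}$ via the ``$n$ copies of the cap'' computation, which is exactly the displayed derivation preceding the corollary and is exactly where primality of $p$ enters. Your additional remark that one should verify $((-)^\perp)^*$ is a bona fide dagger (commutation of converse with orthogonal complement) is a sensible explicit check that the paper leaves implicit.
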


\section{Affine Lagrangian relations}
\label{sec:aff}

Affine Lagrangian relations are perhaps of more practical interest than plain old Lagrangian relations.  As we will discuss in this section, these give a semantics for qudit stabilizer circuits as well as certain electrical circuits.  We use our universal set of generators for Lagrangian relations as well as the presentation for affine relations to get a universal set of generators for affine Lagrangian relations.

\begin{definition}[{\cite[\S A]{affine}}]
Let $\aih_k$ denote the the prop presented by $\ih_k$ in addition to the generator
$\begin{tikzpicture}[scale=.7]
	\begin{pgfonlayer}{nodelayer}
		\node [style=X] (0) at (0.25, 0) {$1$};
		\node [style=none] (1) at (0.25, 0.35) {};
	\end{pgfonlayer}
	\begin{pgfonlayer}{edgelayer}
		\draw (0) to (1.center);
	\end{pgfonlayer}
\end{tikzpicture}$
 and three equations:
$$
\begin{tikzpicture}
	\begin{pgfonlayer}{nodelayer}
		\node [style=X] (0) at (0.75, 0) {$1$};
		\node [style=X] (1) at (0.75, 0.5) {};
		\node [style=none] (2) at (1.25, 1) {};
		\node [style=none] (3) at (1.25, -0.5) {};
	\end{pgfonlayer}
	\begin{pgfonlayer}{edgelayer}
		\draw (0) to (1);
		\draw (3.center) to (2.center);
	\end{pgfonlayer}
\end{tikzpicture}
=
\begin{tikzpicture}
	\begin{pgfonlayer}{nodelayer}
		\node [style=X] (0) at (0.75, 0) {$1$};
		\node [style=X] (1) at (0.75, 0.5) {};
		\node [style=none] (2) at (1.25, 1) {};
		\node [style=none] (3) at (1.25, -0.5) {};
		\node [style=Z] (4) at (1.25, 0.5) {};
		\node [style=Z] (5) at (1.25, 0) {};
	\end{pgfonlayer}
	\begin{pgfonlayer}{edgelayer}
		\draw (0) to (1);
		\draw (4) to (2.center);
		\draw (3.center) to (5);
	\end{pgfonlayer}
\end{tikzpicture}
\hspace*{.5cm}
\begin{tikzpicture}
	\begin{pgfonlayer}{nodelayer}
		\node [style=X] (0) at (0.25, 0) {$1$};
		\node [style=Z] (1) at (0.25, 0.5) {};
		\node [style=none] (2) at (0, 1) {};
		\node [style=none] (3) at (0.5, 1) {};
	\end{pgfonlayer}
	\begin{pgfonlayer}{edgelayer}
		\draw (0) to (1);
		\draw [in=-90, out=45] (1) to (3.center);
		\draw [in=135, out=-90] (2.center) to (1);
	\end{pgfonlayer}
\end{tikzpicture}
=
\begin{tikzpicture}
	\begin{pgfonlayer}{nodelayer}
		\node [style=X] (0) at (0, 0.5) {$1$};
		\node [style=none] (2) at (0, 1) {};
		\node [style=none] (3) at (0.5, 1) {};
		\node [style=X] (4) at (0.5, 0.5) {$1$};
	\end{pgfonlayer}
	\begin{pgfonlayer}{edgelayer}
		\draw (0) to (2.center);
		\draw (4) to (3.center);
	\end{pgfonlayer}
\end{tikzpicture}
\hspace*{.5cm}
\begin{tikzpicture}
	\begin{pgfonlayer}{nodelayer}
		\node [style=X] (0) at (0.75, 0) {$1$};
		\node [style=Z] (1) at (0.75, 0.5) {};
	\end{pgfonlayer}
	\begin{pgfonlayer}{edgelayer}
		\draw (0) to (1);
	\end{pgfonlayer}
\end{tikzpicture}
=
$$
\end{definition}

The following was stated slightly differently in the original paper:
\begin{definition}[{\cite[Definition 5]{affine}}]
Let $\Aff\Rel_k$ denote prop whose morphisms $n\to m$ are the (possibly empty) affine subspaces of $k^{n}\oplus k^{m}$, with composition given by relational composition and tensor product given by the direct sum.
\end{definition}

\begin{theorem}[{\cite[Thm.\ 17]{affine}}]
$\aih_k$ is a presentation of $\Aff\Rel_k$.
\end{theorem}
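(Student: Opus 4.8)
The plan is to show that the interpretation functor $\llbracket-\rrbracket:\aih_k\to\Aff\Rel_k$, which extends the isomorphism $\ih_k\cong\LinRel_k$ by sending the new generator to the one-point affine subspace $\{1\}\subseteq k$ viewed as a state $0\to 1$, is an isomorphism of props. First I would verify \emph{soundness}, i.e.\ that the three defining equations of $\aih_k$ hold in $\Aff\Rel_k$: reading the semantics off the generators, the first says that $\{1\}$ is a classical point for the white ($Z$) copying comonoid (copying the constant $1$ returns two copies), the second records its compatibility with the grey ($X$) additive structure, and the third is a normalization ensuring that deleting $\{1\}$ is the identity scalar. Each is a direct check with affine subspaces.

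Next I would prove \emph{fullness}. Every affine subspace $A\subseteq k^n\oplus k^m$ is either empty or a coset $w+V$ of a linear subspace $V$. In the nonempty case I would realize $V$ as an $\ih_k$-diagram using completeness for linear relations, build the translation vector $w$ from the affine generator together with the scalar and additive generators (each coordinate being $w_i\cdot\{1\}$), and translate $V$ by $w$ using the grey monoid. The empty relation is obtained by feeding $\{1\}$ into the grey counit, imposing the inconsistent constraint that a value be simultaneously $1$ and $0$. Hence $\llbracket-\rrbracket$ is surjective on morphisms.

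The crux is \emph{faithfulness} (completeness), which I would establish by reducing the affine theory to the already-complete linear one via \emph{homogenization}. The structural heart is a normal-form lemma stating that the three new equations suffice to rewrite any diagram $d:n\to m$ of $\aih_k$ into the form $\widehat{d}\circ(\{1\}\otimes\mathrm{id}_n)$ with $\widehat{d}$ a diagram of $\ih_k$ carrying one extra input wire; operationally, equations one and two let one copy and merge every occurrence of the affine generator into a single register, while equation three discards the redundant copies. This mirrors on syntax the injective map $\Aff\Rel_k(n,m)\hookrightarrow\LinRel_k(1+n,m)$ sending $A$ to the cone $\widehat{A}=\mathrm{span}(\{1\}\oplus A)$, from which $A$ is recovered as the slice $\{a:(1,a)\in\widehat{A}\}$. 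Granting that the reduction produces $\widehat{d}$ with $\llbracket\widehat{d}\rrbracket=\widehat{\llbracket d\rrbracket}$, two diagrams with equal affine denotation yield linear diagrams $\widehat{d_1},\widehat{d_2}$ of equal denotation, hence $\widehat{d_1}=\widehat{d_2}$ in $\ih_k$ by completeness, and postcomposing with $\{1\}\otimes\mathrm{id}_n$ gives $d_1=d_2$.

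The main obstacle is precisely this normal-form lemma, and in particular its \emph{canonicity}: it is not enough to extract \emph{some} linear lift $\widehat{d}$ of $d$, since two diagrams agreeing only on the affine slice need not be equal in $\ih_k$. One must show the three equations force the extracted linear relation to be exactly the cone $\widehat{\llbracket d\rrbracket}$ --- equivalently, that they generate the full congruence identifying all lifts that coincide after plugging the homogenizing wire with $\{1\}$. Verifying that these three local equations suffice to collapse every affine identity onto a linear one is the technical core of the argument.
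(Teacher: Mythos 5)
This theorem is not proved in the paper at all: it is imported verbatim from \cite[Thm.~17]{affine}, so there is no in-paper argument to compare against. Your sketch does, however, follow essentially the same strategy as the proof in that reference: soundness of the three extra axioms, expressivity via coset decomposition $w+V$, and completeness by homogenizing to $\ih_k\cong\LinRel_k$ through the correspondence $A\mapsto\widehat{A}=\mathrm{span}(\{1\}\oplus A)$ and a normal form $\widehat{d}\circ(\{1\}\otimes\mathrm{id}_n)$.

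Two corrections to your analysis. First, the canonicity worry you flag as the technical core is actually automatic in the nonempty case: a linear subspace $L\subseteq k^{1+n+m}$ whose slice $\{x:(1,x)\in L\}$ is a fixed nonempty affine set $A$ is forced to equal $\widehat{A}$ (elements with nonzero first coordinate are recovered by rescaling into the slice, and $(0,u)\in L$ iff $u$ lies in the direction space of $A$), so any two lifts with equal nonempty affine denotation already have equal linear denotation and $\ih_k$-completeness applies. The genuine residual difficulty sits exactly where your argument is silent: the \emph{empty} relation, whose lifts are not unique (any subspace of $\{0\}\oplus k^{n+m}$ has empty slice), so one must separately prove that all diagrams denoting the empty relation of a given arity are provably equal. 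Second, and relatedly, you misread the first axiom: it is not a compatibility of $\{1\}$ with the additive (grey) structure but the absorption law $e\otimes\mathrm{id}=e\otimes\top$ for the empty scalar $e$ (the point $1$ tested against $0$), and this is precisely the axiom needed to collapse all representatives of the empty relation. The other two axioms are, as you say, copyability and deletability of $\{1\}$. With the empty case treated explicitly, your outline matches the cited proof.
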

Because the equation on the left below holds,  we can use the phased-spider notation (as in the ZX-calculus), so that for all $a,b \in \F_p$: %of the ZX-calculus in $\aih_{\F_p}$, where the numbers inside the black spiders are  seen of elements of $\F_p$:
$$
\begin{tikzpicture}
	\begin{pgfonlayer}{nodelayer}
		\node [style=X] (0) at (0, 0) {};
		\node [style=none] (1) at (0, 0.75) {};
		\node [style=scalar] (2) at (-0.5, -0.75) {$a$};
		\node [style=scalar] (3) at (0.5, -0.75) {$b$};
		\node [style=X] (4) at (-0.5, -1.5) {$1$};
		\node [style=X] (5) at (0.5, -1.5) {$1$};
	\end{pgfonlayer}
	\begin{pgfonlayer}{edgelayer}
		\draw (0) to (1.center);
		\draw (4) to (2);
		\draw [in=-150, out=90] (2) to (0);
		\draw [in=90, out=-30] (0) to (3);
		\draw (3) to (5);
	\end{pgfonlayer}
\end{tikzpicture}
=
\begin{tikzpicture}
	\begin{pgfonlayer}{nodelayer}
		\node [style=X] (6) at (2, 0) {};
		\node [style=none] (7) at (2, 0.75) {};
		\node [style=scalar] (8) at (1.5, -0.75) {$a$};
		\node [style=scalar] (9) at (2.5, -0.75) {$b$};
		\node [style=X] (10) at (2, -2) {$1$};
		\node [style=Z] (11) at (2, -1.5) {};
	\end{pgfonlayer}
	\begin{pgfonlayer}{edgelayer}
		\draw (6) to (7.center);
		\draw [in=-150, out=90] (8) to (6);
		\draw [in=90, out=-30] (6) to (9);
		\draw (11) to (10);
		\draw [in=-90, out=30] (11) to (9);
		\draw [in=-90, out=150] (11) to (8);
	\end{pgfonlayer}
\end{tikzpicture}
=
\begin{tikzpicture}
	\begin{pgfonlayer}{nodelayer}
		\node [style=none] (7) at (2, 0.75) {};
		\node [style=scalar] (8) at (2, 0) {$a+b$};
		\node [style=X] (10) at (2, -0.75) {$1$};
	\end{pgfonlayer}
	\begin{pgfonlayer}{edgelayer}
		\draw (10) to (8);
		\draw (8) to (7.center);
	\end{pgfonlayer}
\end{tikzpicture}
\hspace*{.5cm}
\begin{tikzpicture}
	\begin{pgfonlayer}{nodelayer}
		\node [style=X] (0) at (7, 3.5) {$a$};
		\node [style=none] (1) at (6.75, 4) {};
		\node [style=none] (2) at (7.25, 4) {};
		\node [style=none] (3) at (6.75, 3) {};
		\node [style=none] (4) at (7.25, 3) {};
		\node [style=none] (5) at (6.75, 4.25) {};
		\node [style=none] (6) at (7.25, 4.25) {};
		\node [style=none] (7) at (7.25, 2.75) {};
		\node [style=none] (8) at (6.75, 2.75) {};
		\node [style=none] (9) at (7, 4) {$\cdots$};
		\node [style=none] (10) at (7, 3) {$\cdots$};
	\end{pgfonlayer}
	\begin{pgfonlayer}{edgelayer}
		\draw [in=-90, out=135] (0) to (1.center);
		\draw [in=-90, out=45] (0) to (2.center);
		\draw [in=-45, out=90] (4.center) to (0);
		\draw [in=90, out=-135] (0) to (3.center);
		\draw (8.center) to (3.center);
		\draw (7.center) to (4.center);
		\draw (2.center) to (6.center);
		\draw (5.center) to (1.center);
	\end{pgfonlayer}
\end{tikzpicture}
:=
\begin{tikzpicture}
	\begin{pgfonlayer}{nodelayer}
		\node [style=X] (0) at (7, 4) {};
		\node [style=none] (1) at (6.75, 4.5) {};
		\node [style=none] (2) at (7.25, 4.5) {};
		\node [style=none] (3) at (6.75, 2.5) {};
		\node [style=none] (4) at (7.25, 2.5) {};
		\node [style=none] (5) at (6.75, 4.75) {};
		\node [style=none] (6) at (7.25, 4.75) {};
		\node [style=none] (7) at (7.25, 2.25) {};
		\node [style=none] (8) at (6.75, 2.25) {};
		\node [style=none] (9) at (7, 4.5) {$\cdots$};
		\node [style=none] (10) at (7, 2.5) {$\cdots$};
		\node [style=X] (11) at (7.75, 2.75) {$1$};
		\node [style=scalar] (12) at (7.75, 3.25) {$a$};
	\end{pgfonlayer}
	\begin{pgfonlayer}{edgelayer}
		\draw [in=-90, out=135] (0) to (1.center);
		\draw [in=-90, out=45] (0) to (2.center);
		\draw [in=-45, out=90, looseness=0.75] (4.center) to (0);
		\draw [in=90, out=-135, looseness=0.75] (0) to (3.center);
		\draw (8.center) to (3.center);
		\draw (7.center) to (4.center);
		\draw (2.center) to (6.center);
		\draw (5.center) to (1.center);
		\draw (11) to (12);
		\draw [in=-15, out=90, looseness=0.75] (12) to (0);
	\end{pgfonlayer}
\end{tikzpicture}
\hspace{.5cm}
\begin{tikzpicture}
	\begin{pgfonlayer}{nodelayer}
		\node [style=none] (0) at (0.25, 2.5) {};
		\node [style=none] (1) at (1.25, 2.5) {};
		\node [style=none] (2) at (0.25, 0) {};
		\node [style=none] (3) at (1.25, 0) {};
		\node [style=X] (4) at (0, 1.5) {$a$};
		\node [style=X] (5) at (1, 1) {$b$};
		\node [style=none] (6) at (-0.25, 2.5) {};
		\node [style=none] (7) at (0.75, 2.5) {};
		\node [style=none] (8) at (0.75, 0) {};
		\node [style=none] (9) at (-0.25, 0) {};
		\node [style=none] (10) at (0, 2.25) {$\cdots$};
		\node [style=none] (11) at (1, 2.25) {$\cdots$};
		\node [style=none] (12) at (0, 0.25) {$\cdots$};
		\node [style=none] (13) at (1, 0.25) {$\cdots$};
	\end{pgfonlayer}
	\begin{pgfonlayer}{edgelayer}
		\draw [in=-60, out=90, looseness=1.25] (3.center) to (5);
		\draw [in=-90, out=60] (5) to (1.center);
		\draw [in=60, out=-90, looseness=1.25] (0.center) to (4);
		\draw [in=90, out=-60] (4) to (2.center);
		\draw (4) to (5);
		\draw [in=-120, out=90] (9.center) to (4);
		\draw [in=-90, out=120, looseness=1.25] (4) to (6.center);
		\draw [in=-90, out=120] (5) to (7.center);
		\draw [in=-120, out=90, looseness=1.25] (8.center) to (5);
	\end{pgfonlayer}
\end{tikzpicture}
=
\begin{tikzpicture}
	\begin{pgfonlayer}{nodelayer}
		\node [style=none] (14) at (2.75, 2.5) {};
		\node [style=none] (15) at (3.75, 2.5) {};
		\node [style=none] (16) at (2.75, 0) {};
		\node [style=none] (17) at (3.75, 0) {};
		\node [style=X] (18) at (3, 1.25) {};
		\node [style=X] (19) at (3, 1.25) {$\hspace*{.05cm}a+b\hspace*{.05cm}$};
		\node [style=none] (20) at (2.25, 2.5) {};
		\node [style=none] (21) at (3.25, 2.5) {};
		\node [style=none] (22) at (3.25, 0) {};
		\node [style=none] (23) at (2.25, 0) {};
		\node [style=none] (24) at (2.5, 2.25) {$\cdots$};
		\node [style=none] (25) at (3.5, 2.25) {$\cdots$};
		\node [style=none] (26) at (2.5, 0.25) {$\cdots$};
		\node [style=none] (27) at (3.5, 0.25) {$\cdots$};
	\end{pgfonlayer}
	\begin{pgfonlayer}{edgelayer}
		\draw [in=-30, out=90, looseness=1.25] (17.center) to (19);
		\draw [in=-90, out=30, looseness=1.25] (19) to (15.center);
		\draw [in=120, out=-90] (14.center) to (18);
		\draw [in=90, out=-120] (18) to (16.center);
		\draw [in=-150, out=90, looseness=1.25] (23.center) to (18);
		\draw [in=-90, out=150, looseness=1.25] (18) to (20.center);
		\draw [in=-90, out=60] (19) to (21.center);
		\draw [in=-60, out=90] (22.center) to (19);
	\end{pgfonlayer}
\end{tikzpicture}
$$
\begin{definition}
Let $\Aff\Lag\Rel_k$ denote the monoidal category whose objects are symplectic vector spaces with morphisms are generated by the image of $\Lag\Rel_k \xrightarrow{E} \LinRel_k \to \Aff\Rel_k$ as well as all affine shifts and where the  tensor product is the direct sum.
\end{definition}
Because the tensor product is defined in the same way as in $\Lag\Rel_k$, as in Lemma \ref{lemma:strong}, the forgetful functor  $\Aff\Lag\Rel_k\to \Aff\Rel_k$ is faithful, but only {\em strong} monoidal.

\begin{definition}
Let $\alr_k$ denote the monoidal subcategory of $\aih_k$ with objects $2n$, generated by the morphisms in the image of $\Lag\Rel_k\xrightarrow{E} \LinRel_k \cong \ih_k \to \aih_k$ as well as the following generator:
$$
\begin{tikzpicture}
	\begin{pgfonlayer}{nodelayer}
		\node [style=X] (0) at (0, 0) {$1$};
		\node [style=Z] (1) at (0.5, 0) {};
		\node [style=none] (2) at (0, 0.75) {};
		\node [style=none] (3) at (0.5, 0.75) {};
	\end{pgfonlayer}
	\begin{pgfonlayer}{edgelayer}
		\draw (1) to (3.center);
		\draw (2.center) to (0);
	\end{pgfonlayer}
\end{tikzpicture}
$$
\end{definition}

\begin{lemma}
$\alr_k$ is a presentation of $\Aff\Lag\Rel_k$.
\end{lemma}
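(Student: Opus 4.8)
The plan is to exploit that $\alr_k$ is by construction a sub-prop of $\aih_k$, which presents $\Aff\Rel_k$, exactly as $\ih_k$ presents $\LinRel_k$. Composing the inclusion with this presentation gives a faithful, identity-on-objects monoidal functor $\alr_k \hookrightarrow \aih_k \xrightarrow{\cong} \Aff\Rel_k$, and I would show that it corestricts to an isomorphism $\alr_k \cong \Aff\Lag\Rel_k$. Faithfulness is immediate (the presentation is an isomorphism, and a sub-prop inherits it), and no new equations need to be verified: as in the linear case, completeness is inherited from $\aih_k \cong \Aff\Rel_k$. So the entire content is to check (i) \emph{soundness}, that every generator of $\alr_k$ lands in $\Aff\Lag\Rel_k$, and (ii) \emph{universality}, that the generating morphisms of $\Aff\Lag\Rel_k$ all lie in the image of $\alr_k$. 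Together with bijectivity on objects (both sides have objects $2n$), these give full $+$ faithful $+$ bijective-on-objects, hence an isomorphism of props.

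For soundness, the image of $\Lag\Rel_k$ lies in $\Aff\Lag\Rel_k$ by definition, so only the extra affine generator must be checked. Computing its value in $\Aff\Rel_k$ gives the affine subspace $\{(1,z) : z \in k\} \subseteq k \oplus k$, which is the translate by $(1,0)$ of the one-dimensional (hence Lagrangian) $Z$-axis $\{(0,z) : z\in k\}$. Since $\Aff\Lag\Rel_k$ contains this Lagrangian state (it is in the image of $\Lag\Rel_k$) and is closed under the affine shift $T_{(1,0)}$, the generator indeed lies in $\Aff\Lag\Rel_k$. Because $\Aff\Lag\Rel_k$ is a sub-prop, closed under composition and $\oplus$, the whole image of $\alr_k$ lands inside it.

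For universality, $\Aff\Lag\Rel_k$ is generated by the image of $\Lag\Rel_k$ together with all affine shifts $T_v$, $v \in k^{2n}$; the former are generators of $\alr_k$, so it suffices to realise every $T_v$ from the single affine generator and Lagrangian relations. Since the image of $\Lag\Rel_k$ already supplies \emph{all} Lagrangian relations — in particular the symplectomorphisms $F$, $S_a$ and the scalings $\mathrm{diag}(a,a^{-1})$, by Theorems~\ref{theorem:generators} and~\ref{theorem:unbiased} — I would first produce every single-mode affine state: applying $\mathrm{diag}(a,a^{-1})$ to the generator sends $\{(1,z):z\}$ to $\{(a,z):z\}$ for $a \neq 0$ (and $a=0$ is the $Z$-axis itself), and conjugating by $F$ moves the shift into the $Z$-slot. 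Using the compact-closed structure and the doubled Frobenius (copy/add) maps $L(\text{spiders})$ — all of which are Lagrangian and hence in $\alr_k$ — these affine states are promoted to the corresponding single-coordinate affine-shift automorphisms. As the translation group $(k^{2n},+)$ is generated by single-coordinate shifts and these compose, every $T_v$ is obtained, so $\alr_k$ contains all the generators of $\Aff\Lag\Rel_k$.

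The main obstacle is the universality step, and specifically the constraint that individual-wire operations — single-wire scalars or $X$/$Z$ phases — are \emph{not} Lagrangian and are therefore unavailable as generators of $\alr_k$. Every manipulation must instead be performed at the level of whole symplectic modes: the arbitrary shift amount $a$ must be produced by the genuinely symplectic scaling $\mathrm{diag}(a,a^{-1})$ acting on the $\{(1,z):z\}$ state rather than by multiplying a single wire by $a$, and the passage from shift states to shift automorphisms must use the doubled Lagrangian Frobenius structure rather than a bare wire-level addition. Once this is carried out carefully, closure of the image of $\alr_k$ under composition and $\oplus$ yields fullness, and combined with the inherited faithfulness and completeness this establishes $\alr_k \cong \Aff\Lag\Rel_k$.
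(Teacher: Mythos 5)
Your proposal is correct and takes essentially the same route as the paper: the paper's proof likewise reduces everything to generating the single\-/mode affine shifts, obtaining the shift amount $a$ via the doubled (hence Lagrangian) scalar $\mathrm{diag}(a,a^{-1})$ applied to the generator, moving it to the $Z$-slot with $F$, and promoting the resulting states to shift gates by plugging them into the doubled addition spider. Your extra remarks on soundness of the generator and on completeness being inherited from $\aih_k\cong\Aff\Rel_k$ are correct and merely make explicit what the paper leaves implicit.
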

\begin{proof}
All the affine shifts can be produced from tensoring and composing these two maps on the right:
$$
\begin{tikzpicture}
	\begin{pgfonlayer}{nodelayer}
		\node [style=X] (0) at (0, 0) {$1$};
		\node [style=Z] (1) at (0.5, 0) {};
		\node [style=none] (2) at (0, 0.75) {};
		\node [style=none] (3) at (0.5, 0.75) {};
		\node [style=none] (4) at (0.5, 1.5) {};
		\node [style=none] (5) at (0, 1.5) {};
		\node [style=s] (6) at (0.5, 0.75) {};
	\end{pgfonlayer}
	\begin{pgfonlayer}{edgelayer}
		\draw (1) to (3.center);
		\draw (2.center) to (0);
		\draw [in=270, out=90] (3.center) to (5.center);
		\draw [in=270, out=90] (2.center) to (4.center);
	\end{pgfonlayer}
\end{tikzpicture}
=
\begin{tikzpicture}
	\begin{pgfonlayer}{nodelayer}
		\node [style=X] (0) at (0.5, 0) {$1$};
		\node [style=Z] (1) at (0, 0) {};
		\node [style=none] (2) at (0.5, 0.75) {};
		\node [style=none] (3) at (0, 0.75) {};
	\end{pgfonlayer}
	\begin{pgfonlayer}{edgelayer}
		\draw (1) to (3.center);
		\draw (2.center) to (0);
	\end{pgfonlayer}
\end{tikzpicture} \hspace*{.1cm} \in  \alr_k
\hspace*{.5cm}
\implies
\hspace*{.5cm}
\begin{tikzpicture}
	\begin{pgfonlayer}{nodelayer}
		\node [style=Z] (447) at (224.25, 0.75) {};
		\node [style=X] (448) at (222.75, 0.75) {};
		\node [style=none] (449) at (223.75, -0.25) {};
		\node [style=none] (450) at (222.25, -0.25) {};
		\node [style=none] (451) at (224.25, 1.5) {};
		\node [style=none] (452) at (222.75, 1.5) {};
		\node [style=X] (453) at (223.25, -1) {$1$};
		\node [style=Z] (454) at (224.75, -1) {};
		\node [style=scalar] (455) at (223.25, -0.25) {$a$};
		\node [style=scalarop] (456) at (224.75, -0.25) {$a$};
		\node [style=none] (457) at (223.75, -1.5) {};
		\node [style=none] (458) at (222.25, -1.5) {};
	\end{pgfonlayer}
	\begin{pgfonlayer}{edgelayer}
		\draw [in=90, out=-150] (447) to (449.center);
		\draw [in=-150, out=90] (450.center) to (448);
		\draw (447) to (451.center);
		\draw (448) to (452.center);
		\draw (457.center) to (449.center);
		\draw (458.center) to (450.center);
		\draw (453) to (455);
		\draw (454) to (456);
		\draw [in=-30, out=90] (456) to (447);
		\draw [in=-30, out=90] (455) to (448);
	\end{pgfonlayer}
\end{tikzpicture}
=
\begin{tikzpicture}
	\begin{pgfonlayer}{nodelayer}
		\node [style=X] (17) at (5, 0) {$a$};
		\node [style=none] (18) at (5.5, -1.5) {};
		\node [style=none] (19) at (5, -1.5) {};
		\node [style=none] (20) at (5.5, 1.5) {};
		\node [style=none] (21) at (5, 1.5) {};
	\end{pgfonlayer}
	\begin{pgfonlayer}{edgelayer}
		\draw (19.center) to (17);
		\draw (17) to (21.center);
		\draw (18.center) to (20.center);
	\end{pgfonlayer}
\end{tikzpicture},
\hspace*{.5cm}
\begin{tikzpicture}
	\begin{pgfonlayer}{nodelayer}
		\node [style=Z] (459) at (226.25, 0.75) {};
		\node [style=X] (460) at (227.75, 0.75) {};
		\node [style=none] (461) at (226.75, -0.25) {};
		\node [style=none] (462) at (228.25, -0.25) {};
		\node [style=none] (463) at (226.25, 1.5) {};
		\node [style=none] (464) at (227.75, 1.5) {};
		\node [style=X] (465) at (227.25, -1) {$1$};
		\node [style=Z] (466) at (225.75, -1) {};
		\node [style=scalar] (467) at (227.25, -0.25) {$a$};
		\node [style=scalarop] (468) at (225.75, -0.25) {$a$};
		\node [style=none] (469) at (226.75, -1.5) {};
		\node [style=none] (470) at (228.25, -1.5) {};
	\end{pgfonlayer}
	\begin{pgfonlayer}{edgelayer}
		\draw [in=90, out=-30] (459) to (461.center);
		\draw [in=-30, out=90] (462.center) to (460);
		\draw (459) to (463.center);
		\draw (460) to (464.center);
		\draw (469.center) to (461.center);
		\draw (470.center) to (462.center);
		\draw (465) to (467);
		\draw (466) to (468);
		\draw [in=-150, out=90] (468) to (459);
		\draw [in=-150, out=90] (467) to (460);
	\end{pgfonlayer}
\end{tikzpicture}
=
\begin{tikzpicture}
	\begin{pgfonlayer}{nodelayer}
		\node [style=X] (0) at (1.5, 0) {$a$};
		\node [style=none] (1) at (1, -1.5) {};
		\node [style=none] (2) at (1.5, -1.5) {};
		\node [style=none] (3) at (1, 1.5) {};
		\node [style=none] (4) at (1.5, 1.5) {};
	\end{pgfonlayer}
	\begin{pgfonlayer}{edgelayer}
		\draw (2.center) to (0);
		\draw (0) to (4.center);
		\draw (1.center) to (3.center);
	\end{pgfonlayer}
\end{tikzpicture}
\hspace*{.1cm}\in\alr_k
$$
\end{proof}
Therefore, we are justified in using string diagrams in $\alr_k$ to reason about morphisms in $\Aff\Lag\Rel_k$.

\cite{affine} interprets some components for electrical circuits  in terms of affine relations.  We translate these affine relations into composites of generators for Lagrangian relations.  This interpretation is also explored in \cite{passive,network}, albeit, not enjoying the graphical calculus for affine relations.
\begin{example}
\label{ex:circuits}
For any non-negative real $a$, wires, $a$-weighted resistors, inductors and capacitors have the following interpretations in $\Aff\Lag\Rel_{\mathbb{R}[x,y]/\langle xy-1\rangle}$:
$$
\left\llbracket
\begin{tikzpicture}
	\begin{pgfonlayer}{nodelayer}
		\node [style=none] (0) at (21, 4.25) {};
		\node [style=none] (1) at (22, 4.25) {};
		\node [style=none] (2) at (21, 2.75) {};
		\node [style=none] (3) at (22, 2.75) {};
		\node [style=dot] (4) at (21.5, 3.5) {};
		\node [style=none] (5) at (21.5, 4) {$\cdots$};
		\node [style=none] (6) at (21.5, 3) {$\cdots$};
	\end{pgfonlayer}
	\begin{pgfonlayer}{edgelayer}
		\draw [in=150, out=-90] (0.center) to (4);
		\draw [in=90, out=-150] (4) to (2.center);
		\draw [in=-30, out=90] (3.center) to (4);
		\draw [in=-90, out=30] (4) to (1.center);
	\end{pgfonlayer}
\end{tikzpicture}
\right\rrbracket
=
\begin{tikzpicture}
	\begin{pgfonlayer}{nodelayer}
		\node [style=none] (501) at (237.5, 4.25) {};
		\node [style=none] (502) at (238.5, 4.25) {};
		\node [style=none] (503) at (237.5, 2.75) {};
		\node [style=none] (504) at (238.5, 2.75) {};
		\node [style=X] (505) at (238, 3.5) {};
		\node [style=none] (506) at (238, 4) {$\cdots$};
		\node [style=none] (507) at (238, 3) {$\cdots$};
		\node [style=none] (508) at (236.25, 4.25) {};
		\node [style=none] (509) at (237.25, 4.25) {};
		\node [style=none] (510) at (236.25, 2.75) {};
		\node [style=none] (511) at (237.25, 2.75) {};
		\node [style=Z] (512) at (236.75, 3.5) {};
		\node [style=none] (513) at (236.75, 4) {$\cdots$};
		\node [style=none] (514) at (236.75, 3) {$\cdots$};
	\end{pgfonlayer}
	\begin{pgfonlayer}{edgelayer}
		\draw [in=150, out=-90] (501.center) to (505);
		\draw [in=90, out=-150] (505) to (503.center);
		\draw [in=-30, out=90] (504.center) to (505);
		\draw [in=-90, out=30] (505) to (502.center);
		\draw [in=150, out=-90] (508.center) to (512);
		\draw [in=90, out=-150] (512) to (510.center);
		\draw [in=-30, out=90] (511.center) to (512);
		\draw [in=-90, out=30] (512) to (509.center);
	\end{pgfonlayer}
\end{tikzpicture}
\hspace*{.5cm}
\left\llbracket
\tikz \draw (0,0) to[R=$a$] (0,2);
\hspace*{,3cm}
\right\rrbracket
=
\begin{tikzpicture}
	\begin{pgfonlayer}{nodelayer}
		\node [style=Z] (0) at (23, 3.75) {};
		\node [style=X] (1) at (22, 5.25) {};
		\node [style=scalar] (2) at (22.5, 4.5) {$a$};
		\node [style=none] (3) at (22, 5.75) {};
		\node [style=none] (4) at (23, 5.75) {};
		\node [style=none] (5) at (23, 3.25) {};
		\node [style=none] (6) at (22, 3.25) {};
	\end{pgfonlayer}
	\begin{pgfonlayer}{edgelayer}
		\draw [in=-90, out=135] (0) to (2);
		\draw [in=315, out=90] (2) to (1);
		\draw (1) to (3.center);
		\draw (1) to (6.center);
		\draw (5.center) to (0);
		\draw (4.center) to (0);
	\end{pgfonlayer}
\end{tikzpicture}
\hspace*{.5cm}
\left\llbracket
\tikz \draw (0,0) to[L=$a$] (0,2);
\hspace*{,3cm}
\right\rrbracket
=
\begin{tikzpicture}
	\begin{pgfonlayer}{nodelayer}
		\node [style=Z] (0) at (23, 3.75) {};
		\node [style=X] (1) at (22, 5.25) {};
		\node [style=scalar] (2) at (22.5, 4.5) {$ax$};
		\node [style=none] (3) at (22, 5.75) {};
		\node [style=none] (4) at (23, 5.75) {};
		\node [style=none] (5) at (23, 3.25) {};
		\node [style=none] (6) at (22, 3.25) {};
	\end{pgfonlayer}
	\begin{pgfonlayer}{edgelayer}
		\draw [in=-90, out=135] (0) to (2);
		\draw [in=315, out=90] (2) to (1);
		\draw (1) to (3.center);
		\draw (1) to (6.center);
		\draw (5.center) to (0);
		\draw (4.center) to (0);
	\end{pgfonlayer}
\end{tikzpicture}
\hspace*{.5cm}
\left\llbracket
\tikz \draw (0,0) to[C=$a$] (0,2);
\hspace*{,3cm}
\right\rrbracket
=
\begin{tikzpicture}
	\begin{pgfonlayer}{nodelayer}
		\node [style=Z] (0) at (23.25, 3.75) {};
		\node [style=X] (1) at (21.75, 5.25) {};
		\node [style=scalar] (2) at (22.5, 4.5) {$-ax$};
		\node [style=none] (3) at (21.75, 5.75) {};
		\node [style=none] (4) at (23.25, 5.75) {};
		\node [style=none] (5) at (23.25, 3.25) {};
		\node [style=none] (6) at (21.75, 3.25) {};
	\end{pgfonlayer}
	\begin{pgfonlayer}{edgelayer}
		\draw [in=-90, out=135] (0) to (2);
		\draw [in=315, out=90] (2) to (1);
		\draw (1) to (3.center);
		\draw (1) to (6.center);
		\draw (5.center) to (0);
		\draw (4.center) to (0);
	\end{pgfonlayer}
\end{tikzpicture}
$$
Similarly for $a$-valued voltage and current sources (again, for  non-negative real number $a$):
$$
\left\llbracket
\begin{tikzpicture}
	\begin{pgfonlayer}{nodelayer}
		\node [style=none] (0) at (0, 2) {};
		\node [style=isourceAMshape,rotate=90] (1) at (0, 1) {};
		\node [style=none] (2) at (0, 0) {};
	\end{pgfonlayer}
	\begin{pgfonlayer}{edgelayer}
		\draw (2.center) to (1);
		\draw (1) to (0.center);
		\node [style=none] (3) at (-.7, 1) {$a$};
	\end{pgfonlayer}
\end{tikzpicture}
\hspace*{,3cm}
\right\rrbracket
=
\begin{tikzpicture}
	\begin{pgfonlayer}{nodelayer}
		\node [style=X] (1) at (22, 5.25) {};
		\node [style=scalar] (2) at (22.5, 4.5) {$ax$};
		\node [style=none] (3) at (22, 5.75) {};
		\node [style=none] (4) at (23, 5.75) {};
		\node [style=none] (5) at (23, 3.25) {};
		\node [style=none] (6) at (22, 3.25) {};
		\node [style=X] (7) at (22.5, 3.75) {$1$};
	\end{pgfonlayer}
	\begin{pgfonlayer}{edgelayer}
		\draw [in=315, out=90] (2) to (1);
		\draw (1) to (3.center);
		\draw (1) to (6.center);
		\draw (7) to (2);
		\draw (5.center) to (4.center);
	\end{pgfonlayer}
\end{tikzpicture}
=
\begin{tikzpicture}
	\begin{pgfonlayer}{nodelayer}
		\node [style=X] (32) at (129.25, -0.25) {$ax$};
		\node [style=none] (33) at (129.25, 1) {};
		\node [style=none] (34) at (129.75, 1) {};
		\node [style=none] (35) at (129.75, -1.5) {};
		\node [style=none] (36) at (129.25, -1.5) {};
	\end{pgfonlayer}
	\begin{pgfonlayer}{edgelayer}
		\draw (32) to (33.center);
		\draw (32) to (36.center);
		\draw (35.center) to (34.center);
	\end{pgfonlayer}
\end{tikzpicture}
\hspace{,3cm}
\left\llbracket
\begin{tikzpicture}
	\begin{pgfonlayer}{nodelayer}
		\node [style=none] (0) at (0, 2) {};
		\node [style=vsourceAMshape,rotate=-90] (1) at (0, 1) {};
		\node [style=none] (2) at (0, 0) {};
		\node [style=none] (3) at (-.7, 1) {$a$};
	\end{pgfonlayer}
	\begin{pgfonlayer}{edgelayer}
		\draw (2.center) to (1);
		\draw (1) to (0.center);
	\end{pgfonlayer}
\end{tikzpicture}
\hspace*{,3cm}
\right\rrbracket
=
\begin{tikzpicture}
	\begin{pgfonlayer}{nodelayer}
		\node [style=none] (28) at (9, 1) {};
		\node [style=none] (29) at (10, 1) {};
		\node [style=none] (30) at (10, -1.5) {};
		\node [style=none] (31) at (9, -1.5) {};
		\node [style=Z] (32) at (9, 0.25) {};
		\node [style=Z] (33) at (9, -0.75) {};
		\node [style=X] (34) at (9.5, -1.25) {$1$};
		\node [style=scalar] (35) at (9.5, -0.5) {$a$};
		\node [style=Z] (36) at (10, 0.5) {};
	\end{pgfonlayer}
	\begin{pgfonlayer}{edgelayer}
		\draw (31.center) to (33);
		\draw (32) to (28.center);
		\draw (36) to (29.center);
		\draw [in=90, out=-150] (36) to (35);
		\draw (35) to (34);
		\draw (30.center) to (36);
	\end{pgfonlayer}
\end{tikzpicture}
=
\begin{tikzpicture}
	\begin{pgfonlayer}{nodelayer}
		\node [style=none] (37) at (11, 1) {};
		\node [style=none] (38) at (12, 1) {};
		\node [style=none] (39) at (12, -1.5) {};
		\node [style=none] (40) at (11, -1.5) {};
		\node [style=Z] (41) at (11, 0.25) {};
		\node [style=Z] (42) at (11, -0.75) {};
		\node [style=X] (43) at (11.5, -1.25) {$1$};
		\node [style=Z] (45) at (12, -0.25) {};
		\node [style=scalar] (46) at (12, 0.5) {$a$};
		\node [style=scalarop] (47) at (12, -1) {$a$};
	\end{pgfonlayer}
	\begin{pgfonlayer}{edgelayer}
		\draw (40.center) to (42);
		\draw (41) to (37.center);
		\draw (45) to (46);
		\draw (46) to (38.center);
		\draw [in=-150, out=90] (43) to (45);
		\draw (39.center) to (47);
		\draw (47) to (45);
	\end{pgfonlayer}
\end{tikzpicture}
=
\begin{tikzpicture}
	\begin{pgfonlayer}{nodelayer}
		\node [style=none] (73) at (17, 0.75) {};
		\node [style=none] (74) at (17.75, 0.75) {};
		\node [style=none] (75) at (17.75, -2) {};
		\node [style=none] (76) at (17, -2) {};
		\node [style=Z] (77) at (17, 0.25) {};
		\node [style=Z] (78) at (17, -0.5) {};
		\node [style=X] (79) at (17.75, 0.25) {$a$};
		\node [style=scalarop] (80) at (17.75, -1.5) {$a$};
		\node [style=X] (83) at (17.75, -0.5) {$1$};
		\node [style=s] (84) at (17.75, -1) {};
	\end{pgfonlayer}
	\begin{pgfonlayer}{edgelayer}
		\draw (76.center) to (78);
		\draw (77) to (73.center);
		\draw [in=-90, out=90] (75.center) to (80);
		\draw (79) to (74.center);
		\draw (84) to (83);
		\draw [in=-90, out=90] (80) to (84);
	\end{pgfonlayer}
\end{tikzpicture}
=
\begin{tikzpicture}
	\begin{pgfonlayer}{nodelayer}
		\node [style=none] (84) at (18.75, 1.25) {};
		\node [style=none] (85) at (19.5, 1.25) {};
		\node [style=none] (86) at (19.5, -1.5) {};
		\node [style=none] (87) at (18.75, -1.5) {};
		\node [style=Z] (88) at (18.75, 0.25) {};
		\node [style=Z] (89) at (18.75, -0.5) {};
		\node [style=X] (90) at (19.5, 0.25) {$1$};
		\node [style=X] (91) at (19.5, -0.5) {$1$};
		\node [style=scalarop] (92) at (19.5, -1) {$-a$};
		\node [style=scalar] (93) at (18.75, -1) {$-a$};
		\node [style=scalar] (94) at (19.5, 0.75) {$a$};
		\node [style=scalarop] (95) at (18.75, 0.75) {$a$};
	\end{pgfonlayer}
	\begin{pgfonlayer}{edgelayer}
		\draw (86.center) to (92);
		\draw (92) to (91);
		\draw (89) to (93);
		\draw (87.center) to (93);
		\draw (88) to (95);
		\draw (95) to (84.center);
		\draw (90) to (94);
		\draw (94) to (85.center);
	\end{pgfonlayer}
\end{tikzpicture}
=
\begin{tikzpicture}
	\begin{pgfonlayer}{nodelayer}
		\node [style=none] (602) at (268.25, 1.75) {};
		\node [style=none] (603) at (269, 1.75) {};
		\node [style=none] (604) at (269, -1) {};
		\node [style=none] (605) at (267.25, -1) {};
		\node [style=Z] (606) at (268.25, 0.75) {};
		\node [style=Z] (607) at (267.75, 0.25) {};
		\node [style=X] (608) at (269, 0.75) {$1$};
		\node [style=X] (609) at (269.5, 0.25) {};
		\node [style=scalarop] (610) at (269, -0.5) {$-a$};
		\node [style=scalar] (611) at (267.25, -0.5) {$-a$};
		\node [style=scalar] (612) at (269, 1.25) {$a$};
		\node [style=scalarop] (613) at (268.25, 1.25) {$a$};
		\node [style=Z] (614) at (268.25, -0.5) {};
		\node [style=X] (615) at (270, -0.5) {$1$};
	\end{pgfonlayer}
	\begin{pgfonlayer}{edgelayer}
		\draw (604.center) to (610);
		\draw [in=-150, out=90] (610) to (609);
		\draw [in=90, out=-150] (607) to (611);
		\draw (605.center) to (611);
		\draw (606) to (613);
		\draw (613) to (602.center);
		\draw (608) to (612);
		\draw (612) to (603.center);
		\draw [in=-30, out=90] (615) to (609);
		\draw [in=-30, out=90] (614) to (607);
	\end{pgfonlayer}
\end{tikzpicture}
$$
\end{example}
Note that these generators do not generate the whole category of Lagrangian relations; for instance, the coefficients are required to be non-negative.

\subsection{Stabilizer circuits and Spekkens' toy model}

In this subsection, we show that, when $p$ is an odd prime, the prop of affine Lagrangian relations over $\F_p$  is isomoprhic to $p$-dimensional qudit stabilizer circuits, modulo invertible scalars.  We first consider an intermediary fragment between the Fourier-free, phase-free fragments and stabilizer circuits.

\begin{definition}
The qudit {\bf boost operator} is the following unitary on $d$ in $\Mat(\C)$, 
$
{\cal X} := \sum_{a =0}^{d-1} |  a+1  \rangle\langle a |
$.
\end{definition}

In the qubit case, the boost operator is just the NOT gate.  Adding the affine shift to $\ih_{\F_p}$ corresponds to adding the boost gate to the  Fourier-free, phase-free ZX-calculus, extending Lemma \ref{lemma:phasefree}.  This is a qudit generalization of the observation made in \cite{cole}:

\begin{lemma}
For $p$ prime, $\aih_{\F_p}$ is isomorphic as a prop  to the Fourier-free, $p$-dimensional qudit ZX-calculus with the boost operator modulo invertible scalars.
\end{lemma}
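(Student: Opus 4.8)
The plan is to upgrade the prop isomorphism of Lemma~\ref{lemma:phasefree} by adjoining exactly one generator on each side. Recall that $\aih_{\F_p}$ is built from $\ih_{\F_p}$ by adding the affine $X$-generator together with its three defining equations, while the boosted ZX-calculus is built from the phase-free, Fourier-free ZX-calculus by adding the boost operator ${\cal X}$. Since Lemma~\ref{lemma:phasefree} already identifies $\ih_{\F_p}$ with the phase-free, Fourier-free ZX-calculus modulo invertible scalars, it suffices to show that this identification extends to a prop isomorphism carrying the new generator on one side to the new generator on the other and matching their defining relations. So the whole argument reduces to a generator-and-relation comparison sitting on top of an isomorphism we already have.

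First I would fix the correspondence between the two new generators. The generator added to $\ih_{\F_p}$ is an $X$-\emph{state}, whereas ${\cal X}$ is a $1\to 1$ unitary; but both props are compact closed (the cups and caps already live in $\ih_{\F_p}$), so I would transport one into the other by bending a wire, identifying ${\cal X}$ with the state obtained by applying it to the phase-free $X$-unit, which is precisely the affine $X$-generator. The conceptual observation that makes the statement plausible is that over $\F_p$ the single boost ${\cal X}$ generates every $X$-phase $a\in\{0,\dots,p-1\}$ by iteration, $\mathcal{X}^a=\sum_b|b+a\rangle\langle b|$, with $\mathcal{X}^p=\mathrm{id}$; this mirrors exactly the fact, displayed in the phased-spider notation immediately preceding the lemma, that the single affine generator of $\aih_{\F_p}$ produces all affine $X$-shifts. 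Thus the scalar-free quotients on the two sides are generated by corresponding data, and it remains only to match relations.

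Then I would check well-definedness in both directions. Going from $\aih_{\F_p}$ to the boosted ZX-calculus, I must verify that the three equations defining $\aih_{\F_p}$ hold for ${\cal X}$ modulo invertible scalars: the equation making the affine generator $Z$-copyable becomes the standard ZX copy rule stating that the classical (permutation) gate ${\cal X}$ is duplicated when pushed through a $Z$-spider; the equation relating parallel affine generators to the $X$-monoid becomes the phase-addition rule $\mathcal{X}^a\mathcal{X}^b=\mathcal{X}^{a+b}$; and the remaining normalisation equation (the affine generator composed with the $Z$-counit) becomes the statement that the corresponding $Z$-inner product of ${\cal X}$ with the reference state is an invertible scalar, hence trivial in the quotient. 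Conversely, I would derive the usual defining relations of the boost in the ZX-calculus — its order $p$, its commutation with the spiders, and its unitarity up to a dimension scalar — from the $\ih_{\F_p}$ axioms together with the three affine equations. Since each generator is sent to the other and back to itself (on the nose on the $\ih_{\F_p}$ part by Lemma~\ref{lemma:phasefree}, and by the displayed generation of phases/shifts on the new generator), the two extended functors are mutually inverse, yielding the claimed isomorphism.

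The hard part will be the bookkeeping modulo invertible scalars. The ZX-calculus carries dimension-dependent scalar factors (powers of $\sqrt p$ from spider (co)units and from the normalisation of ${\cal X}$, together with $p$-th roots of unity) that have no counterpart in $\aih_{\F_p}$, where the only scalars are the empty relation and the top relation. I must therefore take both presentations in their quotient by the group of invertible scalars and confirm that this quotient neither collapses a genuine non-scalar relation nor is forced to introduce a new one; in particular I must check that the normalisation equation corresponds to an \emph{invertible} scalar (the top relation, $\langle 1\rangle\in k$ being consistent with deletion) rather than to the zero scalar, since the latter is non-invertible and would survive the quotient and break the match. Reconciling the state-versus-gate presentation of the two new generators through the compact structure while keeping these scalar factors under control is where the real work lies; the remaining relation-chasing is routine once Lemma~\ref{lemma:phasefree} is in hand.
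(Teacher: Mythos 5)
The paper gives no proof of this lemma beyond the remark that adjoining the affine shift to $\ih_{\F_p}$ corresponds to adjoining the boost to the phase-free calculus of Lemma~\ref{lemma:phasefree}, and your proposal is exactly that argument fleshed out: identify the new generators via $\mathcal{X}|0\rangle=|1\rangle$ (using compact closure to pass between the gate and state forms) and match the defining relations modulo invertible scalars. The only slip is in your inventory of the three $\aih_{\F_p}$ axioms: phase addition is a consequence of the copy rule together with the bialgebra law rather than an axiom, while the actual first axiom --- that the scalar given by composing the affine generator with the $X$-counit is absorbing --- goes unchecked, though it holds immediately on the ZX side since that scalar is $\langle 0|\mathcal{X}|0\rangle=0$ and both sides of the axiom become the zero matrix.
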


We can go further with affine Lagrangian relations.  Inspired by the work of Spekkens \cite{spekkens,spekkens2016quasi}:

\begin{definition}
When $p$ is prime, let {\bf Spekkens' qudit toy model} of dimension $p$  be the prop $\Aff\Lag\Rel_{\F_p}$.
\end{definition}

We first give a short review of the qudit stabilizer formalism, before establishing the equivalence between Spekkens' toy model and stabilizer circuits in the odd prime qudit case.  All of the material from Definition \ref{definition:begin} to \ref{lemma:end} are contained in \cite{generators}.

\begin{definition}
\label{definition:begin}
The qudit {\bf shift operator} is the following unitary on $d$ in $\Mat(\C)$, 
${\cal Z} := \sum_{a =0}^{d-1} e^{2\pi i a/d} |  a  \rangle\langle a |$.

The  $n$-qudit {\bf Pauli group} ${\frak P}_d^{\otimes n}$ is defined to be the subgroup of $\Mat(\C)(d^n,d^n)$ generated by the shift and boost operators as well as $I_d$ and the scalar $e^{\pi i /d}$ under tensor product and matrix multiplication.

An $n$-qudit {\bf Clifford operator} $U$ is an $d^n$-dimensional unitary such that $U {\frak P}_d^{ n} U^\dag = {\frak P}_d^{ n}$.

The $n$-qudit {\bf Clifford group} is formed by the $n$-qudit Clifford operators under matrix multiplication.

An $n$-qudit {\bf stabilizer state} is a state $ U |0\rangle^{\otimes n}$ for an $n$-qudit Clifford $U$.

Given any $n$-qudit stabilizer state $|\psi \rangle$,  the {\bf stabilizer group} of $|\psi \rangle$  is the (Abelian) subgroup of ${\frak S}_{|\psi\rangle} \subset {\frak P}_d^{ n}$ whose elements are the $U \in {\frak P}_d^{ n}$ for which $U|\psi \rangle=|\psi \rangle$.
\end{definition}

\begin{lemma}\label{lem:stabphase}
Two stabilizer states with the same stabilizer groups are the same, up to global phases.
\end{lemma}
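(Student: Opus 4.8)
The plan is to show that each stabilizer state is, up to a scalar, the \emph{unique} common fixed vector of its stabilizer group, so that two states sharing a stabilizer group must be proportional. The key device is the group-averaging operator $P_{\frak S} := \frac{1}{|{\frak S}|}\sum_{g \in {\frak S}} g$ attached to a stabilizer group ${\frak S} \subseteq {\frak P}_d^{n}$: I would prove that, for the stabilizer group of a genuine stabilizer state, this is a rank-one orthogonal projector whose range contains every state stabilized by ${\frak S}$.

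First I would compute $|{\frak S}_{|\psi\rangle}|$. By definition $|\psi\rangle = U|0\rangle^{\otimes n}$ for a Clifford $U$, and since $g|\psi\rangle = |\psi\rangle$ holds exactly when $U^\dagger g U$ fixes $|0\rangle^{\otimes n}$, with $U^\dagger g U$ again Pauli because $U$ is Clifford, we get ${\frak S}_{|\psi\rangle} = U\,{\frak S}_{|0\rangle^{\otimes n}}\,U^\dagger$. It therefore suffices to determine ${\frak S}_{|0\rangle^{\otimes n}}$: writing an arbitrary Pauli in normal-ordered form $c\bigotimes_i {\cal X}^{a_i}{\cal Z}^{b_i}$ and using ${\cal Z}^{b}|0\rangle = |0\rangle$ together with ${\cal X}^{a}|0\rangle = |a\rangle$, one sees that it fixes $|0\rangle^{\otimes n}$ precisely when every $a_i = 0$ and $c = 1$. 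Hence ${\frak S}_{|0\rangle^{\otimes n}} = \langle {\cal Z}^{(1)},\dots,{\cal Z}^{(n)}\rangle$ has order $d^n$, and so does ${\frak S}_{|\psi\rangle}$.

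Next I would verify that $P := P_{{\frak S}_{|\psi\rangle}}$ is a rank-one projector. Idempotency $P^2 = P$ follows from the group law by reindexing the double sum, and self-adjointness from ${\frak S}$ being closed under $g \mapsto g^{-1} = g^\dagger$, so $P$ is an orthogonal projector. Its rank equals $\mathrm{tr}(P) = \frac{1}{d^n}\sum_{g \in {\frak S}}\mathrm{tr}(g)$. The only scalar operator in ${\frak S}$ is $I$, since $cI|\psi\rangle = |\psi\rangle$ forces $c = 1$; every other $g \in {\frak S}$ is a non-scalar Pauli, and such operators are traceless because some tensor factor ${\cal X}^{a}{\cal Z}^{b}$ with $(a,b)\neq(0,0)$ already has vanishing trace. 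Thus $\mathrm{tr}(P) = d^n/|{\frak S}| = 1$, so $P$ projects onto a one-dimensional subspace, which contains $|\psi\rangle$ since $P|\psi\rangle = |\psi\rangle$.

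To finish: if $|\psi\rangle$ and $|\phi\rangle$ are stabilizer states with ${\frak S}_{|\psi\rangle} = {\frak S}_{|\phi\rangle} = {\frak S}$, then both lie in the one-dimensional range of $P_{\frak S}$, whence $|\phi\rangle = \lambda |\psi\rangle$ for some $\lambda \in \C$; as both are unit vectors, $\lambda$ is a phase. The main obstacle is the bookkeeping in the first two steps: making the order count $|{\frak S}| = d^n$ rigorous (equivalently, that the stabilizer group of a genuine stabilizer state is \emph{maximal}) and confirming the tracelessness of non-scalar Paulis. Both rest on the explicit structure of the shift and boost operators and on normal-ordering elements of ${\frak P}_d^{n}$ modulo phases, so the care lies in handling the qudit phase factors correctly rather than in any conceptual difficulty.
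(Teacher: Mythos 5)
Your argument is correct and complete. The paper itself does not prove this lemma --- it is quoted as background from the cited reference on qudit stabilizer generators --- and your proof is the standard one via the group-averaging projector $P_{\frak S} = \frac{1}{|{\frak S}|}\sum_{g\in{\frak S}}g$: the order count $|{\frak S}_{|\psi\rangle}|=d^n$ (by conjugating back to $|0\rangle^{\otimes n}$), the tracelessness of non-scalar Paulis, and $\mathrm{tr}(P)=1$ together force the common fixed space to be one-dimensional. All the steps you flag as requiring care (normal-ordering modulo phases, excluding nontrivial scalars from ${\frak S}$, and the vanishing of $\mathrm{tr}({\cal X}^a{\cal Z}^b)$ for $(a,b)\neq(0,0)$) go through exactly as you describe, and none of them requires $d$ to be prime.
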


\begin{lemma}
\label{lemma:end}
For natural numbers $n,d \geq 2$ the $n$-dimensional qudit stabilizer group modulo invertible scalars is generated under tensor and composition of $I_d$ as well as the boost operator ${\cal X}$, the controlled-boost operator  ${\cal C}$, the Fourier transform ${\cal F}$ and the phase-shift operator ${\cal S}$:
$$
{\cal C}  := \sum_{a,b = 0 }^{d-1} |a,a+b \rangle\langle a, b|
\hspace*{.5cm}
{\cal F}  := \frac{1}{\sqrt d}\sum_{a,b= 0 }^{d-1} e^{2\pi i ab/d} |b \rangle\langle a|
\hspace*{.5cm}
{\cal S} := \sum_{a =0}^{d-1} e^{\pi i a (a+d)/d} |  a  \rangle\langle a |
$$
Notice that the boost operator can be obtained by ${\cal Z }={\cal F}{\cal X} {\cal F}^{2}$.
\end{lemma}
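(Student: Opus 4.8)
The plan is to factor the statement through the standard Clifford--symplectic correspondence, reducing it to two independent generation facts. First I would record the conjugation action of a Clifford on the Pauli group. Since $\mathfrak{P}_d^{\otimes n}$ modulo its centre is the abelian group $\Z_d^{2n}$ equipped with the commutator pairing as a symplectic form, conjugation by any Clifford descends to a symplectic automorphism, yielding a homomorphism $\Phi$ from the Clifford group modulo invertible scalars into $\mathsf{Sp}(2n,\Z_d)$. By irreducibility of the Pauli representation (Schur's lemma), $\ker\Phi$ consists exactly of the Paulis times scalars. Consequently, to prove the listed gates generate the whole Clifford group modulo scalars it suffices to show (a) that they reach every Pauli, and (b) that their $\Phi$-images generate $\mathsf{Sp}(2n,\Z_d)$; surjectivity of $\Phi$ then comes for free.

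For (a) the boost $\mathcal{X}$ is already a generator, the shift is recovered as $\mathcal{Z}=\mathcal{F}\mathcal{X}\mathcal{F}^2$, and tensoring with $I_d$ together with the symmetry maps of the prop places either one on any wire, so every element of $\mathfrak{P}_d^{\otimes n}$ is a word in $\mathcal{X}$ and $\mathcal{F}$ modulo scalars.

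The crux is (b). The single-qudit gates realise $\Phi(\mathcal{F}^{(i)})=\left[\begin{smallmatrix}0&1\\-1&0\end{smallmatrix}\right]$ and the shear $\Phi(\mathcal{S}^{(i)})=\left[\begin{smallmatrix}1&1\\0&1\end{smallmatrix}\right]$ on the $i$-th symplectic plane, while $\mathcal{C}^{(i,j)}$ realises the two-qudit controlled shear; these are precisely the elementary symplectomorphisms $F^{(i)}$, $S_1^{(i)}$, $C_1^{(i,j)}$ whose right-actions on a block matrix $[X\mid Z]$ are recorded above, and the $a$-parametrised maps are their integer powers, $S_a=(S_1)^a$ and $C_a=(C_1)^a$. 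I would then argue that these transvections generate $\mathsf{Sp}(2n,\Z_d)$ by a symplectic Gauss--Jordan reduction in the same spirit as the proof of Theorem~\ref{theorem:generators}: using the $C_a^{(i,j)}$ and $F^{(i)}$ to clear columns and the $S_a^{(i)}$ to correct the diagonal, an arbitrary symplectic matrix is reduced to the identity by right multiplication, exhibiting its inverse, hence itself, as a word in the generators. Lifting is then immediate: for a Clifford $U$ write $\Phi(U)$ as such a word $W$ in $\mathcal{F},\mathcal{S},\mathcal{C}$, whence $UW^{-1}\in\ker\Phi$ is a Pauli times a scalar, which is generated by step (a).

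The main obstacle is the symplectic generation over $\Z_d$ for non-prime, and especially even, $d$. When $d$ is not prime the ring $\Z_d$ is not a field, so the elimination must be carried out over a principal (local) ring, clearing non-invertible pivots via Smith normal form rather than by plain division; and for even $d$ one must additionally track the quadratic refinement of the symplectic form that governs the residual phases, so that the lift $U\mapsto W$ is consistent modulo invertible scalars. This delicate bookkeeping is exactly what is carried out in \cite{generators}; for the odd prime case of present interest the field-theoretic reduction of Theorem~\ref{theorem:generators} applies essentially verbatim.
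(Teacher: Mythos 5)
The paper offers no proof of this lemma to compare against: it is stated as imported background, with the explicit remark that everything from Definition~\ref{definition:begin} through Lemma~\ref{lemma:end} is contained in \cite{generators}. Your sketch is the standard route to such a result -- quotient the Pauli group by its centre to obtain $\Z_d^{2n}$ with the commutator pairing, note that conjugation induces a homomorphism $\Phi$ into $\mathsf{Sp}(2n,\Z_d)$ whose kernel is exactly the Paulis times scalars (Schur's lemma plus the character argument), and thereby reduce generation of the Clifford group modulo scalars to (a) generation of the Paulis by $\mathcal{X}$ and $\mathcal{F}$ and (b) generation of $\mathsf{Sp}(2n,\Z_d)$ by the symplectic images of $\mathcal{F}$, $\mathcal{S}$, $\mathcal{C}$ -- and that reduction is sound. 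However, essentially all of the technical content of the lemma lives in step (b) together with the even-$d$ phase bookkeeping, and at precisely that point you defer to \cite{generators}. So what you have is a correct reduction of the lemma to the reference it is cited from, not an independent proof; since the paper also simply cites \cite{generators}, the two are on the same footing, but be aware that the analogy with the proof of Theorem~\ref{theorem:generators} is not quite ``verbatim'': that proof reduces a Lagrangian \emph{subspace} over a field to normal form, whereas here one must generate the full group $\mathsf{Sp}(2n,\Z_d)$ over the ring $\Z_d$ for arbitrary $d\geq 2$ (CRT to prime powers, then elimination over the local ring $\Z_{p^k}$ using that some entry of each relevant column is a unit), and one must check that $\mathcal{S}$ with the $a(a+d)$ exponent really conjugates $\mathcal{X}$ to $\mathcal{X}\mathcal{Z}$ up to an allowed scalar when $d$ is even. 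Neither of these is carried out in your sketch, so as written it does not stand alone.
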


\begin{definition}
Let $\Stab_p$ denote the subcategory of $\Mat(\C)$ generated by the $p$-dimensional qudit Clifford group as well as the vectors $| 0\rangle$, $\langle 0|$, quotiented by invertible scalars.
\end{definition}

The following isomorphism is described in \cite{gross}, when restricted to the nonempty case.  This comes from the projective representation of the $n$ qudit odd-prime-dimensional Clifford group in terms of the affine symplectomorphisms over $\F_p^n$.  However, since there is only one empty relation and one zero matrix of every type, we get the following result immediately:

\begin{lemma}
For every odd prime $p$, there is an isomorphism $G:\Aff\Lag\Rel_{\F_p}(\F_p^0, \F_p^{2n}) \to \Stab_p(0,n)$ determined by:
$$
\begin{tikzpicture}
	\begin{pgfonlayer}{nodelayer}
		\node [style=X] (393) at (206, -2) {$\pi$};
	\end{pgfonlayer}
\end{tikzpicture}
\mapsto 
0
\hspace*{.5cm}
\begin{tikzpicture}
	\begin{pgfonlayer}{nodelayer}
		\node [style=X] (389) at (205, -2) {};
		\node [style=none] (390) at (205, -1.5) {};
		\node [style=Z] (391) at (204.5, -2) {};
		\node [style=none] (392) at (204.5, -1.5) {};
	\end{pgfonlayer}
	\begin{pgfonlayer}{edgelayer}
		\draw (389) to (390.center);
		\draw (391) to (392.center);
	\end{pgfonlayer}
\end{tikzpicture}
 \mapsto |0\rangle
\hspace*{.5cm}
\begin{tikzpicture}
	\begin{pgfonlayer}{nodelayer}
		\node [style=none] (391) at (206, -1) {};
		\node [style=none] (392) at (206, -2) {};
		\node [style=none] (393) at (206.5, -1) {};
		\node [style=none] (394) at (206.5, -2) {};
		\node [style=X] (395) at (206.5, -1.5) {$1$};
	\end{pgfonlayer}
	\begin{pgfonlayer}{edgelayer}
		\draw (392.center) to (391.center);
		\draw (393.center) to (395);
		\draw (395) to (394.center);
	\end{pgfonlayer}
\end{tikzpicture}
 \mapsto {\cal X}
\hspace*{.5cm}
C_1 \mapsto {\cal C}
\hspace*{.5cm}
F \mapsto {\cal F}
\hspace*{.5cm}
S_1 \mapsto {\cal S}
$$
\end{lemma}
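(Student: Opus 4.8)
The plan is to reduce the statement to the known correspondence between odd-prime-dimensional stabilizer states and affine Lagrangian subspaces, due to Gross, and then extend that correspondence across the single extra point present on each side (the empty relation and the zero vector). First I would recall the standard symplectic encoding of the Pauli group: modulo the central scalars, an element of $\mathfrak{P}_p^{\otimes n}$ is labelled by a vector in $\F_p^{2n}$, with the $X$-exponents recording the first $n$ coordinates and the $Z$-exponents the last $n$, so that two Paulis commute precisely when their labels are $\omega$-orthogonal. Under this encoding a maximal abelian subgroup of $\mathfrak{P}_p^{\otimes n}$ containing no nontrivial scalar corresponds exactly to a Lagrangian subspace of $\F_p^{2n}$.

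Next I would build the bijection on the nonempty part. Given a nonempty stabilizer state $|\psi\rangle$, its stabilizer group $\mathfrak{S}_{|\psi\rangle}$ determines a Lagrangian subspace $L \subseteq \F_p^{2n}$ as above, while the particular phases carried by the stabilizing Paulis pin down a coset of $L$, hence an affine Lagrangian subspace; conversely, from an affine Lagrangian subspace one recovers the unique (up to global phase) common $+1$-eigenstate of the associated phased Paulis. That these assignments are mutually inverse is exactly the content of the earlier lemma that a stabilizer state is determined up to global phase by its stabilizer group, together with the fact that in odd prime dimension the phase data is captured faithfully by the affine shift. To extend to all of $\Aff\Lag\Rel_{\F_p}(\F_p^0,\F_p^{2n})$, I would note that the only remaining state is the empty affine subspace and the only remaining vector of $\Stab_p(0,n)$ is the zero vector, so setting $G$ to send the empty relation to $0$ completes the bijection uniquely, as the authors observe.

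Finally I would check that this bijection is the one \emph{determined by} the listed generators. Since every state of $\Aff\Lag\Rel_{\F_p}$ out of the unit is obtained from the $|0\rangle$-type state and the affine-shift generator by applying the symplectomorphisms $F$, $S_1$, $C_1$ and pure Lagrangian relations, it suffices to verify that $G$ is equivariant for the Clifford action, i.e.\ that $F\mapsto {\cal F}$, $S_1\mapsto {\cal S}$, $C_1 \mapsto {\cal C}$ intertwine the symplectic action on Lagrangian subspaces (recorded in the ``right action'' list above) with conjugation by the corresponding Clifford gate on stabilizer groups. The two base cases are then immediate stabilizer-group computations: the $Z$-capped state is stabilized by ${\cal Z}$, whose label spans the $Z$-axis, and so maps to $|0\rangle$, and the affine-shift generator maps to the boost ${\cal X}$.

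The main obstacle is the odd-prime hypothesis entering through the phases. The heart of the matter, and the reason $p$ must be odd, is that the Clifford gates lift to a genuine representation via affine symplectomorphisms only when $2$ is invertible in $\F_p$: this is precisely what makes ${\cal S} = \sum_a e^{\pi i a(a+p)/p}\,|a\rangle\langle a|$ assign phases matching the affine-shift data single-valuedly, and it is exactly where the qubit case $p=2$ fails. Rather than reproving this phase bookkeeping, I would let Gross's metaplectic representation result carry the nonempty bijection, invoking it directly and restricting the remaining work to the trivial empty/zero extension and the routine generator check.
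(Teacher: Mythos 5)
Your proposal matches the paper's treatment: the paper likewise obtains the nonempty part of the bijection directly from Gross's projective representation of the odd-prime Clifford group by affine symplectomorphisms, and extends to the full hom-set by noting there is exactly one empty relation and one zero matrix of each type. The additional detail you supply on the stabilizer-group correspondence and the generator equivariance is consistent with, and fills in, what the paper leaves to the citation.
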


We extend this isomorphim of states to an isomorphism of props:

\begin{theorem}
\label{theorem:spekkens}
When $p$ is an odd prime, the mapping $H:\Aff\Lag\Rel_{\F_p} \to \Stab_p$ defined by:
$$
\begin{tikzpicture}
	\begin{pgfonlayer}{nodelayer}
		\node [style=map] (21) at (2, -2) {$f$};
		\node [style=none] (22) at (1.75, -1.25) {};
		\node [style=none] (23) at (2.25, -1.25) {};
		\node [style=none] (24) at (1.75, -2.75) {};
		\node [style=none] (25) at (2.25, -2.75) {};
	\end{pgfonlayer}
	\begin{pgfonlayer}{edgelayer}
		\draw [in=-90, out=120] (21) to (22.center);
		\draw [in=90, out=-120] (21) to (24.center);
		\draw [in=-60, out=90] (25.center) to (21);
		\draw [in=-90, out=60] (21) to (23.center);
	\end{pgfonlayer}
\end{tikzpicture}
\mapsto
\begin{tikzpicture}
	\begin{pgfonlayer}{nodelayer}
		\node [style=map] (7) at (12, 0) {$G\left(\hat f\right)$};
		\node [style=none] (8) at (11.25, 1.5) {};
		\node [style=map] (10) at (13, 1) {$\eta$};
		\node [style=none] (12) at (13.5, -0.5) {};
	\end{pgfonlayer}
	\begin{pgfonlayer}{edgelayer}
		\draw [in=135, out=-90] (8.center) to (7);
		\draw [in=-150, out=60, looseness=0.75] (7) to (10);
		\draw [in=-45, out=90] (12.center) to (10);
	\end{pgfonlayer}
\end{tikzpicture}
$$
is a symmetric monoidal equivalence, where $\eta$ is the cap of the compact closed structure induced by the $Z$ observable.

\end{theorem}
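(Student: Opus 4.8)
The plan is to promote the state-level isomorphism $G$ to a functor $H$ by \emph{map--state duality} (the categorical Choi correspondence), and then to deduce every required property of $H$ from two compatibility facts about $G$ with the compact closed structures on the two sides. First I would pin down those structures: on $\Aff\Lag\Rel_{\F_p}$ the relevant cup is the Bell state coming from the self-dual compact structure already established for $\Lag\Rel_k$ in Section~\ref{sec:sym}, and on $\Stab_p$ the cup $\eta$ is the one induced by the $Z$ observable (the copy Frobenius algebra), as in the statement. Both make every object self-dual, so for $f\colon\F_p^{2n}\to\F_p^{2m}$ the name $\hat f$ is a state and the assignment $H(f):=\big(\text{uncurry of }G(\hat f)\text{ along }\eta\big)$ is well defined; on states it reduces to $G$ itself, since then there is nothing to uncurry.

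The two facts I would then establish are: (i) $G$ is symmetric monoidal on states, i.e. $G(\psi\otimes\phi)\cong G(\psi)\otimes G(\phi)$ and $G$ respects the symmetry; and (ii) $G$ sends the $\Aff\Lag\Rel_{\F_p}$ cup to $\eta$. Both reduce to finite checks against the generators listed in the preceding lemma ($\mathcal X,\mathcal C,\mathcal F,\mathcal S$, the $|0\rangle$ state, and the empty relation mapping to the zero vector), since these generate all stabilizer states. For (ii) it suffices to write the Bell state of $\Aff\Lag\Rel_{\F_p}$ as an explicit composite of these generators (for instance by applying $\mathcal C$ and $\mathcal F$ to a pair of $|0\rangle$ states) and to compare it against $\eta$ under the symplectic representation $G$ of \cite{gross}.

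Given (i) and (ii), functoriality of $H$ is formal. Identities are handled by (ii): $\hat{\mathrm{id}}$ is the cup, which $G$ sends to $\eta$, and uncurrying $\eta$ yields the identity by the snake equation. For composition I would use the standard cap-contraction formula $\hat{g\circ f}=(\mathrm{id}\otimes\epsilon\otimes\mathrm{id})\circ(\hat f\otimes\hat g)$, valid because the objects are self-dual; since $G$ is monoidal on states by (i) and preserves the cap, which is map--state dual to the cup by (ii), $G$ commutes with this contraction, so $G(\hat{g\circ f})$ is the corresponding contraction of $G(\hat f)$ and $G(\hat g)$, which uncurries precisely to $H(g)\circ H(f)$. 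Symmetric monoidality of $H$ follows in the same way from (i)--(ii). Finally, $H$ is bijective on objects via $\F_p^{2n}\mapsto n$, and because $G$ is a bijection on each space of states while hom-sets are identified with state-spaces through map--state duality, $H$ is fully faithful; hence it is an isomorphism of props and in particular the asserted symmetric monoidal equivalence.

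The hard part will be fact (ii), that $G$ preserves the cup. This is the single point on which functoriality hinges: were the Bell state of $\Aff\Lag\Rel_{\F_p}$ not to land on the $Z$-induced $\eta$, the cap-contraction step would break and $H$ would fail to respect composition. The oddness of $p$ enters exactly here, and in (i), through the phase gate $\mathcal S$ having the correct order and through $2$ being invertible, which is what makes the symplectic representation $G$ of \cite{gross} a genuine homomorphism rather than merely a projective one.
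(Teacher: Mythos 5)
Your overall architecture matches the paper's: define $H(f)$ by uncurrying $G(\hat f)$ along $\eta$, get identities from the snake equations, get composition from the cap-contraction formula for names, and get full faithfulness from $G$ being a bijection on state spaces. But there is a genuine gap at the step you declare ``formal.'' You claim that facts (i) ($G$ monoidal on states) and (ii) ($G$ sends the cup to $\eta$) imply that $G$ commutes with the contraction $(\mathrm{id}\otimes\epsilon\otimes\mathrm{id})\circ(\hat f\otimes\hat g)$. They do not. The contraction post-composes a state with a morphism built from the cap, which is an \emph{effect}, and $G$ is only defined on states; the phrase ``$G$ preserves the cap'' does not typecheck until $G$ has already been extended to non-states, which is exactly what is being constructed. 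What is actually needed is the statement that for every state $\Psi$ of $\F_p^{2m}\oplus\F_p^{2m}\oplus B$ one has $G\bigl((\epsilon\otimes\mathrm{id})\Psi\bigr)=(\eta^\dagger\otimes\mathrm{id})\,G(\Psi)$, and this cannot be deduced from a bijection of state-sets that respects tensor, symmetry, and the cup state. Your fact (ii) is necessary but strictly weaker than what functoriality requires.

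The paper closes this gap with a stabilizer-theoretic rigidity argument that your proposal never invokes: both sides of the contraction identity are stabilizer states, $G$ intertwines the generalized Pauli action (Lemma~4 of \cite{gross}), the Pauli operators $\mathcal{X}^a,\mathcal{Z}^b$ slide through the cup/cap on the relational side in exactly the same way ($\mathcal{Z}$ unchanged, $\mathcal{X}$ picking up a sign) as through $\eta$ on the matrix side, and therefore the two states have the same stabilizer group; one then concludes they are equal because stabilizer states with the same stabilizer group coincide up to a global phase, and $\Stab_p$ is quotiented by invertible scalars. Without this (or some substitute argument supplying the missing equivariance under partial cap application), your proof of composition does not go through. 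Your remark on where oddness of $p$ enters is essentially right, but it enters through the existence of the Pauli-intertwining representation of \cite{gross}, i.e.\ precisely through the lemma your outline omits.
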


%See Appendix \ref{proof:theorem:spekkens} for the proof.  
The main difficulty in proving this is to show that $H$ is a functor.  Lemma \ref{lem:stabphase} is important for showing that composition is well defined.

As we mentioned in the introduction, this is a categorical reformulation of the result of Spekkens' in which he shows that odd-prime-dimensional `quadrature epistricted theories' are operationally equivalent to prime-dimensional qudit stabilizer circuits \cite{spekkens2016quasi}.

A complete presentation for Spekkens' qubit toy model in terms of a category of relations was given \cite{backensspek} in a style which mirrors that of the qubit ZX-calculus~\cite{coecke2008interacting}. We now show how the generators of that presentation appear in our `doubled' formulation.

\begin{remark}
We can present Spekkens' $p$-dimensional qudit toy model in a manner similar to the ZX-calculus, in terms of being generated by spiders with phases labelled by the group $\Z/p\Z\times \Z/p\Z$,:
$$
\left\llbracket
\begin{tikzpicture}
	\begin{pgfonlayer}{nodelayer}
		\node [style=none] (0) at (21, 5) {};
		\node [style=none] (1) at (22, 5) {};
		\node [style=none] (2) at (21, 2.5) {};
		\node [style=none] (3) at (22, 2.5) {};
		\node [style=Z] (4) at (21.5, 3.75) {$\hspace*{.05cm}n,m\hspace*{.05cm}$};
		\node [style=none] (5) at (21.5, 4.5) {$\cdots$};
		\node [style=none] (6) at (21.5, 3) {$\cdots$};
		\node [style=none] (7) at (21.5, 4.75) {};
		\node [style=none] (8) at (21.5, 2.75) {};
	\end{pgfonlayer}
	\begin{pgfonlayer}{edgelayer}
		\draw [in=150, out=-90, looseness=0.75] (0.center) to (4);
		\draw [in=90, out=-150, looseness=0.75] (4) to (2.center);
		\draw [in=-30, out=90, looseness=0.75] (3.center) to (4);
		\draw [in=-90, out=30, looseness=0.75] (4) to (1.center);
	\end{pgfonlayer}
\end{tikzpicture}
\right\rrbracket
=
\begin{tikzpicture}
	\begin{pgfonlayer}{nodelayer}
		\node [style=none] (9) at (231.75, 0.5) {};
		\node [style=none] (10) at (231.75, -3) {};
		\node [style=Z] (11) at (231, -2) {};
		\node [style=none] (12) at (231.32, -1.25) {$\cdots$};
		\node [style=none] (13) at (231, -2.5) {$\cdots$};
		\node [style=none] (14) at (230.75, 0.5) {};
		\node [style=none] (15) at (229.25, -3) {};
		\node [style=X] (16) at (230, -0.5) {$n$};
		\node [style=none] (17) at (230, 0) {$\cdots$};
		\node [style=none] (18) at (229.72, -1.25) {$\cdots$};
		\node [style=none] (19) at (230, -3) {};
		\node [style=none] (20) at (230.25, -3) {};
		\node [style=none] (21) at (229.25, 0.5) {};
		\node [style=none] (22) at (231, 0.5) {};
		\node [style=scalar] (23) at (230.5, -1.25) {$m$};
		\node [style=none] (24) at (230, 0.25) {};
		\node [style=none] (25) at (231, -2.75) {};
		\node [style=none] (26) at (229.7, -1.5) {};
		\node [style=none] (27) at (231.3, -1) {};
	\end{pgfonlayer}
	\begin{pgfonlayer}{edgelayer}
		\draw [in=-30, out=90] (10.center) to (11);
		\draw [in=-90, out=30, looseness=0.75] (11) to (9.center);
		\draw [in=90, out=-150] (16) to (15.center);
		\draw [in=-90, out=30] (16) to (14.center);
		\draw (19.center) to (16);
		\draw [in=-150, out=90] (20.center) to (11);
		\draw [in=-90, out=150] (16) to (21.center);
		\draw (11) to (22.center);
		\draw [in=-75, out=150] (11) to (23);
		\draw [in=330, out=90] (23) to (16);
	\end{pgfonlayer}
\end{tikzpicture}
\hspace*{.5cm}
\left\llbracket
\begin{tikzpicture}
	\begin{pgfonlayer}{nodelayer}
		\node [style=none] (0) at (21, 5) {};
		\node [style=none] (1) at (22, 5) {};
		\node [style=none] (2) at (21, 2.5) {};
		\node [style=none] (3) at (22, 2.5) {};
		\node [style=X] (4) at (21.5, 3.75) {$\hspace*{.05cm}n,m\hspace*{.05cm}$};
		\node [style=none] (5) at (21.5, 4.5) {$\cdots$};
		\node [style=none] (6) at (21.5, 3) {$\cdots$};
		\node [style=none] (7) at (21.5, 4.75) {};
		\node [style=none] (8) at (21.5, 2.75) {};
	\end{pgfonlayer}
	\begin{pgfonlayer}{edgelayer}
		\draw [in=150, out=-90, looseness=0.75] (0.center) to (4);
		\draw [in=90, out=-150, looseness=0.75] (4) to (2.center);
		\draw [in=-30, out=90, looseness=0.75] (3.center) to (4);
		\draw [in=-90, out=30, looseness=0.75] (4) to (1.center);
	\end{pgfonlayer}
\end{tikzpicture}
\right\rrbracket
=
\begin{tikzpicture}
	\begin{pgfonlayer}{nodelayer}
		\node [style=none] (0) at (232.75, 0.5) {};
		\node [style=none] (1) at (232.75, -3) {};
		\node [style=Z] (2) at (233.5, -2) {};
		\node [style=none] (3) at (233.25, -1.25) {$\cdots$};
		\node [style=none] (4) at (233.5, -2.5) {$\cdots$};
		\node [style=none] (5) at (233.75, 0.5) {};
		\node [style=none] (6) at (235.25, -3) {};
		\node [style=X] (7) at (234.5, -0.5) {$n$};
		\node [style=none] (8) at (234.5, 0) {$\cdots$};
		\node [style=none] (9) at (234.82, -1.25) {$\cdots$};
		\node [style=none] (10) at (234.5, -3) {};
		\node [style=none] (11) at (234.25, -3) {};
		\node [style=none] (12) at (235.25, 0.5) {};
		\node [style=none] (13) at (233.5, 0.5) {};
		\node [style=scalar] (14) at (234, -1.25) {$m$};
		\node [style=none] (15) at (233.25, -1) {};
		\node [style=none] (16) at (234.5, 0.25) {};
		\node [style=none] (17) at (233.5, -2.75) {};
		\node [style=none] (18) at (234.78, -1.5) {};
	\end{pgfonlayer}
	\begin{pgfonlayer}{edgelayer}
		\draw [in=-150, out=90] (1.center) to (2);
		\draw [in=-90, out=150, looseness=0.75] (2) to (0.center);
		\draw [in=90, out=-30] (7) to (6.center);
		\draw [in=-90, out=150] (7) to (5.center);
		\draw (10.center) to (7);
		\draw [in=-30, out=90] (11.center) to (2);
		\draw [in=-90, out=30] (7) to (12.center);
		\draw (2) to (13.center);
		\draw [in=-105, out=30] (2) to (14);
		\draw [in=-150, out=90] (14) to (7);
	\end{pgfonlayer}
\end{tikzpicture}
\hspace*{.5cm}
\left\llbracket
\begin{tikzpicture}
	\begin{pgfonlayer}{nodelayer}
		\node [style=none] (0) at (1.25, -1) {};
		\node [style=map] (1) at (1.25, -1.5) {$F$};
		\node [style=none] (2) at (1.25, -2) {};
	\end{pgfonlayer}
	\begin{pgfonlayer}{edgelayer}
		\draw (2.center) to (1);
		\draw (1) to (0.center);
	\end{pgfonlayer}
\end{tikzpicture}
\right\rrbracket
=
\begin{tikzpicture}
	\begin{pgfonlayer}{nodelayer}
		\node [style=none] (0) at (0.5, 1) {};
		\node [style=none] (1) at (0.5, -0.25) {};
		\node [style=none] (2) at (1, -0.25) {};
		\node [style=none] (3) at (1, 1) {};
		\node [style=s] (4) at (1, 0.5) {};
		\node [style=none] (5) at (0.5, 0.5) {};
	\end{pgfonlayer}
	\begin{pgfonlayer}{edgelayer}
		\draw (4) to (3.center);
		\draw [in=90, out=-90] (4) to (1.center);
		\draw [in=-90, out=90] (2.center) to (5.center);
		\draw (5.center) to (0.center);
	\end{pgfonlayer}
\end{tikzpicture}
$$
The Fourier transform is redundant, as it can be obtained by Euler decomposition.
\end{remark}
Notice that the phases of the $Z$ and $X$ observables are elements $(n,m)$ of $\F_p \times \F_p$, and it is easy to see how the doubled spiders satisfy the phased spider fusion laws with respect to the group $\Z/p\Z\times \Z/p\Z$, as discussed in \cite[p.\ 166]{ranchin2016alternative}.  As discussed in \cite{coecke2011phase} this is one of the central features which separates Spekkens' qubit model from qubit stabilizers, whose phase group is $\Z/4\Z$.  This fact can be also observed graphically in terms of the stabilizer fragment of the ZX-calculus (in contrast to the presentation of Spekkens' qubit toy model) which also enjoys a complete axiomatization \cite{backensspek}.

By stating the interpretations of Spekkens' toy model in terms of the graphical calculus for Lagrangian relations alongside that of electrical circuits, we see the evident analogy between the phases in the ZX-calculus and the resistors, inductor, capacitors and voltage sources in electrical circuits.

\section{Further work}

There are several directions in which the work in this paper could be further explored. Since linear relations  can be defined over a principal ideal domain over a field, it is natural to ask if the work can be generalized to this setting. Also, we have not given a completeness result entirely in terms of the generators of $\Lag\Rel_k$. The proof of such would almost certainly involve mimicking the universality proofs of the qubit stabilizer/qutrit stabilizer/Spekkens' toy model \cite{backensstab,backensspek,qutrit} involving local equivalency/local complementation of graph states.  
If this were generalized to affine Lagrangian relations this would yield a proper completeness result for the odd-prime-dimensional qudit stabilizer ZX-calculus as a corollary. One could also potentially adapt this approach to characterize Lagrangian spans as described in \cite[p.\ 187]{fong2016algebra}, where the scalars are not all quotiented out.  Perhaps this would give a semantics for odd-prime-dimensional qudit stabilizer circuits on the nose.

This paper illuminates the deep connection between stabilizer circuits and electrical circuits.  Perhaps, this can be taken further by adding nonlinear generators.  For example, by doubling again, one could also consider discarding as a generator, as in \cite{disc}.  Does the discard in stabilizer quantum mechanics obey analagous equations to the ground in electrical circuits?

\nocite{coecke2008interacting}
\nocite{ihpub}
\nocite{niel}

\bibliographystyle{eptcs}

\bibliography{lagrel}

\end{document}